\documentclass[letterpaper,11pt]{article}

\usepackage[margin=1in]{geometry}
\usepackage{amsfonts,amsmath,amssymb,amsthm,graphicx}
\usepackage{mathtools}
\usepackage{mathrsfs}
\usepackage{ifthen}
\usepackage{changepage}
\usepackage{enumerate}
\usepackage{scalerel}
\usepackage{accents}
\usepackage{enumitem}
\usepackage{subcaption} 
\usepackage{float}      

\usepackage[utf8]{inputenc} 
\usepackage[T1]{fontenc} 
\usepackage{csquotes}
\usepackage{comment}
\usepackage{bbm}
\usepackage{array}
\usepackage{makecell}
\usepackage{booktabs}
\usepackage{xfrac}

\usepackage[suppress]{color-edits}  
\addauthor{wt}{purple}
\addauthor{ai}{red}

\usepackage{tcolorbox}
\DeclareRobustCommand{\mybox}[2][gray!15]{
\begin{tcolorbox}[  
        left=0pt,
        right=0pt,
        top=0pt,
        bottom=0pt,
        colback=#1,
        colframe=#1,
        enlarge left by=0mm,
        boxsep=10pt,
        arc=2pt,outer arc=2pt,
        ]
        #2
\end{tcolorbox}
}

\usepackage{thm-restate}
\usepackage{tikz}
\usepackage{pgfplots}
\pgfplotsset{compat=1.17}
\usetikzlibrary{patterns}
\usepgfplotslibrary{fillbetween}
\usetikzlibrary{intersections}
\usetikzlibrary{positioning,fit,calc,arrows.meta}
\usepackage{fix-cm}
\usepackage[ruled,vlined,linesnumbered]{algorithm2e}
\allowdisplaybreaks
\usetikzlibrary{arrows.meta}

\allowdisplaybreaks
\theoremstyle{plain}
\newtheorem{theorem}{Theorem}[section]
\newtheorem{lemma}[theorem]{Lemma}

\newtheorem{proposition}[theorem]{Proposition}
\newtheorem{corollary}[theorem]{Corollary}

\theoremstyle{plain}
\newtheorem{definition}{Definition}[section] 

\newtheorem{remark}[definition]{Remark}

\theoremstyle{plain}
\newtheorem{assumption}{Assumption}[section]

\usepackage[colorlinks=true, linkcolor=blue!70!black, citecolor=blue!70!black, urlcolor=blue, breaklinks=true]{hyperref}
\usepackage{cleveref}
\Crefname{algocf}{Algorithm}{Algorithms}
\crefname{claim}{claim}{claims}

\newcounter{relctr} 
\everydisplay\expandafter{\the\everydisplay\setcounter{relctr}{0}} 

\AtBeginDocument{} 

\newcommand{\squishlist}{
\begin{list}{{{\small{$\bullet$}}}}
{\setlength{\itemsep}{3pt}      \setlength{\parsep}{1pt}
\setlength{\topsep}{1pt}       \setlength{\partopsep}{0pt}
\setlength{\leftmargin}{1em} \setlength{\labelwidth}{1em}
\setlength{\labelsep}{0.5em} } }
\newcommand{\squishend}{  \end{list}}

\newcommand{\fea}{x}
\newcommand{\fearea}{X}
\newcommand{\feanum}{n}
\newcommand{\feaspace}{\mathcal{X}}
\newcommand{\labelspace}{\mathcal{Y}}
\newcommand{\datadist}{\mathcal{D}}
\newcommand{\feaprob}{\lambda}

\newcommand{\bayesprob}{q}

\newcommand{\labelrea}{Y}

\newcommand{\feasubset}{S}
\newcommand{\feasubsetprob}{\lambda}
\newcommand{\feasubsetlabel}{Y}
\newcommand{\feasubsetmean}{\mu}

\newcommand{\givenclassifier}{f}
\newcommand{\classifiernum}{k}

\newcommand{\cellnum}{m}
\newcommand{\partitiondom}{\succ}
\newcommand{\partitiondomeq}{\succeq}
\newcommand{\blackwelldom}{\succ_{\textsc{B}}}

\newcommand{\algo}{\mathcal{A}}
\newcommand{\detclassifierfam}{\classifierfam^{\textsc{det}}}
\newcommand{\partitiondompare}{\classifierfam_{\partitiondom}}
\newcommand{\partitiondomparedet}{\classifierfam_{\partitiondom}^{\textsc{det}}}

\newcommand{\dom}{\succ}

\newcommand{\predicrefine}{\textsc{PredictorRefine}}

\newcommand{\signalspace}{\Sigma}
\newcommand{\kernal}{T}
\newcommand{\signal}{s}
\newcommand{\signalrea}{S}
\newcommand{\predictionrea}{P}
\newcommand{\predictionvalue}{a}
\newcommand{\coarsenedclassifier}{\bar{\givenclassifier}}
\newcommand{\mapkernal}{H}

\newcommand{\signalvec}{\vv}
\newcommand{\signalvecmtx}{\mathbf{V}}

\newcommand{\signalbag}{\mathcal{V}}
\newcommand{\signalbagraw}{\signalbag^{\mathrm{raw}}}
\newcommand{\signalbagred}{\signalbag^{\mathrm{red}}}
\newcommand{\rayeq}{\sim_{\mathrm{ray}}}

\newcommand{\kernalmtx}{\mathbf{H}}

\newcommand{\partition}{\Pi}
\newcommand{\cell}{A}
\newcommand{\cellnew}{C}
\newcommand{\prediction}{p}
\newcommand{\totalcells}{N}

\newcommand{\mymatrix}{\mathbf{M}}
\newcommand{\weightedbayes}{\mathbf{y}}
\newcommand{\weightedpredic}{\mathbf{b}}

\newcommand{\rowspace}{\textsc{Row}}
\newcommand{\myspan}{\textsc{span}}
\newcommand{\classifier}{g}
\newcommand{\classifiernew}{h}
\newcommand{\CDF}{F}

\newcommand{\subsetindi}{\mathbf{1}}
\newcommand{\classifierfam}{\mathcal{G}}
\newcommand{\classifierfamdet}{\classifierfam^{\textsc{det}}}

\newcommand{\scalescalar}{\beta}

\newcommand{\omt}[1]{}

\newcommand{\cc}[1]{\ensuremath{\mathsf{#1}}} 

\newcommand{\prob}[2][]{\mathbb{P}\ifthenelse{\not\equal{}{#1}}{_{#1}}{}\!\left[{\def\givenn{\middle|}#2}\right]}
\newcommand{\expect}[2][]{\mathbb{E}\ifthenelse{\not\equal{}{#1}}{_{#1}}{}\!\left[{\def\givenn{\middle|}#2}\right]}
\newcommand{\indicator}[2][]{\mathbf 1\ifthenelse{\not\equal{}{#1}}{_{#1}}{}\!\left[{\def\givenn{\middle|}#2}\right]}

\newcommand{\R}{\mathbb{R}}

\newcommand{\vv}{\mathbf{v}}
\newcommand{\vu}{\mathbf{u}}

\newcommand{\vz}{\mathbf{z}}
\newcommand{\vzero}{\mathbf{0}}
\newcommand{\supp}{\cc{supp}}

\newcommand{\one}{\boldsymbol{1}}
\newcommand{\primed}{^{\dagger}}
\newcommand{\doubleprimed}{^{\ddagger}}

\newcommand{\reals}{\R}

\newcommand{\xhdr}[1]{\vspace{6pt}\noindent{\bf {#1.}}}
\usepackage{natbib}
\bibpunct{(}{)}{;}{a}{,}{,}

\begin{document}

\title{Refinement as a Service: Algorithmic Predictor Refinement}
\author{
Wei Tang\thanks{Chinese University of Hong Kong. Email: {\tt weitang@cuhk.edu.hk}}
\and
Hanrui Zhang\thanks{Chinese University of Hong Kong. Email: {\tt hanrui@cse.cuhk.edu.hk}}
}
\date{}

\maketitle

\begin{abstract}
    
Prediction aggregation aims to combine information from multiple predictors into a more informative one. 
We study this question in the setting of calibrated predictors, where each prediction must equal the conditional expectation of the quantity being predicted given the predictor's signal. 
Given several calibrated input predictors and the feature distribution, but not the underlying Bayes probabilities, we ask when one can construct refined calibrated predictors that preserve the information in the original predictors and cannot be further refined using the available information.

We formulate calibrated predictors as signaling schemes and define refinement through feature-independent garblings: a predictor refines another if its signal can simulate the other's signal. 
Constructibility is characterized through observable linear information: each signal corresponds to a vector over the feature space, and a new signal is constructible exactly when its vector lies in the linear span of the input signal vectors. 
Under this formulation, we establish a sharp algorithmic picture. 
For deterministic output predictors, bilateral refinement admits a polynomial-time algorithm based on a bipartite graph between the two input signal partitions, while refinement with an arbitrary number of input predictors is $\mathsf{NP}$-hard. 
In contrast, when randomized output predictors are allowed, we give a polynomial-time algorithm for any number of input predictors by decomposing constructible signal vectors into extreme rays of the associated polyhedral cone.\footnote{A conference version of this paper is accepted by The 40th Annual Conference on Neural Information Processing Systems (NeurIPS 2026).}

\end{abstract}




\newpage

\section{Introduction}

{\em Prediction aggregation} \wtedit{(or forecast aggregation)} is the process of combining multiple predictions into a single prediction in a principled way (see, e.g., \citealp{S-61,BG-69,GBR-07,RG-10}).  
The idea is that the aggregate prediction contains information inherent in all input predictions, and is therefore more ``informative'' than any single input prediction.  This very idea, instantiated in different ways, lies behind fundamental methodologies and tools in different branches of computer science / machine learning: In supervised learning, boosting techniques are used to turn weak learners into strong ones \citep{S-90,F-95,FHT-00}; in online learning, multiplicative weights algorithms achieve low regret against the best fixed expert \citep{LW-94,CL-06}; in crowdsourcing and computational social choice, voting and aggregation methods combine multiple individuals' opinions into a single, supposedly representative opinion, based on which social decisions can be made \citep{Retal-10,BCELP-16}; even in LLM architectures, mixture of experts (MoE) models show drastically improved efficiency and scalability while achieving similar, or even better performance compared to dense models \citep{JJNH-91,Detal-24,Lieatal-24}.  All this highlights the power of the clean idea underlying prediction aggregation, which motivates our study in this paper.

At the same time, the above also hints at the {\em immensity} of this very idea: It is deeply rooted in so many different areas, where it manifests in so many different technical forms, to the extent that any serious technical discussion of the idea must refrain from talking about the idea itself --- instead, one must begin by putting the idea into a specific, technical context.  The context of our investigation is {\em refinement of calibrated predictors}.  Below we explain the setup and the keywords (``predictors'', ``calibrated'', and ``refinement'') using a realistic example.

\xhdr{Example: click-through rate (CTR) prediction in online advertising}
A user visits a platform (e.g., Google or Meta), which triggers the display of an ad.  The platform must decide which ad to display from a set of candidates, and the decision depends on the predicted click-through rates (CTRs) of these ads, which are the predicted probabilities that the user would click on each of these ads.  The platform sets up a {\em predictor} for each candidate ad, which maps the user's features (region, language, OS, historical activities on the same platform, etc.) to a numerical prediction.  Typically, this predictor is a machine learning model trained on historical data.  Needless to say, the use of such predictors is omnipresent in all kinds of applications.

\aidelete{In online advertising, we want the predictors to be {\em calibrated}, meaning that conditioning on a particular prediction made by a predictor, the probability that the user would click on the ad is precisely the predicted value \citep{Dawid-82,FV-97}.  In other words, the prediction is always equal to the conditional expectation of the true CTR given the prediction.}
\aiedit{In online advertising, we want the predictors to be {\em calibrated}, meaning that, conditional on each signal generated by a predictor, the probability that the user would click on the ad equals the numerical prediction associated with that signal \citep{Dawid-82,FV-97}. 
We refer to this notion as {\em signal-level calibration}. It implies conventional calibration with respect to numerical predictions, but the converse need not hold when distinct signals are assigned the same numerical prediction.}\footnote{
    Note that this does not mean the predictor has to be perfectly accurate for every feature vector.  \aireplace{In particular, a calibrated predictor could always make a constant prediction equal to the expected CTR over the entire population.}{In particular, a calibrated predictor with a single signal could always make a constant prediction equal to the expected CTR over the entire population.}
}
This ensures that the ads are properly ranked, and the payments that the advertisers are charged are correctly computed \citep{MPZ-13,hetal-14,GoogleExpectedCTR}.  Calibration is also desired in many other scenarios, e.g., medical diagnosis and weather forecasting.

The need for {\em refinement} of predictors arises when two or more platforms hold different information about users.  The idea is the same as in prediction aggregation: to combine information held by different participants and improve the quality of the predictors used by all participants.  To make this concrete, suppose there are $3$ types of users that each represent $1/3$ of the entire population:  Type A, Type B, and Type C.  Two platforms, numbered 1 and 2, each hold a calibrated predictor ($\givenclassifier_1$ and $\givenclassifier_2$, respectively) for the CTR of a particular ad, and they wish to refine their respective predictors by incorporating information from the other platform's predictor.\footnote{
    Note that generally, it makes little sense to combine predictors for different ads, because they are not predicting the same quantity.
}
Further suppose $\givenclassifier_1$ and $\givenclassifier_2$ make calibrated predictions in the following ways:
\begin{itemize}
    \item $\givenclassifier_1$ predicts $0.3$ for Type A and Type B users, and $0.7$ for Type C users.
    \item $\givenclassifier_2$ predicts $0.1$ for Type A users, and $0.6$ for Type B and Type C users.
\end{itemize}
Then, because both predictors are calibrated, one may deduce the following about the true CTR: All Type A and Type B users on average have a CTR of $0.3$, and all Type A users on average have a CTR of $0.1$.  As a result, all Type B users on average have a CTR of $0.3 \times 2 - 0.1 = 0.5$.  Now, after exchanging information, both platforms know {\em everything} about the CTR of each type of users: Type A: $0.1$, Type B: $0.5$, Type C: $0.7$.  This immediately gives a new calibrated predictor, which refines both $\givenclassifier_1$ and $\givenclassifier_2$, because it makes ``finer'' predictions than both input predictors.  Here, the notion of finer predictions is crucial: It ensures that the new predictor {\em never performs worse} than the corresponding original one for any user type, and {\em sometimes performs strictly better}, for some user types.  As such, the new predictor achieves, in economic terms, a {\em Pareto improvement} over the original one.

In general, refinement in the above sense is nontrivial.  In fact, refinement may not be possible at all even if the input predictors contain very different information.  Consider, for example, $4$ types of users: A, B, C, and D.  $\givenclassifier_1$ predicts $0.3$ for Type A and Type B, and $0.7$ for Type C and Type D; $\givenclassifier_2$ predicts $0.4$ for Type A and Type C, and $0.6$ for Type B and Type D.  One may check that there is no way to refine either input predictor based on information contained in the other.  This leads us to the main questions addressed in the paper:

\begin{quote}
    Given a collection of input calibrated predictors, how do we tell whether it is possible to refine (one of) these predictors using information from other input predictors?  Taking a step further, how do we refine all input predictors to the maximum extent possible, such that no predictor can be further refined given all information available to us?
\end{quote}

\subsection{Setting up the Stage}

\xhdr{Abstraction of calibrated predictors as signaling schemes}
To faciliate our investigation and simplify the presentation, throughout the paper, we treat calibrated predictors in an abstract way, essentially as signaling schemes.\footnote{
    We will henceforth assume all predictors under discussion are calibrated, and omit the modifier ``calibrated'' whenever possible.
}
The idea is the following: A predictor essentially ``partitions'' the feature space and ``packs'' feature vectors into indistinguishable (as perceived by the predictor) subsets.  Within each indistinguishable subset, the predictor, being calibrated, must predict the conditional expectation of the quantity to be predicted (e.g., a predictor that predicts $0.3$ for features A and B, and $0.7$ for feature C induces a partition where A and B form an indistinguishable subset, and C forms another; the predictions, $0.3$ and $0.7$, are the respective conditional expectations within each subset).  In other words, the prediction made by a predictor on an indistinguishable subset is uniquely determined by what is contained in this subset.  As such, a predictor can be fully summarized by a signaling scheme, which, on each indistinguishable subset, produces a signal rather than a numerical prediction.  

\wtedit{
This viewpoint is consistent with the standard informational interpretation in forecast aggregation literature, where reported forecasts are typically treated as posterior beliefs under the forecasters' information. Our focus is different: we ask {\em when} and {\em how} the information contained in several calibrated predictors can be combined into a constructible refinement (see, e.g., \citealp{S-61,DF-83,C-89,ABS-18,GHHKSY-25} and \Cref{subsec:related} for more discussions).
}
More formally, we model each predictor $\givenclassifier$ as a (possibly randomized) mapping from the feature space to a discrete signal space without any structure (e.g., the predictor mentioned a few sentences ago now becomes something like $\{\{\text{A}, \text{B}\} \mapsto \text{Signal 1}, \{\text{C}\} \mapsto \text{Signal 2}\}$, with the numerical values of the predictions abstracted away in the technical framework. 
As we will see, this formulation better reveals the combinatorial nature of the problem being investigated.
\aiedit{This abstraction does not mean that numerical predictions are unavailable. We treat signals as the primitive objects for comparing the information carried by different predictors, while {\em conceptually} keeping their associated numerical predictions as observed quantities. 
Under signal-level calibration, these values are determined by the signal structure and the underlying data distribution. Although the latter is unknown to the aggregator, the observed input predictions allow us to compute calibrated numerical predictions for constructible refined signals.}

\xhdr{Information-theoretical formulation of ``refinement''}
Recall that our goal is to {\em refine} input predictors, which in particular means that, roughly speaking, the information contained in the refined predictor should {\em ``subsume''} that contained in the corresponding input predictor {\em ``as a subset''}.  This intuition is captured by the following simple and natural notion of refinement: Given two predictors $\givenclassifier$ and $\classifier$, if one could simulate the predictions (which are abstract signals) of $\givenclassifier$ based solely on the predictions of $\classifier$ in a blackbox way, then $\classifier$ refines $\givenclassifier$ --- or in other words, $\classifier$ (information-theoretically) dominates $\givenclassifier$.  For example, consider $f: \{\{\text{A}, \text{B}\} \mapsto \text{Signal 1}, \{\text{C}\} \mapsto \text{Signal 2}\}$ and $g: \{\{\text{A}\} \mapsto \text{Signal 3}, \{\text{B}\} \mapsto \text{Signal 4}, \{\text{C}\} \mapsto \text{Signal 5}\}$.  Given the predictions made by $\classifier$, one could map Signal 3 and Signal 4 into Signal 1, and Signal 5 into Signal 2, thereby recovering the predictions made by $\givenclassifier$ exactly.  As a result, $\classifier$ refines / dominates $\givenclassifier$.  Now we can restate our goal a bit more formally: Given $\classifiernum$ input predictors $\givenclassifier_1, \dots, \givenclassifier_\classifiernum$, we want to find ``constructible'' (elaborated below) predictors $\classifier_1, \dots, \classifier_\classifiernum$, such that (1) $\classifier_i$ dominates $\givenclassifier_i$, and (2) among all constructible predictors, no predictor $h \ne \classifier_i$ dominates $\classifier_i$, for each $i \in [k]$.  That is, we want to maximally refine each $\givenclassifier_i$ into $\classifier_i$ in the sense discussed above.

\xhdr{Constructible predictors and the bag-of-vectors representation}
One thing remains unclear in the above formulation: Exactly which predictors are constructible?  It turns out there is a natural definition for that too: A predictor is constructible, iff the conditional expectation corresponding to each possible prediction can be {\em uniquely determined} given the input predictors (see further discussion and examples below for concreteness).  This is a semantic definition, and we would like a more syntactic and operable version of it, which involves representing predictors as bags of vectors.  We will demonstrate the idea by means of examples.

Consider the two predictors in our first introductory example (in their abstract forms): $\givenclassifier_1: \{\{\text{A}, \text{B}\} \mapsto \text{Signal 1}, \{\text{C}\} \mapsto \text{Signal 2}\}$, $\givenclassifier_2: \{\{\text{A}\} \mapsto \text{Signal 3}, \{\text{B}, \text{C}\} \mapsto \text{Signal 4}\}$.  We already know that the conditional mean on all $3$ feature vectors can be uniquely determined from $\givenclassifier_1$ and $\givenclassifier_2$.  To formalize this fact, we write both input predictors as bags of vectors.  For $\givenclassifier_1$, we write $(1, 1, 0)$ for Signal 1, because the corresponding subset contains A (``Feature 1'') and B (``Feature 2''), but not C (``Feature 3''); similarly, we write $(0, 0, 1)$ for Signal 2.  Accordingly, $\givenclassifier_1$ is represented by the associated bag of vectors $\signalbag(\givenclassifier_1) = \{(1, 1, 0), (0, 0, 1)\}$, where each vector corresponds to one prediction, and the sum of all vectors is the all-one vector.  Similarly, $\givenclassifier_2$ is represented by $\signalbag(\givenclassifier_2) = \{(1, 0, 0), (0, 1, 1)\}$.  Note that a predictor, modulo permutations of the signals (which is immaterial), corresponds bijectively to its bag-of-vectors representation.

As before, assume that the $3$ features are equally frequent in the population (which is in fact without loss of generality).  Now, to see why the conditional expectations on all features (and in particular, on feature B) are uniquely determined, simply observe that they can be pinned down by solving the following (overdetermined but guaranteed to be consistent) linear system:
\begin{align*}
    (1, 1, 0) \cdot {\bf p} & = \|(1, 1, 0)\|_1 \cdot p_1 \\
    (0, 0, 1) \cdot {\bf p} & = \|(0, 0, 1)\|_1 \cdot p_2 \\ 
    (1, 0, 0) \cdot {\bf p} & = \|(1, 0, 0)\|_1 \cdot p_3 \\
    (0, 1, 1) \cdot {\bf p} & = \|(0, 1, 1)\|_1 \cdot p_4,
\end{align*}
where ${\bf p} = (p_A, p_B, p_C)^\top$ is the conditional expectations on individual features, and $p_1, \dots, p_4$ are the conditional expectations corresponding to the predictions (Signal 1, \dots, Signal 4).  The first two equations correspond to $\signalbag(\givenclassifier_1)$, and the last two correspond to $\signalbag(\givenclassifier_2)$.  $p_B$ in particular can be uniquely determined, because the vector $(0, 1, 0)$ is in the linear span of $\signalbag(\givenclassifier_1) \cup \signalbag(\givenclassifier_2)$.  Accordingly, the predictor $g: \{\{\text{A}\} \mapsto \text{Signal 5}, \{\{\text{B}\} \mapsto \text{Signal 6}, \{\text{C}\} \mapsto \text{Signal 7}\}$ is constructible, because all $3$ vectors corresponding to the $3$ predictions, $(1, 0, 0)$, $(0, 1, 0)$, and $(0, 0, 1)$, are in the span of $\signalbag(\givenclassifier_1) \cup \signalbag(\givenclassifier_2)$.  \aidelete{Note that in the above reasoning, the specific values of $p_1, \dots, p_4$ are entirely irrelevant, which is also why we abstract them away in our definition of predictors.}\aiedit{Whether these signal vectors lie in the span depends only on the input signal structures. The observed numerical values $p_1,\dots,p_4$ are then used to solve the linear system and compute the calibrated predictions attached to the refined signals.}

The bag-of-vectors representation can be naturally generalized to capture randomized predictors, where each vector is not necessarily $0$-$1$.  It is not hard to show that a prediction whose corresponding vector is ${\bf v}$ leads to a uniquely determined conditional expectation, iff $\bf v$ is in the span of the union of the bag-of-vectors representations of all input predictors.  In later sections of the paper, we directly define constructible predictors based on the above observation, in order to hide less relevant details and stay focused on the combinatorial nature of the problem.  We also remark that the bag-of-vectors representation also gives a more operable equivalent definition of refinement: $\classifier$ refines / dominates $\givenclassifier$ iff $\signalbag(\classifier) \supset \signalbag(\givenclassifier)$.

\subsection{Results and Techniques}

The above observations already give operable ways of {\em checking} whether a given group of predictors $\classifier_1, \dots, \classifier_\classifiernum$ are constructible given $\givenclassifier_1, \dots, \givenclassifier_\classifiernum$, as well as whether each $\classifier_i$ refines $\givenclassifier_i$.  However, our goal is to {\em find} constructible $\classifier_1, \dots, \classifier_\classifiernum$ that {\em maximally} refine $\givenclassifier_1, \dots, \givenclassifier_\classifiernum$, respectively.  As we will see, the latter task involves significantly more technical effort.

\xhdr{Refinement into deterministic predictors}
We first consider a natural restriction of the problem, where we further require the output predictors to be deterministic.  This makes sense in scenarios where randomized predictors are infeasible, e.g., when we need a stable prediction for each feature.  Our results establish a dichotomy between {\em bilateral} refinement and {\em multilateral} refinement:

\begin{theorem}[informal]
    Restricted to deterministic predictors:
    \begin{itemize}
        \item When there are only $2$ input predictors (corresponding to bilateral agreements to share information and refine each participant's predictor), there is a polynomial-time algorithm for predictor refinement.
        \item With an arbitrary number $\classifiernum$ of input predictors (corresponding to multilateral agreements to pool information held by all participants together, which is used to refine each participant's predictor), predictor refinement is $\mathsf{NP}$-hard.
    \end{itemize}
\end{theorem}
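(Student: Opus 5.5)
Working in the bag-of-vectors language, each input predictor $\givenclassifier_i$ is a partition $\partition_i$ of the feature space. A deterministic output $\classifier_i$ is a partition whose cell indicators all lie in $\myspan(\signalbag(\givenclassifier_1)\cup\cdots\cup\signalbag(\givenclassifier_\classifiernum))$. Dominating $\givenclassifier_i$ means $\classifier_i$ refines $\partition_i$ as a partition. The plan is to treat the two bullets separately: an explicit combinatorial characterization of the span when $\classifiernum=2$, and a reduction from a known $\mathsf{NP}$-complete partition problem for general $\classifiernum$.

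\textbf{Bilateral case.} Build the bipartite graph $G$ whose left vertices are the cells of $\partition_1$ and right vertices the cells of $\partition_2$, with an edge for every feature $x$ joining the cell of $\partition_1$ and the cell of $\partition_2$ that contain it. Thus features are edges, and a $0$-$1$ vector over features is a set of edges.

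The first key step is a characterization of the span. A vector $\vz$ lies in $\myspan(\signalbag(\givenclassifier_1)\cup\signalbag(\givenclassifier_2))$ iff there are potentials $a_u$ (left) and $b_w$ (right) with $z_x=a_u+b_w$ for each edge $x=(u,w)$. Equivalently, $\vz$ is orthogonal to every signed cycle vector, i.e.\ alternating sums around every cycle of $G$ vanish. Since the cycle space is spanned by fundamental cycles, membership is decidable in polynomial time.

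For a $0$-$1$ indicator $\subsetindi_S$ with $S$ a subset of edges, I would show the following within each connected component of $G$. Either $S$ contains all of the component's edges or none, or $S$ is ``bipartite-cut-like'': for some left set $L'$ and right set $R'$, $S$ consists of the edges inside $L'\times R'$, and no edges join the complementary sides. In the latter case the component effectively splits. This gives a canonical finest decomposition, which can be found by iteratively splitting components along such separations. I expect this to reduce to the $2$-edge-connected/bridge structure of $G$ or to a simple fixpoint, and it is the step I expect to be the main obstacle.

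The output is then the common refinement of $\partition_i$ with the finest constructible partition. I must check that this refinement is itself constructible and is maximal. Both should follow once the finest constructible partition is shown to be unique and to be closed under intersections with cells of $\partition_1$ and $\partition_2$, which holds because those intersections are again cut-like.

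\textbf{Multilateral hardness.} I would reduce from an $\mathsf{NP}$-complete problem such as exact cover or $3$-dimensional matching. The design uses gadget predictors $\givenclassifier_2,\dots,\givenclassifier_\classifiernum$ whose span contains a prescribed family of $0$-$1$ vectors. The goal is that the only $0$-$1$ vectors in the span that refine $\givenclassifier_1$ correspond to unions of chosen sets, so that $\givenclassifier_1$ admits a nontrivial refinement iff an exact cover exists. Each partition contributes only limited linear structure, so the hard part is controlling spurious $0$-$1$ vectors in the span. I would handle this by padding with private features so that the span equals precisely the intended lattice, and then verify both directions of the reduction.
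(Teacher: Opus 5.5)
\textbf{Bilateral case.} Your reformulation is right, and so is the potential characterization of the span ($z_x=a_u+b_w$ for the feature $x$ lying in cells $u\in\Pi_1$, $w\in\Pi_2$); it is essentially the tool the paper uses. The structural step built on it fails, however, and that step carries the theorem. Three of its claims are false.

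(a) The dichotomy ``all/none or cut-like'' does not hold. Take $\Pi_1=\{\{x_1\},\{x_2,x_3\},\{x_4\}\}$ and $\Pi_2=\{\{x_1,x_2\},\{x_3,x_4\}\}$, so the graph is a path. Then $S=\{x_1,x_3\}$ is observable, since $\mathbf 1_S=\mathbf 1_{\{x_1\}}+\mathbf 1_{\{x_3,x_4\}}-\mathbf 1_{\{x_4\}}$. Yet any $L'\times R'$ containing the edges $x_1,x_3$ also contains $x_2$.

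(b) There is no unique finest constructible partition. For the $4$-cycle $\Pi_1=\{\{x_1,x_4\},\{x_2,x_3\}\}$, $\Pi_2=\{\{x_1,x_2\},\{x_3,x_4\}\}$, the observable sets are exactly $\emptyset$, the four input cells, and $\{x_1,\dots,x_4\}$. So $\Pi_1$ and $\Pi_2$ are incomparable maximal constructible partitions.

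(c) Bridges and $2$-edge-connectivity are the wrong invariant. Glue two such $4$-cycles at one $\Pi_1$-cell: $\Pi_1=\{\{x_1,x_4,x_5,x_8\},\{x_2,x_3\},\{x_6,x_7\}\}$ and $\Pi_2=\{\{x_1,x_2\},\{x_3,x_4\},\{x_5,x_6\},\{x_7,x_8\}\}$. The graph is bridgeless, yet the first cell still splits, because $\mathbf 1_{\{x_1,x_4\}}=\mathbf 1_{\{x_1,x_2\}}+\mathbf 1_{\{x_3,x_4\}}-\mathbf 1_{\{x_2,x_3\}}$.

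The missing idea is to characterize only the observable subsets \emph{of a single cell} $A$, which is all a refinement of $\Pi_1$ needs, and to use \emph{vertex} connectivity. Let $K$ range over the connected components of $G-A$ and put $A_K=A\cap\bigcup_{B\in K\cap\Pi_2}B$. Each $A_K$ is observable, since $\mathbf 1_{A_K}=\sum_{B\in K\cap\Pi_2}\mathbf 1_B-\sum_{A'\in K\cap\Pi_1}\mathbf 1_{A'}$. Your potentials give the converse at once. If $S\subseteq A$ is observable, every edge of $G-A$ has $a_u+b_w=0$. Hence the potentials are $\pm c_K$ on each $K$, and $\mathbf 1_S\equiv a_A-c_K$ on $A_K$.

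Splitting every cell into its branches is \Cref{lem:det-k2-branch} and \Cref{alg:det-k2-finest}, and maximality follows directly. Equivalently, intersect $\Pi_i$ with the partition of features into blocks of $G$. That partition is constructible, because every cycle lies in one block, but it is not a ``finest'' constructible partition.

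\textbf{Multilateral hardness.} Here you give only the intended shape of a reduction, with no concrete instance. The one mechanism you name, padding so that the span is ``the intended lattice,'' does not address the difficulty you correctly flag. What must be controlled is the set of $0$-$1$ points of a \emph{real} subspace lying inside one cell of $f_1$, and these genuinely include non-cover solutions.

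The paper makes this explicit and then eliminates them by regularity. It reduces from \textsc{X3C3}, where every element lies in exactly three sets, so $d=m$. Take features $\mathcal X_1=\{v_j\}_{j\in[d]}$ and $\mathcal X_2=\{w_r\}_{r\in[m]}$, and let $f_1$ have partition $\{\mathcal X_1,\mathcal X_2\}$. For each set $C_j$, add the predictor $\{R_j,R_j^c\}$ with $R_j=\{v_j\}\cup\{w_r:u_r\in C_j\}$.

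Solving for coefficients in $\mathrm{span}\{\mathbf 1_{\mathcal X_1},\mathbf 1_{\mathcal X_2},\mathbf 1_{R_1},\dots,\mathbf 1_{R_d}\}$ gives two facts. First, $\mathcal X_2$ has no nontrivial observable subset. Second, $S\subseteq\mathcal X_1$ with indicator $z$ is observable iff $Xz=\beta\mathbf 1_m$ for some real $\beta$, not only $\beta=1$. The spurious solutions are those with $\beta\neq1$. Because rows and columns of $X$ sum to $3$, $\beta\in\{0,1,2,3\}$. The values $\beta\in\{0,3\}$ force $z\in\{\mathbf 0,\mathbf 1_d\}$, and $\beta=2$ makes $\mathbf 1_d-z$ an exact cover.

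You also need the finiteness argument that a dominating constructible predictor yields an undominated one, so that the decision question matches the search specification.
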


Technically, our algorithm for bilateral refinement reveals the curious, graph-theoretical nature of the problem.  The algorithm works by first building an undirected bipartite graph, where the two sides represent the two input predictors, $\givenclassifier_1$ and $\givenclassifier_2$.  Each $0$-$1$ vector in the bag-of-vectors representations of $\givenclassifier_1$ and $\givenclassifier_2$ corresponds to a vertex in the graph.  Two vertices are adjacent, iff their inner product is positive, i.e., iff the corresponding sets of features intersect.  For example, consider again $\givenclassifier_1$ and $\givenclassifier_2$ where $\signalbag(\givenclassifier_1) = \{(1, 1, 0), (0, 0, 1)\}$ and $\signalbag(\givenclassifier_2) = \{(1, 0, 0), (0, 1, 1)\}$.  The vertices corresponding to $(1, 1, 0) \in \signalbag(\givenclassifier_1)$ and $(1, 0, 0) \in \signalbag(\givenclassifier_2)$ are adjacent, as are those corresponding to $(1, 1, 0) \in \signalbag(\givenclassifier_1)$ and $(0, 1, 1) \in \signalbag(\givenclassifier_2)$.  The graph is bipartite because no two vectors in $\signalbag(\givenclassifier_1)$ or $\signalbag(\givenclassifier_2)$ can possibly overlap on a nonempty subset of the feature space (which would lead to an ill-defined predictor).

Given the bipartite graph, which can be built in polynomial time, the algorithm then identifies all bottleneck vertices in all connected components of the graph --- a vertex is a bottleneck vertex, iff the connected component containing the vertex breaks apart when this vertex is removed.  These vertices are crucial because, as we prove, $\givenclassifier_1$ or $\givenclassifier_2$ can be refined iff there exists a bottleneck vertex in the corresponding graph.  Given all the bottleneck vertices, the algorithm then ``blows up'' each of them, meaning that each bottleneck vertex is split into two new vertices, whose sum (in terms of the corresponding $0$-$1$ vectors) is the original bottleneck vertex.  The two new vertices inherit disjoint subsets of the edges connected to the original vertex, and thus we can ensure that the connected component containing the original bottleneck vertex breaks apart into two new connected components, containing the two new vertices respectively.  As an overall result, the algorithm essentially breaks the original graph into its bi-connected components, which is still a bipartite graph.  This new graph corresponds to two new predictors $\classifier_1$ and $\classifier_2$, which are provably constructible, and each maximally refine the corresponding input predictor.

The hardness result for multilateral refinement, on the other hand, is based on the following intuition: Checking whether a particular target vector can be refined (i.e., split into two vectors in the linear span of the input bags of vectors) is closely connected to exact covering, in the sense that we need to find a linear combination of input vectors, with coefficients $1$ and $-1$, such that the irrelevant parts of these vectors cancel out exactly, and the remaining part is ``contained'' (in terms of locations of $1$'s) in the target vector.  Indeed, we establish the hardness through a reduction from a variant of the exact cover problem, which is known to be $\mathsf{NP}$-hard.

\xhdr{Refinement into randomized predictors}
We then investigate the original, unrestricted problem, where output predictors are allowed to be randomized.  In this regime, we present a clean positive result:

\begin{theorem}[informal]
    When randomization is allowed, there is a polynomial-time algorithm for predictor refinement with any number of input predictors.
\end{theorem}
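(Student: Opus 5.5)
Throughout, we work with the bag-of-vectors representation. Each input predictor $\givenclassifier_i$ is a bag $\signalbag(\givenclassifier_i)$ of nonnegative vectors in $\R^{\feanum}$ summing to $\one$. Let $L=\myspan\bigl(\bigcup_i \signalbag(\givenclassifier_i)\bigr)$, and let $K = L\cap \R^{\feanum}_{\ge 0}$ be the cone of constructible nonnegative signal vectors. A randomized output predictor $\classifier_i$ is a bag of vectors in $K$ summing to $\one$. It refines $\givenclassifier_i$ iff $\signalbag(\givenclassifier_i)$ is obtainable from it by garbling, which amounts to each $\vv\in\signalbag(\givenclassifier_i)$ being a sum of vectors in $\signalbag(\classifier_i)$, possibly after splitting those vectors into scaled pieces. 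Rays can always be split further, so the maximally refined objects are bags whose vectors are all multiples of extreme rays of $K$.

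\textbf{Step 1: characterize maximality.} I would first show that a constructible bag cannot be strictly refined iff every vector in it lies on an extreme ray of $K$, with vectors on the same ray merged. For one direction, any non-extreme vector $\vv\in K$ can be written as $\vv=\vu+\vz$ with $\vu,\vz\in K$ not parallel, which gives a strict refinement. For the other, a vector on an extreme ray cannot be written as a sum of non-parallel vectors in $K$. Any further refinement would only split rays into parallel pieces, and these are equivalent under garbling since $\rayeq$-classes can be merged back. Hence the maximal refinement of $\givenclassifier_i$ takes each $\vv\in\signalbag(\givenclassifier_i)$, decomposes it as a nonnegative combination of extreme rays of $K$, and merges parallel pieces. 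A subtlety is that different decompositions are not comparable. I would argue that \emph{any} full decomposition into extreme rays yields a maximal element, since maximality, not uniqueness, is what is required.

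\textbf{Step 2: polynomial-time decomposition.} Since $\vv\in K$, it suffices to find, in polynomial time, a conic decomposition of $\vv$ into at most $\dim L$ extreme rays of $K=\{\vz\in L:\vz\ge 0\}$. I would use a Carathéodory-style peeling procedure. Take the face $F$ of $K$ containing $\vv$ in its relative interior; its support pattern is $\supp(\vv)$, so $F=\{\vz\in L:\vz\ge0,\ \vz_j=0\ \forall j\notin\supp(\vv)\}$. If $\dim F=1$, then $\vv$ is extreme. Otherwise, pick any direction $\vz\in F$'s linear hull not parallel to $\vv$; this is a linear-algebra computation. Move from $\vv$ along $\pm\vz$ until a coordinate hits zero. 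This gives $\vv = \alpha \vv^+ + \beta \vv^-$ with $\vv^\pm\in K$ of strictly smaller support. Recursing on both pieces naively could blow up, so I would instead recurse only on one piece, moving along $\vz$ to reach a lower-dimensional face point $\vw$, and choose the other endpoint to be an extreme ray found by repeatedly shrinking support. More concretely, I would find one extreme ray $\vu$ in $F$ by shrinking support at most $\feanum$ times, then subtract the maximal multiple $t\vu$ from $\vv$ such that $\vv-t\vu\ge0$. This zeroes a coordinate, lowers the dimension of the minimal face, and the loop ends after at most $\feanum$ rounds. Every step solves a linear system over rationals with polynomially bounded bit size, by standard polyhedral arguments.

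\textbf{Step 3: assemble and verify.} Run Step 2 on every vector of every $\signalbag(\givenclassifier_i)$ and merge parallel outputs. Then check three things: the resulting $\classifier_i$ sums to $\one$, which is immediate; each of its vectors lies in $L$, so it is constructible; and it refines $\givenclassifier_i$ via the garbling that maps each ray piece back to its originating $\vv$. The main obstacle is Step 1. In the presence of merging of parallel rays across different $\vv$'s, I need to show that the merged bag is still unrefinable, and also that no constructible $\classifiernew$ strictly dominates it. The latter I would handle by showing that domination forces each extreme-ray vector to be a sum of parallel constructible vectors, hence to be equivalent.
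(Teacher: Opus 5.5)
Your proposal is correct, and its overall architecture is the paper's. Your Step~1 lemma is \Cref{lem:extreme-bag-undominated}; your phrase ``domination forces each extreme-ray vector to be a sum of parallel constructible vectors'' is exactly the row-stochastic argument used there. Your peel-off loop in Step~2 is \Cref{alg:extreme-ray-decompose}.

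The genuine difference is how you find a support-minimal vector inside a support $S$. The paper runs \Cref{alg:find-extreme-signal}: it deletes one index at a time and tests the LP $\textsc{Feasible}(S_{\mathrm{cur}}\setminus\{i\})$, relying on \Cref{lem:support-minimal-extreme-ray}. You instead use that the minimal face of $K=L\cap\mathbb{R}^n_{\ge 0}$ containing $\mathbf{v}$ spans $L_S=\{\mathbf{z}\in L:\mathbf{z}_j=0\ \forall j\notin S\}$, where $S=\mathrm{supp}(\mathbf{v})$. Hence $\mathbf{v}$ is extreme iff $\dim L_S=1$. Otherwise, any $\mathbf{z}\in L_S$ not parallel to $\mathbf{v}$ lets you move, in whichever direction decreases some coordinate, until a coordinate in $S$ vanishes; you never leave $K$ and never reach $\mathbf{0}$.

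What your route buys:
\begin{itemize}
\item It shows LPs are unnecessary: extremality in this particular cone is a rank condition, so the whole algorithm runs on Gaussian elimination. To keep bit sizes clean, recompute each extreme $\mathbf{u}$ as a basis vector of the one-dimensional $L_T$, where $T$ is its final support.
\item The face-dimension view shows each peel strictly lowers $\dim L_{\mathrm{supp}(\mathbf{r})}$. This bounds the number of pieces per input vector by $\dim L$ rather than $n$.
\end{itemize}
The paper's LP route, in exchange, is more black-box and needs no face-lattice reasoning.

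Two things to tighten. First, your Step~1 claim that a single non-parallel split already yields a strict refinement is not argued, but you do not need it. Second, you do need the strong form of the final lemma, as the paper states it: if $h\succeq g$ and every vector in $g$'s reduced bag spans an extreme ray of $K$, then $\mathcal{V}(h)=\mathcal{V}(g)$ exactly, not merely ``equivalent.'' This follows from both bags being reduced and the garbling matrix being row-stochastic. Applied with $h=f_i$, it gives $f_i\not\succeq g_i$, and hence $g_i\succ f_i$, whenever the decomposition changed the bag. When the bag did not change, the same lemma shows $f_i$ is undominated, so returning $f_i$ unchanged is exactly what the specification requires.
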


The algorithm has a slight algebraic flavor (since it operates with general real vectors beyond $0$-$1$), but its core is still combinatorial.  First, let us consider the following easier task, which helps build intuition, and also serves as a subroutine in our complete algorithm: Given a target vector from a collection of $m$ input vectors, check whether the target vector can be split into two nonnegative vectors that are both in the span of the input vectors, where neither of the two vectors can be a scaled version of the target vector.  In fact, as we argue, this is equivalent to checking whether there is a nonnegative and nonzero vector in the span of the input vectors, whose support is a {\em strict} subset of that of the target vector --- below we call such a vector a {\em refiner}.  This is possible in polynomial time through repeatedly solving a polynomial number of LPs, where we enumerate one coordinate to be dropped from the support of the target vector and check feasibility by solving an LP.

In fact, one can strengthen the above procedure such that it finds a refiner whose support is minimal, i.e., there exists no refiner whose support is a strict subset of the refiner that we find.  To see why this is possible, observe that one could simply run the procedure above iteratively, where in each iteration, we use the refiner found in the previous iteration as the target vector in the current iteration.  This stops when we find a refiner that does not have its own refiner, which means it must have a minimal support.  Note that following the chain of reasoning above, such a minimal refiner cannot be split into two essentially different nonnegative vectors both in the span of the input vectors.

With the more powerful subroutine, we are ready to construct the complete algorithm for predictor refinement.  A straightforward idea is to apply the subroutine above to each vector in the bag-of-vectors representation of each input predictor.  If it successfully finds a refiner, then we can use it to split the target vector into two new vectors, thus refining an input predictor; if no refiner is found, then it must be the case that the target vector cannot be further refined.  We stop when no vector in any bag of vectors admits a refiner, at which point we know the current predictors corresponding to these bags of vectors already maximally refine the input predictors.  The key issue here is that we need a way to bound the total number of splits: When we split a vector, it could be the case that one or both of the new vectors created by the split still admit a refiner, and can therefore be further split.  This is precisely why we want to make sure that whenever we split a vector, we do so using a minimal refiner as found by the subroutine described above.  As we argue, this ensures that the two new vectors --- referred to as the refiner and the residue --- are both ``simpler'' than the target vector being split.  In particular, the refiner cannot be further split, and the residue has a strictly smaller support than the target vector.  So, to bound the total number of splits, we consider the following potential function: the sum of the size of the support of every vector that {\em might} still be split.  The potential is always integral, nonnegative, and polynomial in the relevant parameters.  After each split, it decreases by at least one, which immediately implies that the total number of splits is polynomial.

\xhdr{Application: refinement as a service}
\wtedit{Our algorithmic results can be used directly when participants share their predictors and run
one of our refinement algorithms. They also suggest a service-oriented interpretation, motivated by the growing use of machine-learning services and prediction APIs \citep{TramerEtAl2016,Bommasaniwtal-21,GoogleVertexAICustomPrediction}.
A platform may hold a highly optimized general-purpose predictor, while an agent may hold a specialized predictor for a particular domain. The platform can run our algorithm to incorporate all constructible information from its predictor into the agent's specialized predictor, producing a refined predictor that is more informative while preserving the agent's domain-specific expertise. Conversely, upon agreement, the platform can refine its
own predictor using the agent's information, or add the agent's predictor to its available bag of signal vectors. In this sense, refinement can be viewed as a service for improving calibrated predictors without requiring access to the underlying Bayes probabilities.}

\subsection{Further Related Work}
\label{subsec:related}

Prediction aggregation, or forecast aggregation, and closely related formulations have been studied across several communities, including machine learning, statistics, economics, and theoretical computer science.
Early foundational work includes opinion pooling and forecast combination \citep{S-61,BG-69}, as well as classical reviews of forecast combination \citep{C-89}.
The broader literature has since developed along different methodological directions.

In machine learning, aggregation is often studied through ensemble methods, where multiple predictors/classifiers are combined to improve predictive performance. 
Classical examples include stacking, bagging, boosting, and random forests \citep{W-92,B-96,S-90,F-95,FS-97,FHT-00,B-01,D-00}. 
A related online-learning perspective studies prediction with expert advice, where multiplicative-weights-type algorithms combine experts' predictions while achieving low regret relative to the best fixed expert
\citep{LW-94,CL-06}.

In statistics, forecast aggregation has been studied both axiomatically and probabilistically.
The axiomatic literature imposes desiderata on aggregation rules, such as unanimity preservation and variants of independence, and characterizes rules such as linear or externally Bayesian pooling under corresponding assumptions \citep{AW-80,G-84,DL-16}.
The probabilistic and Bayesian literature instead models how
forecasts are generated from underlying signals and then derives aggregation rules by Bayesian updating or parametric estimation; see, for example, \citet{SBFMTU-14,FCK-15,EPSU-16,SPU-16,RG-10}.

A common feature of much of this literature is that the reported forecasts are interpreted as probabilistic beliefs induced by the forecasters' information \citep{B-82, GBR-07, SPU-16,ABS-18,GHHKSY-25}. Under this interpretation, a forecast is calibrated with respect to the forecaster's signal: it represents the posterior mean of the target conditional on the information available to that forecaster. Our focus on calibrated predictors is therefore aligned with this standard informational view of forecasts.

In economics and theoretical computer science, a recent line of work takes a robust
perspective and asks for aggregators that perform well across large classes of possible
information structures, especially when the correlation structure among experts' signals is
unknown or misspecified \citep{ABS-18,DIL-21,LR-22,NR-22,KWW-24,GHHKSY-25,GK-25,FMNW-25,CPT-26}.
A complementary strand leverages second-order or higher-order information, such as agents' beliefs about others' answers, to improve aggregation in finite populations \citep{P-04,PSM-17,PS-19,WLC-21,WMH-22,PCK-24,APSTX-25}.

Our work differs from these lines in both its objective and its information constraint. 
Rather than designing a rule that maps several reported forecasts into a single aggregate forecast, or evaluating such a rule under a loss or regret criterion, we study when and how calibrated predictors can be refined using only the observable information contained in the input predictors, without access to the underlying Bayes probabilities. This leads to a distinct constructibility problem: the goal is to move from numerical aggregation to information refinement, and to characterize and compute maximally informative constructible refinements under an information-theoretic dominance relation.
\wtedit{Another closely related work by \citet{TZ-26} studies aggregation of calibrated numerical forecasts using an analogous observable linear space and nonnegative cone. 
However, their refinement relation compares posterior prediction distributions through the Blackwell order \citep{B-53} for the binary outcome \(Y\). 
Our \Cref{defn:partition dom} instead requires a feature-independent garbling to reproduce the signal distribution conditional on every feature \(x\in\mathcal X\).}  Accordingly, our notion of refinement is typically stronger than theirs, in the sense that if $f$ refines $g$ under our definition, then $f$ must also dominate $g$ under the Blackwell order adopted by \citet{TZ-26}; the converse is generally not true.  The different notions also lead to fundamentally different technical problems.

\section{Preliminary}
\label{sec:prelim}

\newcommand{\predicrefinelink}{\hyperref[prob:predictor-refine]{\predicrefine}}

We begin by introducing the basic notation for the {\em predictor refinement} problem in the binary classification setting.
Let $\feaspace$ be a finite feature space,
and let $\labelspace = \{0, 1\}$ be the binary label space.
The data are generated according to a joint distribution $\datadist \in \Delta(\feaspace\times\labelspace)$.

A possibly randomized predictor $\givenclassifier = (\signalspace_\givenclassifier, \kernal_\givenclassifier, \prediction_\givenclassifier)$ consists of a signal space $\signalspace_\givenclassifier$, a stochastic kernel $\kernal_\givenclassifier: \feaspace \to \Delta(\signalspace_\givenclassifier)$, and a prediction map $\prediction_\givenclassifier:\signalspace_\givenclassifier\to[0,1]$. 
The kernel $\kernal_\givenclassifier(\cdot\mid \fea)$ specifies, for each feature $\fea$, a distribution over signals, with $\kernal_\givenclassifier(\signal\mid \fea)$ denoting the conditional probability of realizing signal $\signal$ given the feature realization $\fearea=\fea$; the prediction map then assigns each realized signal $\signal\in\signalspace_\givenclassifier$ a numerical prediction $\prediction_\givenclassifier(\signal)\in[0,1]$.
Given a feature realization $\fearea=\fea$, the predictor first draws a signal
$\signal \sim \kernal_\givenclassifier(\cdot \mid \fea)$ and then assigns a prediction $\prediction_\givenclassifier(\signal)$ to this signal.
\aidelete{Together with the data distribution $\datadist$, the predictor $\givenclassifier$ induces a joint distribution
over predictions and labels, denoted by
$\Gamma_\givenclassifier \in \Delta([0,1]\times \labelspace)$.
Let $\CDF_\givenclassifier$ denote the CDF of the induced marginal distribution of
the predictions $\prediction$.
Throughout this work, we focus on perfectly calibrated predictors. 
Formally, a predictor $\givenclassifier$ is perfectly calibrated if the following condition holds:
\begin{align*}
    \expect[(\prediction,\labelrea)\sim \Gamma_\givenclassifier]{\labelrea \mid \prediction}
    = \prediction,
    \quad
    \text{for every } \prediction \in \supp(\CDF_\givenclassifier)~.
\end{align*}
As we can see, under perfect calibration w.r.t.\ the data distribution $\datadist$, the prediction assigned to each signal is uniquely determined once the signal space $\signalspace_\givenclassifier$ and kernel $\kernal_\givenclassifier$ are fixed. 
Consequently, we often omit the prediction map and write $\givenclassifier=(\signalspace_\givenclassifier,\kernal_\givenclassifier)$.
A predictor $\givenclassifier$ is deterministic if $\kernal_\givenclassifier(\cdot\mid \fea)$ is a point mass for every feature $\fea\in\feaspace$.}
\aiedit{Let $\signalrea_\givenclassifier$ denote the random signal generated by predictor $\givenclassifier$, and let $\predictionrea_\givenclassifier=\prediction_\givenclassifier(\signalrea_\givenclassifier)$ denote its numerical prediction. Throughout this paper, we focus on perfectly calibrated predictors at the signal level. Formally, a predictor $\givenclassifier$ is {\em signal-level calibrated} with respect to the data distribution $\datadist$ if
\begin{align}
    \expect{\labelrea\mid \signalrea_\givenclassifier=\signal}
    &=\prediction_\givenclassifier(\signal)
    \quad\text{for every }\signal\in\signalspace_\givenclassifier
    \text{ with }\prob{\signalrea_\givenclassifier=\signal}>0~.
    \label{eq:signal-level-calibration}
\end{align}
Under signal-level calibration, the numerical prediction associated with each positive-probability signal is uniquely determined once the data distribution $\datadist$, signal space $\signalspace_\givenclassifier$, and signal kernel $\kernal_\givenclassifier$ are fixed. Consequently, when discussing the information structure of a predictor, we often omit its prediction map and write $\givenclassifier=(\signalspace_\givenclassifier,\kernal_\givenclassifier)$. This is only a notational simplification: the numerical predictions associated with the input signals are available to the aggregator, whereas the underlying Bayes probabilities are not. A predictor $\givenclassifier$ is deterministic if $\kernal_\givenclassifier(\cdot\mid \fea)$ is a point mass for every feature $\fea\in\feaspace$.}

\begin{remark}
\aiedit{We note that the condition in Eqn.~\eqref{eq:signal-level-calibration} implies conventional {\em prediction-level calibration}, under which the expected label conditional on a numerical prediction equals that prediction. 
Indeed, by the law of iterated expectations,
\begin{align*}
    \expect{\labelrea\mid\predictionrea_\givenclassifier}
    =\expect{\expect{\labelrea\mid\signalrea_\givenclassifier}\mid\predictionrea_\givenclassifier}
    =\expect{\prediction_\givenclassifier(\signalrea_\givenclassifier)\mid\predictionrea_\givenclassifier}
    =\predictionrea_\givenclassifier~.
\end{align*}
The converse need not hold when distinct signals are assigned the same numerical prediction. For example, consider two equally likely signals with conditional label means $0$ and $1$, respectively, and suppose that both signals are assigned the numerical prediction $1/2$. The resulting predictor satisfies prediction-level calibration, because it always predicts the population mean $1/2$, but neither signal is individually calibrated.}
\end{remark}

We formulate predictors as signal schemes, rather than simply as maps from features to numerical predictions, because calibrated numerical values alone may collapse signals that are informationally distinct. Since refinement is meant to compare the information carried by predictors, we keep realized signals as the primitive objects and recover their numerical values only through calibration.

For example, suppose two features \(\fea_1\) and \(\fea_2\) both have calibrated value \(1/2\). Consider one predictor that uses two distinct signals, one for \(\fea_1\) and one
for \(\fea_2\), and another predictor that pools \(\{\fea_1,\fea_2\}\) into a single signal. Both
predictors always output the same numerical prediction, \(1/2\). However, they are not
informationally equivalent: the first predictor can simulate the second by forgetting which
signal was realized, while the second cannot recover the distinction between \(\fea_1\) and
\(\fea_2\). Thus, representing predictors only by their numerical predictions would identify
these two predictors, whereas the signal-based representation correctly distinguishes their
refinement content.
\aiedit{This example further illustrates why our refinement analysis treats signals as primitive objects: even when two distinct signals have the same calibrated numerical prediction, distinguishing them may convey additional information about the underlying feature.}

With this signal-based representation in place, we next define the dominance relation used to compare predictors.

\xhdr{Informational dominance} 
We now define the information-theoretic dominance notion, which we refer to as the informational dominance, used throughout the paper. 
\begin{definition}[Informational dominance]
\label{defn:partition dom}
Let $\givenclassifier = (\signalspace_\givenclassifier, \kernal_\givenclassifier), \classifier = (\signalspace_\classifier, \kernal_\classifier)$
be two (possibly) randomized predictors. 
We say that $\givenclassifier$ informationally-dominates $\classifier$, denoted by $\givenclassifier \partitiondom g$,
if and only if the
following two conditions hold:
\begin{itemize}
    \item 
    there exists a stochastic kernel 
    $\mapkernal : \signalspace_\givenclassifier \to \Delta(\signalspace_\classifier)$
    such that, for every feature $\fea \in \feaspace$,
    we have $\kernal_\classifier(\cdot \mid \fea)
    =
    \sum\nolimits_{\signal \in \signalspace_\givenclassifier} \mapkernal(\cdot \mid \signal)\, \kernal_\givenclassifier(s \mid \fea)$.
    \item 
    there does not exist any stochastic kernel $\mapkernal:\signalspace_\classifier\to \Delta(\signalspace_\givenclassifier)$
    such that, for every feature $\fea\in \feaspace$,
    we have $\kernal_\givenclassifier(\cdot\mid \fea)=\sum_{\signal\in \signalspace_\classifier} \mapkernal(\cdot\mid \signal)\, \kernal_\classifier(\signal\mid \fea)$.
\end{itemize}
Equivalently, conditional on $\fearea=\fea$, one may first draw the hidden signal $\signal_\givenclassifier$ from
$\kernal_\givenclassifier(\cdot \mid \fea)$, and then pass $\signal_\givenclassifier$ through the feature-independent noisy channel $\mapkernal$ to obtain a new signal $\signal_\classifier$; the resulting conditional law of $\signal_\classifier$ is exactly $\kernal_\classifier(\cdot \mid \fea)$.
\end{definition}

\wtedit{Intuitively, the dominance relation \(\givenclassifier \partitiondom \classifier\) means that the signal generated by predictor \(\givenclassifier\) contains all the information needed to reproduce the signal generated by predictor \(\classifier\). 
Indeed, after observing the signal \(\signal_\givenclassifier\), one can apply the same feature-independent post-processing rule \(\kernal\), regardless of the underlying feature \(\fea\), to simulate a signal with exactly the same conditional distribution as \(\givenclassifier\)'s signal. Thus, replacing \(\givenclassifier\) by \(\classifier\) does not lose any information relevant to the signal structure of \(\classifier\). 
The second condition makes the relation strict: predictor \(\classifier\) cannot, in turn, simulate the predictor \(\givenclassifier\) through any feature-independent post-processing rule.}


\xhdr{The predictor-refinement problem}
Fix the feature marginal distribution $\feaprob\in\Delta(\feaspace)$ induced by the data distribution $\datadist$,
and fix $\classifiernum$ predictors $(\givenclassifier_j)_{j\in[\classifiernum]}$.
A predictor $\classifier$ is said to be constructible from $\feaprob$ and $(\givenclassifier_j)_{j\in[\classifiernum]}$  
if it can be determined solely from this information; that is, if there exists a mapping $\algo$ such that $\classifier = \algo(\feaprob, (\givenclassifier_j)_{j\in[\classifiernum]})$.
Equivalently, $\classifier$ can be computed using only the information contained in $\feaprob$ and $(\givenclassifier_j)_{j\in[\classifiernum]}$, without access to the unknown Bayes probabilities $\{\prob{\labelrea=1\mid \fea}\}_{x\in\feaspace}$.
We write $\classifierfam$ for the set of all such constructible predictors. When there is no ambiguity, we suppress the dependence of $\classifierfam$ on $\feaprob$ and $(\givenclassifier_j)_{j\in[\classifiernum]}$ in the notation.
We next define the subset of constructible predictors that are undominated under our informational dominance.
\begin{definition}[Undominated predictors]
Fix a feature distribution $\feaprob = (\feaprob_i)$ and fix  $\classifiernum$ predictors $(\givenclassifier_j)_{j\in[\classifiernum]}$.
Let $\partitiondompare \subseteq \classifierfam$ be the set of undominated constructible predictors:
\begin{align*}
    \partitiondompare
    \triangleq
    \{\classifier\in\classifierfam:\ \text{there does not exist }\classifier\primed \in \classifierfam 
    \text{ such that }\classifier\primed \partitiondom \classifier\}~.
\end{align*}
We also denote by $\detclassifierfam\subset \classifierfam$ the set of all constructible deterministic predictors, and let
\begin{align*}
    \partitiondomparedet \triangleq
    \{\classifier\in\detclassifierfam:\ \text{there does not exist }\classifier\primed \in \detclassifierfam 
    \text{ such that }\classifier\primed \partitiondom \classifier\}~.
\end{align*}
\end{definition}
In other words, the set $\partitiondompare$ (resp.\ $\partitiondomparedet$) consists of those constructible  predictors that are not dominated, within the class $\classifierfam$ (resp.\ $\detclassifierfam$).
We are now ready to formalize the predictor refinement problem.

\phantomsection
\label{prob:predictor-refine}
\mybox{
$\predicrefine$:\\
\textbf{Input:} a feature distribution $\feaprob=(\feaprob_i)_{i\in[\feanum]}$ and $\classifiernum$ predictors
$(\givenclassifier_j)_{j\in[\classifiernum]}$. \\
\textbf{Question:} output a collection of predictors $(\classifier_j)_{j\in[\classifiernum]}$ such that for every $j\in[\classifiernum]$, 
\begin{itemize}
    \item 
    if there exists a predictor $\classifier\in \partitiondompare$ such that $\classifier \partitiondom \givenclassifier_j$, then 
    $\classifier_j \in \partitiondompare$ and $\classifier_j \partitiondom \givenclassifier_j$;
    \item 
    otherwise $\classifier_j = \givenclassifier_j$.
\end{itemize}
}

\aiedit{Throughout the paper, each input predictor is assumed to satisfy signal-level calibration. The aggregator observes the input signal kernels together with their associated calibrated numerical predictions, but does not observe the feature-level Bayes probabilities. The signal kernels determine which refinements are informationally constructible, while the observed numerical predictions are used to recover the calibrated values associated with the refined signals.}


Equivalently, the problem 
{\predicrefinelink}
asks, for each initial predictor $\givenclassifier_i$, to return an undominated strict improvement whenever such an improvement exists, and to return the original predictor $\givenclassifier_i$ unchanged otherwise. Importantly, the aggregator observes only the feature distribution $\feaprob$ and the initial predictors $(\givenclassifier_j)_{j\in[\classifiernum]}$, and does not observe the underlying Bayes probabilities.

\xhdr{Additional notations}
We write the feature space as $\feaspace = \{\fea_i\}_{i\in[\feanum]}$, and let $\feaprob = (\feaprob_i)_{i\in[\feanum]}$ be the feature distribution  where $\prob{\fearea = \fea_i}=\feaprob_i$ for $i\in[\feanum]$ denotes the probability mass of feature $\fearea = \fea_i$.
For each feature $\fea_i$, we denote its corresponding Bayes probability for the label $\labelrea = 1$ as $\bayesprob_i \triangleq \prob{\labelrea=1\mid \fearea = \fea_i}\in[0,1]$.
\wtedit{For any nonzero vector $\signalvec\in\R_+^{\feanum}$, we slightly abuse notation and define
\begin{align*}
    \feasubsetprob(\signalvec)
    \triangleq
    \sum\nolimits_{i\in[\feanum]}\feaprob_i \signalvec_i,
    \feasubsetlabel(\signalvec)
    \triangleq
    \sum\nolimits_{i\in[\feanum]}\feaprob_i \bayesprob_i \signalvec_i,
    \feasubsetmean(\signalvec)
    \triangleq
    \frac{\feasubsetlabel(\signalvec)}{\feasubsetprob(\signalvec)}~.
\end{align*}
For some nonempty $\feasubset\subseteq[\feanum]$, we also write $\subsetindi_{\feasubset}$ as the set-indicator vector.}

\section{Observable Linear Information}

In this section, we recast the signal-kernel representation for the predictors from \Cref{sec:prelim} into an equivalent linear form (see \Cref{subsec:bag-of-vectors}), which gives rise to
a formal definition of constructible predictors from observable information $(\feaprob, (\givenclassifier_j)_{j\in[\classifiernum]})$ (see \Cref{subsec:constructible and linear-infor}).



\subsection{Representing Predictors via Bag-of-Vectors}
\label{subsec:bag-of-vectors}
In this subsection, we reformulate the predictors and the informational dominance via the bag-of-vectors. 
Throughout this section, by a \emph{bag} we mean a finite multiset.
\begin{definition}[Raw bag-of-vectors representation]
\label{def:raw-bag}
Let $\givenclassifier=(\signalspace_{\givenclassifier},\kernal_{\givenclassifier})$ be a finitely supported predictor.
Relabel the signal space as $\signalspace_{\givenclassifier} =[\cellnum_{\givenclassifier}]$ where $\cellnum_{\givenclassifier} \triangleq |\signalspace_{\givenclassifier}|$.
For each $\ell\in[\cellnum_{\givenclassifier}]$, define the signal vector $\signalvec_{\givenclassifier,\ell}\in\reals_+^{\feanum}$ by
$\signalvec_{\givenclassifier,\ell}(i)
\triangleq
\kernal_{\givenclassifier}(\ell\mid \fea_i)$ for $i\in[\feanum]$.
The multiset $\signalbagraw(\givenclassifier)
\triangleq
\{\signalvec_{\givenclassifier,\ell}:\ell\in[\cellnum_{\givenclassifier}]\}$
is called the {\em raw bag-of-vectors} representation of predictor $\givenclassifier$.
\end{definition}

Since the signal kernel $\kernal_{\givenclassifier}(\cdot \mid \fea_i)$ is a probability distribution on $[\cellnum_{\givenclassifier}]$ for each $i\in[\feanum]$, the vectors in $\signalbagraw(\givenclassifier)$ satisfy
$\signalvec_{\givenclassifier,\ell}\in \reals_+^{\feanum}$ for all $\ell\in[\cellnum_{\givenclassifier}]$ and 
$\sum\nolimits_{\ell\in[\cellnum_{\givenclassifier}]}\signalvec_{\givenclassifier,\ell}
= \one$.
Conversely, any finite bag $\{\signalvec_1,\dots,\signalvec_{\cellnum}\}\subseteq \reals_+^{\feanum}$ satisfying $\sum_{\ell\in[\cellnum]}\signalvec_\ell=\one$
defines a finitely supported signal kernel: one simply takes signal space $[\cellnum]$ and sets $\kernal(\ell\mid \fea_i)\triangleq \signalvec_\ell(i)$
for $i\in[\feanum],\ \ell\in[\cellnum]$.
The coordinatewise nonnegativity and summation condition ensure that $\kernal(\cdot\mid \fea_i)$ is a probability distribution on $[\cellnum]$ for every $i\in[\feanum]$.
Thus, up to relabeling of signals, finitely supported signal kernels are in one-to-one correspondence with finite bags of nonnegative vectors whose coordinatewise sum is $\one$.

\wtedit{
The raw bag-of-vectors representation contains a redundancy: two nonzero vectors on the same positive ray encode the same informative signal, up to an independent random split.
Accordingly, we normalize out this redundancy.
\begin{definition}[Ray equivalence and reduced bag]
\label{def:reduced-bag}
For nonzero vectors $\signalvec,\signalvec\primed\in\reals_+^{\feanum}$, we write $\signalvec \rayeq \signalvec\primed$ if and only if $ \signalvec\primed = c\,\signalvec$ for some $c > 0$.
Let $\signalbagraw(\givenclassifier)=\{\signalvec_{\givenclassifier,\ell}\}_{\ell\in[\cellnum_{\givenclassifier}]}$.
Let $\mathcal{R}_1,\dots,\mathcal{R}_L$ be the distinct $\rayeq$-equivalence classes.
For each $t\in[L]$, define the merged vector $\bar{\signalvec}_{\givenclassifier,t}
\triangleq
\sum_{\signalvec_{\givenclassifier,\ell}\in \mathcal{R}_t}
\signalvec_{\givenclassifier,\ell}$.
The reduced bag-of-vectors representation of $\givenclassifier$ is $\signalbag(\givenclassifier)
\triangleq
\{\bar{\signalvec}_{\givenclassifier,t}:t\in[L]\}$.
\end{definition}}
If $\signalvec\primed=c\,\signalvec$ for some $c>0$, then $\feasubsetmean(\signalvec\primed)=\feasubsetmean(\signalvec)$.
Thus, all vectors in one ray class carry the same numeric prediction.
Moreover, passing from $\signalbagraw(\givenclassifier)$ to $\signalbag(\givenclassifier)$ only merges signals that differ by an independent random split, so this canonicalization does not change the informational-dominance content of the predictor.


From now on, whenever we write $\signalbag(\givenclassifier)$, we always mean the reduced bag-of-vectors representation.

\xhdr{Informational dominance in bag-of-vectors form}
With the above reformulation, we can also define informational dominance in bag-of-vectors form.
\begin{definition}[Informational dominance in bag form]
\label{def:partition-dom-bag}
Let $\signalbag(\givenclassifier)=\{\signalvec_{\givenclassifier,\ell}\}_{\ell\in[\cellnum_{\givenclassifier}]}$
and $\signalbag(\classifier)=\{\signalvec_{\classifier,r}\}_{r\in[\cellnum_{\classifier}]}$ be the reduced bags of two predictors.
We say that $\givenclassifier$ informationally-dominates $\classifier$, denoted by $\givenclassifier \partitiondom \classifier$, if and only if the following two conditions hold:
\begin{itemize}
    \item
    there exists a row-stochastic matrix $\kernalmtx=(\kernalmtx_{\ell, r})_{\ell\in[\cellnum_{\givenclassifier}],\,r\in[\cellnum_{\classifier}]}\in [0,1]^{\cellnum_{\givenclassifier}\times \cellnum_{\classifier}}$ such that, for every $r\in[\cellnum_{\classifier}]$, we have
    $ \signalvec_{\classifier,r}
    = \sum\nolimits_{\ell\in[\cellnum_{\givenclassifier}]}\kernalmtx_{\ell, r}\,\signalvec_{\givenclassifier,\ell}$.
    \item
    there does not exist any row-stochastic matrix $\kernalmtx=(\kernalmtx_{r, \ell})_{r\in[\cellnum_{\classifier}],\,\ell\in[\cellnum_{\givenclassifier}]}\in [0,1]^{\cellnum_{\classifier}\times \cellnum_{\givenclassifier}}$ such that, for every $\ell\in[\cellnum_{\givenclassifier}]$, we have $\signalvec_{\givenclassifier,\ell}
    =
    \sum\nolimits_{r\in[\cellnum_{\classifier}]}\kernalmtx_{r, \ell}\,\signalvec_{\classifier,r}$.
\end{itemize}
\end{definition}
Equivalently, if $\signalvecmtx_{\givenclassifier}\in\reals_+^{\cellnum_{\givenclassifier}\times \feanum}$ and $\signalvecmtx_{\classifier}\in\reals_+^{\cellnum_{\classifier}\times \feanum}$ denote the matrices whose rows are the vectors in $\signalbag(\givenclassifier)$ and $\signalbag(\classifier)$, respectively, then the first condition above is equivalent to $\signalvecmtx_{\classifier}=\kernalmtx^\top \signalvecmtx_{\givenclassifier}$.

\begin{lemma}
\label{lem:partition dom bag equiv}
\Cref{def:partition-dom-bag} is equivalent to the signal-kernel definition of informational dominance in \Cref{defn:partition dom}.
\end{lemma}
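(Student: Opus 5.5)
The plan is to reduce the equivalence to one claim: a single garbling condition in kernel form holds if and only if the same condition holds in reduced-bag form. Both \Cref{defn:partition dom} and \Cref{def:partition-dom-bag} consist of a positive garbling condition from $\givenclassifier$ to $\classifier$ and a negated garbling condition from $\classifier$ to $\givenclassifier$. So it suffices to show that, for any two predictors $\givenclassifier,\classifier$, a feature-independent kernel $\mapkernal:\signalspace_\givenclassifier\to\Delta(\signalspace_\classifier)$ with $\kernal_\classifier(\cdot\mid\fea)=\sum_\signal \mapkernal(\cdot\mid\signal)\kernal_\givenclassifier(\signal\mid\fea)$ exists if and only if a row-stochastic $\kernalmtx$ with $\signalvecmtx_\classifier=\kernalmtx^\top\signalvecmtx_\givenclassifier$ exists. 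Applying this claim once in each direction yields the lemma.

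First, I would prove the claim for raw bags. By \Cref{def:raw-bag}, the kernel identity evaluated at coordinate $i$ and signal $r$ reads $\signalvec_{\classifier,r}(i)=\sum_\ell \mapkernal(r\mid\ell)\signalvec_{\givenclassifier,\ell}(i)$. This is exactly the matrix identity with $\kernalmtx_{\ell,r}=\mapkernal(r\mid\ell)$, and row-stochasticity of $\kernalmtx$ is the same statement as each $\mapkernal(\cdot\mid\ell)$ being a distribution. The translation is therefore a verbatim relabeling.

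Second, I would show that garbling relations are unchanged when either bag is replaced by its reduced version. This is the main obstacle, though a mild one. I would first check that a predictor and its reduction garble each other. From raw to reduced, map each raw signal $\ell$ deterministically to its ray class $t$; summing over the class gives $\bar\signalvec_t$. From reduced to raw, if $\signalvec_\ell=c_\ell \vu$ for a common ray representative $\vu$, then $\bar\signalvec_t=(\sum_{\ell\in\mathcal R_t}c_\ell)\vu$. Sending $t$ to $\ell$ with probability $c_\ell/\sum_{\ell'\in\mathcal R_t}c_{\ell'}$ recovers $\signalvec_\ell$ exactly. Since garbling matrices compose (a product of row-stochastic matrices is row-stochastic, and $\kernalmtx_2^\top\kernalmtx_1^\top\signalvecmtx$ is again of the right form), garbling is transitive. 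Hence $\givenclassifier$ garbles into $\classifier$ in raw form if and only if it does in reduced form, on either side.

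Finally, I would handle the edge case of zero vectors. Signals that never occur give zero raw vectors and are excluded from ray classes. Such vectors can be produced by any kernel with zero weight and can be dropped without affecting any identity, so I would note that they are discarded. Combining the three steps, both the positive and the negated conditions coincide across the two definitions, which proves \Cref{lem:partition dom bag equiv}.
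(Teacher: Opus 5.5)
Your proof is correct. Its core step, reading the kernel identity coordinatewise with $\kernalmtx_{\ell,r}=\mapkernal(r\mid\ell)$ and reusing the same translation with roles swapped for the negated condition, is exactly the paper's argument.

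The difference is that the paper applies this relabeling directly to the reduced bags, writing $\signalvec_{\classifier,r}(i)=\kernal_{\classifier}(r\mid\fea_i)$ for reduced vectors. That identity holds literally only when the signal space is already reduced, meaning there are no zero signals and no two signals lie on a common positive ray.

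Your second step closes this gap: mutual garbling between a predictor and its ray-reduction, combined with closure of row-stochastic matrices under products. It is the same argument the paper gives later in the appendix as \Cref{lem:ray-merging-preserves-dom}, but does not invoke in this proof.

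Your route costs one extra step and yields the lemma for arbitrary finitely supported predictors. The paper's shorter proof implicitly takes the reduced representation as the signal space itself.
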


\begin{proof}[Proof of \Cref{lem:partition dom bag equiv}]
We write the signal spaces $\signalspace_{\givenclassifier}=[\cellnum_{\givenclassifier}]$ and $\signalspace_{\classifier}=[\cellnum_{\classifier}]$, and let
$\signalbag(\givenclassifier)=\{\signalvec_{\givenclassifier,\ell}\}_{\ell\in[\cellnum_{\givenclassifier}]}$ and
$\signalbag(\classifier)=\{\signalvec_{\classifier,r}\}_{r\in[\cellnum_{\classifier}]}$ be the corresponding reduced bags.

Assume first that $\givenclassifier \partitiondom \classifier$ in the sense of \Cref{defn:partition dom}.
Then there exists a stochastic kernel $\mapkernal:\signalspace_{\givenclassifier}\to \Delta(\signalspace_{\classifier})$ such that
\begin{align*}
    \kernal_{\classifier}(r\mid \fea_i)
    =
    \sum\nolimits_{\ell\in[\cellnum_{\givenclassifier}]}
    \mapkernal(r\mid \ell)\,\kernal_{\givenclassifier}(\ell\mid \fea_i)
    \quad
    \text{for all } i\in[\feanum],\ r\in[\cellnum_{\classifier}]~.
\end{align*}
Define the matrix $\kernalmtx$ with the entry  $\kernalmtx_{\ell, r}\triangleq \mapkernal(r\mid \ell)$.
Then $\kernalmtx=(\kernalmtx_{\ell, r})$ is row-stochastic, and for every $r\in[\cellnum_{\classifier}]$ and $i\in[\feanum]$,
\begin{align*}
    \signalvec_{\classifier,r}(i)
    =
    \kernal_{\classifier}(r\mid \fea_i)
    =
    \sum\nolimits_{\ell\in[\cellnum_{\givenclassifier}]}\kernalmtx_{\ell, r}\,\kernal_{\givenclassifier}(\ell\mid \fea_i)
    =
    \sum\nolimits_{\ell\in[\cellnum_{\givenclassifier}]}\kernalmtx_{\ell, r}\,\signalvec_{\givenclassifier,\ell}(i)~.
\end{align*}
Thus, the first condition in \Cref{def:partition-dom-bag} holds.
The second condition is translated identically.
Thus $\givenclassifier \partitiondom \classifier$ also holds in bag form.

Conversely, suppose \Cref{def:partition-dom-bag} holds.
Define a stochastic kernel $\mapkernal:\signalspace_{\givenclassifier}\to \Delta(\signalspace_{\classifier})$ by
$\mapkernal(r\mid \ell)\triangleq \kernalmtx_{\ell, r}$.
Then, for every $i\in[\feanum]$ and $r\in[\cellnum_{\classifier}]$,
\begin{align*}
    \kernal_{\classifier}(r\mid \fea_i)
    =
    \signalvec_{\classifier,r}(i)
    =
    \sum\nolimits_{\ell\in[\cellnum_{\givenclassifier}]}\kernalmtx_{\ell, r}\,\signalvec_{\givenclassifier,\ell}(i)
    =
    \sum\nolimits_{\ell\in[\cellnum_{\givenclassifier}]}\mapkernal(r\mid \ell)\,\kernal_{\givenclassifier}(\ell\mid \fea_i)~.
\end{align*}
So $\classifier$ is obtained from $\givenclassifier$ by a feature-independent garbling.
Again, the nonexistence of a reverse row-stochastic matrix is exactly the nonexistence of a reverse garbling kernel.
Thus, the two definitions are equivalent.
\end{proof}

\subsection{Constructible Predictors from Observable Linear Information}
\label{subsec:constructible and linear-infor}
In this section, we define the set of constructible predictors.
Fix the input predictors $(\givenclassifier_j)_{j\in[\classifiernum]}$, and let $(\signalbag(\givenclassifier_j))_{j\in[\classifiernum]}$ where 
$\signalbag(\givenclassifier_j)=\{\signalvec_{j,\ell}\}_{\ell\in[\cellnum_j]}$ denote their reduced bags.
Define $\totalcells \triangleq \sum\nolimits_{j\in[\classifiernum]}\cellnum_j$, and define the unknown vector
$\weightedbayes
\triangleq
(\feaprob_1\bayesprob_1,\dots,\feaprob_{\feanum}\bayesprob_{\feanum})^\top \in \reals^{\feanum}$.

\begin{definition}[Signal-feature matrix]
\label{def:signal-feature-matrix}
Define the signal-feature matrix $\mymatrix\in\reals_+^{\totalcells\times \feanum}$ whose rows are the signal vectors of the input predictors.
Equivalently, if row $(j,\ell)$ corresponds to the vector $\signalvec_{j,\ell}$, then
\begin{align*}
    \mymatrix_{(j,\ell),i}
    \triangleq
    \signalvec_{j,\ell}(i)
    \quad
    \text{for all } j\in[\classifiernum],\ \ell\in[\cellnum_j],\ i\in[\feanum].
\end{align*}
\aidelete{Let the vector $\weightedpredic\triangleq \mymatrix \weightedbayes\in\reals^{\totalcells}$
where its $(j,\ell)$-th component is
$\weightedpredic_{(j,\ell)}
=
\feasubsetlabel(\signalvec_{j,\ell})
=
\feasubsetmean(\signalvec_{j,\ell})\,\feasubsetprob(\signalvec_{j,\ell})$.\footnote{This scalar is determined by the observed inputs $(\feaprob,(\givenclassifier_j)_{j\in[\classifiernum]})$ since
$\feasubsetmean(\signalvec_{j,\ell})$ is the prediction attached to the signal $\signalvec_{j,\ell}$ and
$\feasubsetprob(\signalvec_{j,\ell})$ is computable from $\feaprob$ and $\signalvec_{j,\ell}$.}
Thus, the aggregator's information about the unknown vector $\weightedbayes$ is exactly the linear observation $\mymatrix\weightedbayes=\weightedpredic$.}
\aiedit{Let $\weightedpredic\in\reals^{\totalcells}$ denote the vector of observable signal-wise label masses, whose $(j,\ell)$-th component is defined by
$\weightedpredic_{(j,\ell)}
\triangleq
\feasubsetmean(\signalvec_{j,\ell})\,\feasubsetprob(\signalvec_{j,\ell})$.
Under signal-level calibration, the numerical prediction associated with each input signal equals its conditional label mean. Consequently, for every $j\in[\classifiernum]$ and $\ell\in[\cellnum_j]$,\aiedit{\footnote{\aiedit{We note that this construction relies on signal-level calibration. Under prediction-level calibration alone, the numerical prediction attached to an individual signal need not equal its conditional label mean when multiple signals share the same numerical prediction. Consequently, the individual signal-wise label masses might not be observable. 
We discuss this distinction and the corresponding prediction-level formulation in Appendix~\ref{apx:calibration}.}}}
\begin{align*}
    \weightedpredic_{(j,\ell)}
    =\feasubsetmean(\signalvec_{j,\ell})\,\feasubsetprob(\signalvec_{j,\ell})
    =\feasubsetlabel(\signalvec_{j,\ell})
    =\sum\nolimits_{i\in[\feanum]}\feaprob_i\bayesprob_i\signalvec_{j,\ell}(i)
    =\signalvec_{j,\ell}^{\top}\weightedbayes
    =(\mymatrix\weightedbayes)_{(j,\ell)}~.
\end{align*}
Therefore, the observable quantities collected in $\weightedpredic$ satisfy $\mymatrix\weightedbayes=\weightedpredic$. The matrix $\mymatrix$ is determined by the input signal structures, whereas $\weightedpredic$ is computed from the numerical predictions associated with the input signals and the known feature distribution $\feaprob$. 
}
\end{definition}
For the fixed inputs $(\feaprob,(\givenclassifier_j)_{j\in[\classifiernum]})$, the signal-wise label totals
$\feasubsetlabel(\signalvec_{j,\ell})$ are exactly the linear functionals of the unknown vector
$\weightedbayes=(\feaprob_1\bayesprob_1,\dots,\feaprob_{\feanum}\bayesprob_{\feanum})^\top$
that are directly determined by the observed predictors.
Collecting these quantities over all input signals yields a linear system
$\mymatrix\weightedbayes=\weightedpredic$;
we call the pair $(\mymatrix,\weightedpredic)$, or equivalently the equation
$\mymatrix\weightedbayes=\weightedpredic$, the {\em observable linear information}
induced by $(\feaprob,(\givenclassifier_j)_{j\in[\classifiernum]})$.

If predictor $\givenclassifier_j$ is deterministic, then for each feature $\fea_i$ there exists a unique signal in $\signalspace_{\givenclassifier_j}$ that is realized with probability one.
Define the cell
\begin{align*}
    \cell_{j,\ell}
    \triangleq
    \{\,i\in[\feanum]: \kernal_{\givenclassifier_j}(\ell\mid \fea_i)=1\,\},
    \quad
    \ell\in[\cellnum_j]~.
\end{align*}
Because $\givenclassifier_j$ is deterministic, the nonempty sets $\cell_{j,1},\dots,\cell_{j,\cellnum_j}$ form a partition of $[\feanum]$.
We call this partition the {\em signal partition} of $\givenclassifier_j$ and denote it by
$\partition_{\givenclassifier_j}=\{\cell_{j,\ell}\}_{\ell\in[\cellnum_j]}$.

\begin{remark}
\label{rem:det-special-case-M}
If predictor $\givenclassifier_j$ is deterministic with signal partition $\partition_{\givenclassifier_j}=\{\cell_{j,\ell}\}_{\ell\in[\cellnum_j]}$, then
$\signalvec_{j,\ell}=\subsetindi_{\cell_{j,\ell}}$ for every $\ell\in[\cellnum_j]$.
Thus, in the deterministic special case, \Cref{def:signal-feature-matrix} reduces to a matrix with $0/1$ rows.
\end{remark}

We identify the row space of $\mymatrix$ with a subspace of $\reals^{\feanum}$:
\begin{align*}
    \rowspace(\mymatrix)
    \triangleq
    \{\mymatrix^\top \alpha:\alpha\in\reals^{\totalcells}\}
    =
    \myspan\{\signalvec_{j,\ell}:j\in[\classifiernum],\ \ell\in[\cellnum_j]\}
    \subseteq \reals^{\feanum}.
\end{align*}

\begin{definition}[Observable vector]
\label{def:identifiable-vector}
A vector $\signalvec\in\reals^{\feanum}$ is called {\em \wtedit{observable}} if the value $\feasubsetlabel(\signalvec)=\signalvec^\top \weightedbayes$ is uniquely determined by the observed data, equivalently by $\weightedpredic=\mymatrix\weightedbayes$.
\end{definition}

\begin{proposition}
\label{prop:identifiable-vector}
A vector $\signalvec\in\reals^{\feanum}$ is {\em observable} if and only if $\signalvec\in \rowspace(\mymatrix)$.
\end{proposition}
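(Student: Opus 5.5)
We prove the two directions separately, interpreting ``uniquely determined by the observed data'' as follows: $\signalvec^\top\weightedbayes$ takes the same value for every $\weightedbayes'$ consistent with the observations, i.e., every $\weightedbayes'$ with $\mymatrix\weightedbayes'=\weightedpredic$.

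For sufficiency, suppose $\signalvec\in\rowspace(\mymatrix)$, so that $\signalvec=\mymatrix^\top\alpha$ for some $\alpha\in\reals^{\totalcells}$. Then for any consistent $\weightedbayes'$,
\begin{align*}
\signalvec^\top\weightedbayes'=\alpha^\top\mymatrix\weightedbayes'=\alpha^\top\weightedpredic .
\end{align*}
This value depends only on the observed $\weightedpredic$, so $\signalvec$ is observable. The value can be computed by solving the linear system $\mymatrix^\top\alpha=\signalvec$.

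For necessity, argue by contrapositive. Suppose $\signalvec\notin\rowspace(\mymatrix)$. Decompose $\reals^{\feanum}=\rowspace(\mymatrix)\oplus\ker(\mymatrix)$, which is an orthogonal decomposition. Let $\vz$ be the projection of $\signalvec$ onto $\ker(\mymatrix)$. Then $\vz\neq\vzero$ and $\signalvec^\top\vz=\|\vz\|^2>0$. Now set $\weightedbayes'=\weightedbayes+\epsilon\vz$. This satisfies $\mymatrix\weightedbayes'=\weightedpredic$, yet $\signalvec^\top\weightedbayes'\neq\signalvec^\top\weightedbayes$. So the value is not pinned down.

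The main obstacle is that the consistent alternatives must also be legitimate data-generating vectors: we need $\weightedbayes'_i=\feaprob_i\bayesprob'_i$ with $\bayesprob'_i\in[0,1]$, and coordinates with $\feaprob_i=0$ must stay zero.
\begin{itemize}
\item If the true $\bayesprob$ lies in the interior $(0,1)^{\feanum}$, a small $\epsilon$ keeps $\weightedbayes'$ feasible. In that case the argument goes through.
\item To handle general $\weightedbayes$, note that observability is a property of the structure $(\feaprob,\mymatrix)$, required to hold uniformly over all Bayes vectors the aggregator cannot rule out.
\item It therefore suffices to exhibit some pair of feasible Bayes vectors $\bayesprob,\bayesprob'$ that agree on $\mymatrix\weightedbayes$ but differ on $\signalvec^\top\weightedbayes$. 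For instance, take $\bayesprob=\tfrac12\one$ and $\bayesprob'=\tfrac12\one+\epsilon\,(\vz_i/\feaprob_i)_i$.
\item Features with $\feaprob_i=0$ can be dropped from $\feaspace$ without loss of generality.
\end{itemize}

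I would state this interpretation explicitly as the meaning of ``uniquely determined.'' Under it, the two directions together give the equivalence.
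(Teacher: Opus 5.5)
Your argument is the paper's: sufficiency by writing $\signalvec=\mymatrix^\top\alpha$ so that $\signalvec^\top\weightedbayes=\alpha^\top\weightedpredic$, and necessity by perturbing a consistent $\weightedbayes$ along some $\vz\in\ker\mymatrix$ with $\signalvec^\top\vz\neq 0$. Your orthogonal projection is just a concrete choice of the vector the paper gets from $\rowspace(\mymatrix)=(\ker\mymatrix)^\perp$. Your feasibility discussion goes beyond the paper, which silently lets $\weightedbayes$ range over all of $\reals^{\feanum}$, and taking $\bayesprob=\tfrac12\one$ as the base point fixes this under your structural reading; note, though, that $\feaprob_i>0$ is a genuine assumption (one the paper also makes implicitly) rather than a WLOG reduction, since under the feasibility-aware reading a zero-mass feature $\fea_i$ makes $\subsetindi_{\{i\}}$ trivially observable even though it need not lie in $\rowspace(\mymatrix)$.
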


\begin{proof}
Assume first that $\signalvec\in\rowspace(\mymatrix)$.
Then $\signalvec=\mymatrix^\top \alpha$ for some $\alpha\in\reals^{\totalcells}$, and therefore
\begin{align*}
    \feasubsetlabel(\signalvec)
    =
    \signalvec^\top \weightedbayes
    =
    \alpha^\top \mymatrix \weightedbayes
    =
    \alpha^\top \weightedpredic~,
\end{align*}
which is determined by $\weightedpredic$.

Conversely, suppose $\signalvec\notin \rowspace(\mymatrix)$.
Since $\rowspace(\mymatrix)=(\ker \mymatrix)^\perp$, there exists $\vz\in \ker \mymatrix$ such that $\signalvec^\top \vz\neq 0$.
Fix any $\weightedbayes^0\in\reals^{\feanum}$ satisfying $\mymatrix \weightedbayes^0=\weightedpredic$.
Then, for every scalar $\scalescalar\in\reals$,
\begin{align*}
    \mymatrix(\weightedbayes^0+\scalescalar\vz)
    =
    \mymatrix \weightedbayes^0 + \scalescalar\,\mymatrix \vz
    =
    \weightedpredic~.
\end{align*}
Thus, $\weightedbayes^0$ and $\weightedbayes^0+\scalescalar\vz$ yield the same observation $\weightedpredic$.
However,
\begin{align*}
    \signalvec^\top(\weightedbayes^0+\scalescalar\vz)
    =
    \signalvec^\top \weightedbayes^0 + \scalescalar\,\signalvec^\top \vz
\end{align*}
depends nontrivially on $\scalescalar$ because $\signalvec^\top \vz\neq 0$.
Thus $\feasubsetlabel(\signalvec)$ is not uniquely determined by $\weightedpredic$.
\end{proof}

\begin{corollary}
\label{cor:identifiable-subset}
We say a nonempty subset $\feasubset\subseteq[\feanum]$ is {\em observable} if and only if $\subsetindi_{\feasubset}\in \rowspace(\mymatrix)$.
\end{corollary}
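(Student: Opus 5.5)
The corollary is essentially a definition specialized from \Cref{def:identifiable-vector}. The content to justify is that a subset $\feasubset$ is observable in the natural semantic sense exactly when $\subsetindi_{\feasubset}\in\rowspace(\mymatrix)$. Here "observable in the semantic sense" means that its label mass $\feasubsetlabel(\subsetindi_{\feasubset})=\sum_{i\in\feasubset}\feaprob_i\bayesprob_i$, and hence its calibrated mean $\feasubsetmean(\subsetindi_{\feasubset})$, is uniquely determined by $\weightedpredic=\mymatrix\weightedbayes$. The plan is to reduce directly to \Cref{prop:identifiable-vector} by instantiating $\signalvec=\subsetindi_{\feasubset}$.

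The steps are as follows. First, observe that $\feasubsetlabel(\subsetindi_{\feasubset})=\subsetindi_{\feasubset}^\top\weightedbayes$. So $\feasubset$ being observable means precisely that the vector $\subsetindi_{\feasubset}$ is observable in the sense of \Cref{def:identifiable-vector}.

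Second, apply \Cref{prop:identifiable-vector} to $\signalvec=\subsetindi_{\feasubset}$. This gives that $\subsetindi_{\feasubset}^\top\weightedbayes$ is pinned down by $\weightedpredic$ if and only if $\subsetindi_{\feasubset}\in\rowspace(\mymatrix)$.

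Third, pass from the label mass to the conditional mean. Since $\feasubsetprob(\subsetindi_{\feasubset})=\sum_{i\in\feasubset}\feaprob_i$ is computable from the known $\feaprob$, the mean $\feasubsetmean(\subsetindi_{\feasubset})$ is determined if and only if $\feasubsetlabel(\subsetindi_{\feasubset})$ is. This needs $\feasubsetprob(\subsetindi_{\feasubset})>0$, and I would assume that features with zero mass have been removed, so that this holds for every nonempty $\feasubset$.

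The only point needing care is this positivity assumption in the third step. It matters for the "only if" direction: the perturbation argument in \Cref{prop:identifiable-vector} shows that the label mass is not uniquely determined, and we need that to translate into non-uniqueness of the mean. Dividing by a positive known constant preserves non-uniqueness, so once positivity holds there is no real obstacle. If one also wants the perturbed $\weightedbayes^0+\scalescalar\vz$ to stay in the feasible box $[0,\feaprob_i]$, one would take $\scalescalar$ small around an interior solution. I would add that as a remark, but keep the corollary at the level of the linear-algebraic notion used in \Cref{prop:identifiable-vector}.
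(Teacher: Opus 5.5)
Your proposal is correct and is the paper's argument: the paper proves the corollary in one line by applying \Cref{prop:identifiable-vector} to $\signalvec=\subsetindi_{\feasubset}$. Your extra step, passing from the label mass to the mean via $\feasubsetprob(\subsetindi_{\feasubset})>0$, is a harmless elaboration consistent with the paper's standing assumption $\feaprob_i>0$.

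One caution on your side remark: an interior consistent solution need not exist. For example, if an input cell has calibrated mean $0$, the box forces $\bayesprob_i=0$ throughout that cell, so the masses of its sub-cells become determined even though their indicators lie outside $\rowspace(\mymatrix)$. This is why the equivalence has to be read in the unconstrained linear-algebraic sense you chose.
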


\begin{proof}
Apply \Cref{prop:identifiable-vector} to the vector $\subsetindi_{\feasubset}$.
\end{proof}

\xhdr{Constructible predictors from observable information}
We are now ready to formally define constructible predictors.
\begin{definition}[Constructible predictors]
\label{def:constructible-predictor}
A finitely supported predictor $\classifier$ is said to be {\em constructible} from observable information if its reduced bag can be written as 
$\signalbag(\classifier)
=
\{\signalvec_{\classifier,1},\dots,\signalvec_{\classifier,\cellnum_{\classifier}}\}$
for some finite family of vectors satisfying:
\begin{itemize}
    \item[(i)]
    $\signalvec_{\classifier,r}\in \rowspace(\mymatrix)\cap \reals_+^{\feanum}\setminus\{\vzero\}$ for every $r\in[\cellnum_{\classifier}]$;
    \item[(ii)]
    $\sum\nolimits_{r\in[\cellnum_{\classifier}]}\signalvec_{\classifier,r}=\one$ coordinatewise.
\end{itemize}
The prediction attached to the signal $\signalvec_{\classifier,r}$ is $\feasubsetmean(\signalvec_{\classifier,r})$.
We write $\classifierfam$ for the set of all such predictors.
\end{definition}
Condition (i) ensures that each signal of $\classifier$ is identifiable from the observable linear information.
Condition (ii) ensures that the family $\{\signalvec_{\classifier,r}\}_{r\in[\cellnum_{\classifier}]}$ defines a valid stochastic kernel via
$\kernal_{\classifier}(r\mid \fea_i)\triangleq \signalvec_{\classifier,r}(i)$.

\begin{remark}
\label{rem:formal-constructibility}
\Cref{def:constructible-predictor} is the formal notion of constructibility used in the remainder of the paper.
In particular, whenever we write $\classifierfam$, we mean the family of predictors constructible from observable linear information in the sense of \Cref{def:constructible-predictor}.
\end{remark}

\begin{corollary}[Constructible deterministic predictors]
\label{cor:constructible-det}
Let $\classifier$ be a deterministic predictor, and let $\partition_{\classifier}=\{\cell_1,\dots,\cell_{\cellnum_{\classifier}}\}$ be its signal partition.
Then $\classifier\in \classifierfamdet$ if and only if $\subsetindi_{\cell_r}\in \rowspace(\mymatrix)$ for every $r\in[\cellnum_{\classifier}]$.
In that case, the prediction on the signal cell $\cell_r$ is $\feasubsetmean(\cell_r)$.
\end{corollary}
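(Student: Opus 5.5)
This is essentially an unwinding of \Cref{def:constructible-predictor} in the deterministic case, combined with \Cref{rem:det-special-case-M}. The plan is to first identify the reduced bag of a deterministic predictor, and then check conditions (i) and (ii) directly.

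\textbf{Step 1: the reduced bag.} Because $\classifier$ is deterministic, its raw bag consists of the indicator vectors $\subsetindi_{\cell_r}$ of its signal partition. Two distinct cells are disjoint and nonempty, so no two of these vectors are positive multiples of each other. Hence the ray-equivalence classes in \Cref{def:reduced-bag} are singletons, and $\signalbag(\classifier)=\{\subsetindi_{\cell_1},\dots,\subsetindi_{\cell_{\cellnum_\classifier}}\}$.

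\textbf{Step 2: the ``if'' direction.} Suppose each $\subsetindi_{\cell_r}\in\rowspace(\mymatrix)$. Each such vector is nonnegative and nonzero, so condition (i) holds. Since the cells partition $[\feanum]$, we have $\sum_r \subsetindi_{\cell_r}=\one$, which is condition (ii). Therefore $\classifier\in\classifierfam$, and since $\classifier$ is deterministic, $\classifier\in\classifierfamdet$.

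\textbf{Step 3: the ``only if'' direction.} Suppose $\classifier\in\classifierfamdet\subseteq\classifierfam$. By \Cref{def:constructible-predictor}, its reduced bag can be written with every vector lying in $\rowspace(\mymatrix)$. By Step~1 that bag is exactly $\{\subsetindi_{\cell_r}\}$, so each $\subsetindi_{\cell_r}\in\rowspace(\mymatrix)$. The one point to watch is that the representation in the definition is a representation of the reduced bag itself. Because the reduced bag is uniquely determined up to relabeling, the vectors in the definition must coincide with the indicators $\subsetindi_{\cell_r}$. This is the only step I expect to need any care, and Step~1 settles it.

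\textbf{Step 4: the prediction value.} By \Cref{def:constructible-predictor}, the prediction on the signal $\subsetindi_{\cell_r}$ is $\feasubsetmean(\subsetindi_{\cell_r})$, which by the notational convention we write as $\feasubsetmean(\cell_r)$. Its computability from observable data follows from \Cref{prop:identifiable-vector}: the numerator $\feasubsetlabel(\subsetindi_{\cell_r})$ is determined by $\weightedpredic$, and the denominator $\feasubsetprob(\subsetindi_{\cell_r})=\sum_{i\in\cell_r}\feaprob_i$ is known from $\feaprob$.
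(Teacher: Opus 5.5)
Your proposal is correct and follows the same route as the paper: identify the reduced bag of a deterministic predictor as the family of cell indicators $\subsetindi_{\cell_r}$, then read off conditions (i)--(ii) of \Cref{def:constructible-predictor} and the attached prediction $\feasubsetmean(\subsetindi_{\cell_r})$. Your explicit checks are details the paper leaves implicit: that the ray classes are singletons because the cells are disjoint and nonempty, and that the reduced bag is unique up to relabeling. The argument is otherwise identical.
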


\begin{proof}[Proof of \Cref{cor:constructible-det}]
If predictor $\classifier$ is deterministic, then each feature produces exactly one signal with probability one.
Thus, the reduced bag of predictor $\classifier$ is
\begin{align*}
    \signalbag(\classifier)
    =
    \{\subsetindi_{\cell_1},\dots,\subsetindi_{\cell_{\cellnum_{\classifier}}}\}~,
\end{align*}
where $\partition_{\classifier}=\{\cell_1,\dots,\cell_{\cellnum_{\classifier}}\}$ is a partition of $[\feanum]$.
Thus, \Cref{def:constructible-predictor} is equivalent to the condition that $\subsetindi_{\cell_r}\in\rowspace(\mymatrix)$ for every $r\in[\cellnum_{\classifier}]$, and the attached prediction is
$\feasubsetmean(\subsetindi_{\cell_r})$.
\end{proof}

\begin{remark}
\label{rem:det-signal-partition}
For deterministic predictors under informational dominance, the primitive objects are the signal cells in $\partition_{\classifier}$.
Distinct signal cells are \emph{not} merged merely because they display the same numeric prediction.
This is the interpretation that will be used in \Cref{sec:det}.
\end{remark}

\section{Hardness and Efficient Algorithms for Deterministic Predictors}
\label{sec:det}

\newcommand{\partdecision}{\textsc{ifRefinable}}
\newcommand{\xthree}{\textsc{X3C3}}
\newcommand{\feaspaceone}{\feaspace_1}
\newcommand{\feaspacetwo}{\feaspace_2}

\newcommand{\xthreeset}{\mathcal{U}}
\newcommand{\element}{u}
\newcommand{\itemone}{v}
\newcommand{\itemtwo}{w}
\newcommand{\xthreecollec}{\mathcal{C}}
\newcommand{\xthreesubset}{C}
\newcommand{\xthreeincidencemtx}{\mathbf{X}}

\newcommand{\overlapgraph}{G}
\newcommand{\vertexset}{\mathcal{I}}
\newcommand{\vertexsetsize}{I}
\newcommand{\edgeset}{\mathcal{E}}
\newcommand{\edgesize}{E}
\newcommand{\connectedcomp}{K}
\newcommand{\inputcelllabel}{L}
\newcommand{\refinedpartition}{\widehat{\partition}}

In this section, we study the search problem {\predicrefinelink} restricted to deterministic inputs and deterministic outputs; that is, the input predictors $(\givenclassifier_j)_{j\in[\classifiernum]}$ and the output predictors $(\classifier_j)_{j\in[\classifiernum]}$ are all required to be deterministic predictors.

Fix a feature distribution $\feaprob=(\feaprob_i)_{i\in[\feanum]}$ and deterministic input predictors $(\givenclassifier_j)_{j\in[\classifiernum]}$.
Recall that $\classifierfamdet\subseteq \classifierfam$ denotes the set of constructible deterministic predictors, and $\partitiondomparedet\subseteq \classifierfamdet$ denotes the set of constructible deterministic predictors that are undominated by any other constructible deterministic predictors.
By \Cref{cor:constructible-det}, a deterministic predictor $\classifier$ belongs to $\classifierfamdet$ if and only if $\subsetindi_{\cell}\in \rowspace(\mymatrix)$ for every signal cell $\cell\in \partition_{\classifier}$.
In that case, the prediction on cell $\cell$ is the calibrated value $\feasubsetmean(\subsetindi_{\cell})$.
By \Cref{rem:det-signal-partition}, the primitive objects in this section are the signal cells in $\partition_{\classifier}$.
With this simplification in place, informational dominance for deterministic predictors can be characterized entirely in terms of the refinement relation between their signal partitions.
\begin{lemma}[Dominance for deterministic predictors]
\label{lem:det-partition-dom}
Let $\classifier$ and $\classifiernew$ be deterministic predictors, and let
$\partition_{\classifier}$ and $\partition_{\classifiernew}$ denote their signal partitions.
Then we have $\classifier \partitiondom \classifiernew$ if and only if $\partition_{\classifier}$ {\em strictly refines} $\partition_{\classifiernew}$, that is, 
for every cell $\cell\in \partition_{\classifier}$, there exists a cell $\cell'\in \partition_{\classifiernew}$
such that $\cell\subseteq \cell'$, and moreover $\partition_{\classifier}\neq \partition_{\classifiernew}$.
\end{lemma}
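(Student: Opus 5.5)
We will work with the bag form of dominance (\Cref{def:partition-dom-bag}), justified by \Cref{lem:partition dom bag equiv}. For deterministic predictors, \Cref{rem:det-special-case-M} and the proof of \Cref{cor:constructible-det} give $\signalbag(\classifier)=\{\subsetindi_{\cell}:\cell\in\partition_{\classifier}\}$ and $\signalbag(\classifiernew)=\{\subsetindi_{\cellnew}:\cellnew\in\partition_{\classifiernew}\}$. These are already reduced bags: distinct cells are disjoint and nonempty, so their indicators lie on distinct rays. The core of the proof is a \emph{one-directional garbling lemma}:
\begin{quote}
there exists a row-stochastic $\kernalmtx$ with $\subsetindi_{\cellnew}=\sum_{\cell}\kernalmtx_{\cell,\cellnew}\subsetindi_{\cell}$ for all $\cellnew\in\partition_{\classifiernew}$ if and only if $\partition_{\classifier}$ refines $\partition_{\classifiernew}$ (weakly).
\end{quote}
Granting this lemma, both conditions of \Cref{def:partition-dom-bag} reduce to partition refinement. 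The first condition says that $\partition_{\classifier}$ refines $\partition_{\classifiernew}$. The second condition says that $\partition_{\classifiernew}$ does not refine $\partition_{\classifier}$. Two partitions that refine each other coincide, since each cell lies in a cell of the other, which lies in a cell of the first, and cells are disjoint. So, given the first condition, the second is equivalent to $\partition_{\classifier}\neq\partition_{\classifiernew}$, which is exactly the strict refinement in the statement.

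For the ``if'' direction of the lemma, we define $\kernalmtx_{\cell,\cellnew}=1$ if $\cell\subseteq\cellnew$ and $0$ otherwise. Each row has exactly one $1$, because $\cell$ lies in a unique cell of $\partition_{\classifiernew}$ (cells of a partition are disjoint and $\cell\neq\emptyset$). Moreover, $\sum_{\cell\subseteq\cellnew}\subsetindi_{\cell}=\subsetindi_{\cellnew}$, since the cells of $\partition_{\classifier}$ contained in $\cellnew$ partition $\cellnew$. This uses the refinement hypothesis: every point of $\cellnew$ lies in some cell $\cell$, and that $\cell$ is contained in a unique cell of $\partition_{\classifiernew}$, which must be $\cellnew$.

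For the ``only if'' direction, which is the main step, we fix a cell $\cell\in\partition_{\classifier}$ and a coordinate $i\in\cell$. Evaluating the identity at $i$ gives $\subsetindi_{\cellnew}(i)=\kernalmtx_{\cell,\cellnew}$, because only $\subsetindi_{\cell}$ is nonzero at $i$. Hence the row $\kernalmtx_{\cell,\cdot}$ is the $0/1$ indicator of the unique cell $\cellnew\ni i$. Since this holds for every $i\in\cell$ with the same row, all points of $\cell$ lie in the same cell $\cellnew$, that is, $\cell\subseteq\cellnew$. The same coordinate evaluation also covers the reverse-garbling condition, with the roles of the two partitions swapped. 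The only delicate point is ensuring that reduced bags coincide with the cell indicators, which we address by the disjointness observation above.
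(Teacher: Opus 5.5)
Your proof is correct and takes essentially the same approach as the paper. Your coordinate evaluation $\subsetindi_{\cellnew}(i)=\kernalmtx_{\cell,\cellnew}$ is the matrix form of the paper's observation that all features in a cell of $\partition_{\classifier}$ pass through the same garbling and so must land on the same point mass of the deterministic $\classifiernew$, and strictness is handled the same way, by applying the garbling--refinement correspondence in the reverse direction. Your explicit check that the indicator bags are already reduced (so \Cref{lem:partition dom bag equiv} applies) and your explicit appeal to antisymmetry of refinement spell out two steps the paper leaves implicit.
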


\subsection{$\mathsf{NP}$-Hardness for Solving {\predicrefinelink}}
\label{subsec:hardness}

\newcommand{\partdecisionlink}{\hyperref[prob:partdecision]{\partdecision}}

To establish the hardness of the search problem {\predicrefinelink}, we first consider the following decision problem {\partdecisionlink} as a sub-problem toward solving {\predicrefinelink}.
\phantomsection
\label{prob:partdecision}
\mybox{$\partdecision$: \\
\textbf{Input:} a feature distribution $\feaprob=(\feaprob_i)_{i\in[\feanum]}$ and $\classifiernum$ deterministic predictors $(\givenclassifier_j)_{j\in[\classifiernum]}$. \\
\textbf{Question:} does there exist a predictor $\classifier\in \partitiondomparedet$ such that $\classifier \partitiondom \givenclassifier_1$?
}

\begin{theorem}
\label{thm:hardness partdecision}
The decision problem {\partdecisionlink} is $\mathsf{NP}$-hard.
\end{theorem}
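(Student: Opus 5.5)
The plan is a polynomial-time reduction from $\xthree$, exact cover by $3$-sets in which every element lies in exactly three sets. Before the reduction I would simplify the decision problem.

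\emph{Simplification.} Using \Cref{lem:det-partition-dom} and \Cref{cor:constructible-det}, I would show that $\partdecisionlink$ answers ``yes'' iff some cell $\cell$ of $\partition_{\givenclassifier_1}$ contains a nonempty proper subset $\feasubset\subsetneq\cell$ with $\subsetindi_{\feasubset}\in\rowspace(\mymatrix)$.
\begin{itemize}
\item If such an $\feasubset$ exists, then $\subsetindi_{\cell\setminus\feasubset}=\subsetindi_{\cell}-\subsetindi_{\feasubset}$ also lies in $\rowspace(\mymatrix)$. Splitting $\cell$ into $\feasubset$ and $\cell\setminus\feasubset$ therefore gives a constructible deterministic strict refinement of $\givenclassifier_1$.
\item Since partitions of the finite set $[\feanum]$ form a finite poset, every constructible refinement is dominated by a maximal one. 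Such a maximal refinement lies in $\partitiondomparedet$ and dominates $\givenclassifier_1$ by transitivity.
\item Conversely, any $\classifier\partitiondom\givenclassifier_1$ has some cell strictly inside a cell of $\givenclassifier_1$, and that cell is observable.
\end{itemize}
So the task reduces to a membership question: is there an observable $0$-$1$ vector supported strictly inside a cell of $\givenclassifier_1$?

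\emph{Construction.} Given $(\xthreeset,\xthreecollec)$ with $|\xthreeset|=3q$ and $|\xthreecollec|=3q$, take the uniform distribution on the following features:
\begin{itemize}
\item one feature $\fea_{\xthreesubset}$ for each $\xthreesubset\in\xthreecollec$;
\item one feature $y_{\element}$ for each $\element\in\xthreeset$;
\item one extra feature $z$.
\end{itemize}
The target predictor $\givenclassifier_1$ has a single cell $\cell^*=\{\fea_{\xthreesubset}\}_{\xthreesubset}\cup\{z\}$, with all remaining features grouped into a second cell. For each element $\element$ I would add one deterministic predictor. Its partition has the cell $\{\fea_{\xthreesubset}:\element\in\xthreesubset\}\cup\{y_{\element}\}$, and all other features are placed so that they cancel; I would pad with the fixed predictor $\{\{y_{\element}\}_{\element},\ \text{rest}\}$ if needed. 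This makes the span contain the vectors $\sum_{\xthreesubset\ni\element}\subsetindi_{\fea_{\xthreesubset}}$, i.e.\ the rows of the incidence matrix $\xthreeincidencemtx$, together with the vector $\sum_{\element}\subsetindi_{y_{\element}}$. In addition, for each $\xthreesubset$ I would add a gadget predictor linking $\fea_{\xthreesubset}$ to its three elements. The intended effect is that choosing $\fea_{\xthreesubset}$ ``consumes'' its three element-rows, so a $0$-$1$ combination supported inside $\cell^*$ exists exactly when the chosen sets cover each element once.

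\emph{Correctness and main obstacle.} Completeness is straightforward: an exact cover $\{\xthreesubset_1,\dots,\xthreesubset_q\}$ yields an explicit $\pm1$ combination of rows equal to $\subsetindi_{\{\fea_{\xthreesubset_1},\dots,\fea_{\xthreesubset_q}\}}$. This set is a nonempty proper subset of $\cell^*$, since it omits $z$. Soundness is the main obstacle. Because $\rowspace(\mymatrix)$ is a real span, an arbitrary real combination $\alpha$ might yield a $0$-$1$ vector inside $\cell^*$ without corresponding to an exact cover.

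My first attempt at soundness would compute $\rowspace(\mymatrix)$ explicitly via its orthogonal complement $\ker\mymatrix$. I would exhibit test vectors $\vz\in\ker\mymatrix$ whose inner products with $\subsetindi_{\feasubset}$ force the following:
\begin{itemize}
\item $\feasubset$ contains no $y$-features and no $z$;
\item the indicator of the selected sets satisfies $\xthreeincidencemtx^{\top}\subsetindi=\one$, i.e.\ every element is covered exactly once.
\end{itemize}
The $3$-regularity of $\xthree$ is what should make these kernel vectors exist: every element and every set has degree exactly $3$, so uniform weightings balance. If the direct gadget does not give a clean kernel description, I would instead adjust the construction, for example by duplicating features so that each gadget row becomes a disjoint union of cells, until the orthogonal complement is transparent. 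Finally, all sizes are polynomial in $q$, and this completes the reduction.
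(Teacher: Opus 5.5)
There is a genuine gap, and it is in the construction itself, not only in the unfinished soundness step: your extra feature $z$ can always be split off. Let $E_\element\triangleq\{\fea_{\xthreesubset}:\element\in\xthreesubset\}\cup\{y_\element\}$ be the cell of the predictor attached to element $\element$, and let $Y_0\triangleq\{y_\element\}_{\element}$ be the second cell of $\givenclassifier_1$. Both indicators are rows of $\mymatrix$. Because every set has exactly three elements,
\[
\sum\nolimits_{\element}\subsetindi_{E_\element}-\subsetindi_{Y_0}=3\sum\nolimits_{\xthreesubset}\subsetindi_{\fea_{\xthreesubset}},
\qquad\text{hence}\qquad
\subsetindi_{\{z\}}=\subsetindi_{\cell^*}-\tfrac13\Bigl(\sum\nolimits_{\element}\subsetindi_{E_\element}-\subsetindi_{Y_0}\Bigr)\in\rowspace(\mymatrix).
\]
So $\cell^*$ always splits into $\{z\}$ and $\{\fea_{\xthreesubset}\}_{\xthreesubset}$, and every instance you build is a YES instance. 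If, as you intend, each incidence row $\sum_{\xthreesubset\ni\element}\subsetindi_{\fea_{\xthreesubset}}$ lies in the span, that row is itself a nonempty proper observable subset of $\cell^*$, giving another spurious YES. No choice of the unspecified per-set gadget can repair this: adding predictors only adds rows to $\mymatrix$, so it can only enlarge $\rowspace(\mymatrix)$.

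Your soundness target is also misformulated. Since $\subsetindi_{\cell^*}$ is a row, the observable subsets of $\cell^*$ are closed under complement within $\cell^*$. Hence no test vectors in $\ker\mymatrix$ can force every nonempty proper observable $\feasubset$ to avoid $z$ and to cover each element exactly once; any correct statement can hold only up to complement.

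The paper avoids both problems by giving the target cell no extra feature and by indexing the gadgets by sets rather than elements. Its features are $\itemone_1,\dots,\itemone_d$ and $\itemtwo_1,\dots,\itemtwo_\cellnum$, with $\partition_{\givenclassifier_1}=\{\feaspaceone,\feaspacetwo\}$, and for each $j$ it adds the predictor $\{R_j,R_j^c\}$ with $R_j=\{\itemone_j\}\cup\{\itemtwo_r:\element_r\in\xthreesubset_j\}$.

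Take $\feasubset\subseteq\feaspaceone$ with indicator $\vz$ and write $\subsetindi_\feasubset=a\subsetindi_{\feaspaceone}+b\subsetindi_{\feaspacetwo}+\sum_j\alpha_j\subsetindi_{R_j}$. Coordinate $\itemone_j$ gives $\alpha_j=\vz_j-a$, and coordinate $\itemtwo_r$ then gives $\xthreeincidencemtx\vz=(3a-b)\one_\cellnum$. Conversely, whenever $\xthreeincidencemtx\vz=\scalescalar\one_\cellnum$, one has $\sum_j\vz_j\subsetindi_{R_j}-\scalescalar\subsetindi_{\feaspacetwo}=\subsetindi_\feasubset$.

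Thus $\scalescalar\in\{0,1,2,3\}$. Column sums $3$ together with $d=\cellnum$ make $\scalescalar=0$ and $\scalescalar=3$ correspond exactly to $\feasubset=\emptyset$ and $\feasubset=\feaspaceone$. So the all-sets combination that breaks your construction is, in the paper's construction, the whole cell. Meanwhile $\scalescalar=1$ is an exact cover, and $\scalescalar=2$ says the complement is one. The same coordinate computation shows that any observable $\subsetindi_\feasubset$ with $\feasubset\subseteq\feaspacetwo$ is constant, equal to $b-3a$, on $\feaspacetwo$, so that cell has no nontrivial split.

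This explicit characterization of the observable subsets of each cell is the key lemma your proposal lacks. Your preliminary simplification (a YES answer iff some cell of $\givenclassifier_1$ has a nonempty proper observable subset, plus the finiteness argument for an undominated refinement) is correct and matches the paper's argument.
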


\begin{corollary}
\label{cor:hardness predicrefine}
The search problem {\predicrefinelink}, restricted to deterministic input predictors and deterministic output predictors, is $\mathsf{NP}$-hard.
\end{corollary}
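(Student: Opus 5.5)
The plan is a polynomial-time (Turing, in fact many-one-style) reduction from the decision problem {\partdecisionlink} to the search problem {\predicrefinelink} restricted to deterministic inputs and outputs. Combined with \Cref{thm:hardness partdecision}, this gives $\mathsf{NP}$-hardness of the search problem. Throughout, $\partitiondompare$ in the search problem should be read as $\partitiondomparedet$, as stipulated at the start of \Cref{sec:det}.

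Given an instance $(\feaprob,(\givenclassifier_j)_{j\in[\classifiernum]})$ of {\partdecisionlink}, I will feed the same instance to a hypothetical solver for the restricted {\predicrefinelink}. Let $(\classifier_j)_{j\in[\classifiernum]}$ be its output. I then answer ``yes'' if and only if $\partition_{\classifier_1}\neq\partition_{\givenclassifier_1}$.

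\emph{Completeness.} Suppose some $\classifier\in\partitiondomparedet$ satisfies $\classifier\partitiondom\givenclassifier_1$. The first bullet of {\predicrefinelink} then forces $\classifier_1\partitiondom\givenclassifier_1$. By \Cref{lem:det-partition-dom}, $\partition_{\classifier_1}$ strictly refines $\partition_{\givenclassifier_1}$, so in particular the two partitions differ.

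\emph{Soundness.} If no such $\classifier$ exists, the second bullet forces $\classifier_1=\givenclassifier_1$, so the partitions coincide.

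Hence the test correctly decides {\partdecisionlink}.

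It remains to check that the reduction is efficient. The output $\classifier_1$ is a deterministic predictor on the finite feature space $[\feanum]$, so it is described by a partition of $[\feanum]$. Its size is therefore polynomial in $\feanum$, and comparing it with $\partition_{\givenclassifier_1}$ takes polynomial time. The only delicate point is an interpretive one: we must make sure that ``$\classifier_1=\givenclassifier_1$'' is compared at the level of signal partitions rather than numeric predictions. This is exactly the convention fixed in \Cref{rem:det-signal-partition}, so no additional argument is needed there.
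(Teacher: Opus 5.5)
Your proposal is correct and follows essentially the same route as the paper's proof: run the assumed solver on the same {\partdecisionlink} instance and answer according to whether the first output differs from $\givenclassifier_1$, with the two bullets of the {\predicrefinelink} specification giving the two directions. The differences are minor: you compare signal partitions via \Cref{lem:det-partition-dom} rather than the predictors themselves, and since the target is a search problem, the reduction is more accurately called a single-query polynomial-time Turing reduction (as the paper does) than a many-one one.
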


At a high level, the proof of \Cref{thm:hardness partdecision} proceeds by a reduction from the following $\mathsf{NP}$-complete problem, which we denote by {\xthree} (see, e.g., \citealp{GJ-02}).

\begin{tcolorbox}[colback=white,colframe=black,boxrule=0.8pt,arc=0pt]
Exact $3$-cover with multiplicity $3$ ({\xthree}): \\
\textbf{Input:} a set $\xthreeset=\{\element_1,\ldots,\element_\cellnum\}$ and a collection $\xthreecollec=\{\xthreesubset_1,\ldots,\xthreesubset_d\}$ of $3$-element subsets of $\xthreeset$ such that every element $\element_r\in \xthreeset$ belongs to exactly three sets, and $m$ is a multiple of $3$. \\
\textbf{Question:} does there exist an exact cover of $\xthreeset$?
\end{tcolorbox}
The problem {\xthree} encodes the incidence matrix $\xthreeincidencemtx\in\{0,1\}^{\cellnum\times d}$ into the observable linear space $\rowspace(\mymatrix)$. 
The construction creates two feature blocks $\feaspace = \feaspaceone\sqcup \feaspacetwo$, takes the first predictor to have partition $\partition_{\givenclassifier_1}=\{\feaspaceone,\feaspacetwo\}$, and for each set $\xthreecollec_j$ introduces a two-cell predictor built from $R_j=\{\itemone_j\}\cup\{\itemtwo_r:\element_r\in \xthreecollec_j\}$. 
As a result, by \Cref{lem:det-partition-dom} and \Cref{cor:constructible-det}, a strict deterministic improvement of $\givenclassifier_1$ is equivalent to splitting one of the two cells $\feaspaceone$ or $\feaspacetwo$ by a nonempty proper observable subset $\feasubset$, namely a subset satisfying $\subsetindi_\feasubset\in \rowspace(\mymatrix)$. 
The key structural step is then to characterize exactly when such subsets exist: \Cref{lem:cell two} shows that $\feaspacetwo$ admits no nontrivial observable subset, 
while \Cref{lem:cell one} shows that for $\feasubset\subseteq \feaspaceone$ with indicator vector $\vz\in\{0,1\}^d$, the condition $\subsetindi_\feasubset\in \rowspace(\mymatrix)$ is equivalent to $\xthreeincidencemtx \vz=\scalescalar \subsetindi_\cellnum$ for some scalar $\scalescalar$. 
The four possible values $\scalescalar\in\{0,1,2,3\}$ then separate the two trivial cases from the two exact-cover cases, and \Cref{lem:reduce to x3c3} packages this into the equivalence between an exact cover of the {\xthree} instance, the existence of some constructible predictor $\classifier\in \classifierfamdet$ with $\classifier\partitiondom \givenclassifier_1$, and the existence of an undominated one $\classifier^\star\in \partitiondomparedet$ with $\classifier^\star\partitiondom \givenclassifier_1$.

The proof below makes this reduction precise; once \Cref{thm:hardness partdecision} is established, \Cref{cor:hardness predicrefine} follows immediately from the specification of {\predicrefinelink} (its proof is deferred to \Cref{apx:proofs det}).

\begin{proof}[Proof of \Cref{thm:hardness partdecision}]
Let $\xthreeincidencemtx\in\{0,1\}^{\cellnum\times d}$ be the incidence matrix of the {\xthree} instance, where $\xthreeincidencemtx_{r,j}=1$ if and only if $\element_r\in \xthreesubset_j$.
Since every column and every row of $\xthreeincidencemtx$ has sum $3$, double-counting incidences gives
\begin{align*}
    \sum\nolimits_{r\in[\cellnum]}\sum\nolimits_{j\in[d]}\xthreeincidencemtx_{r,j}
    =
    3d
    =
    3\cellnum~,
\end{align*}
and hence $d=\cellnum$.

\xhdr{Feature distribution and initial predictors}
We construct an instance of {\partdecisionlink} as follows.
Create $\feanum=d+\cellnum=2\cellnum$ features and partition the feature space into two blocks
\begin{align*}
    \feaspace
    =
    \feaspaceone \sqcup \feaspacetwo,
    \quad
    \feaspaceone\triangleq \{\itemone_1,\ldots,\itemone_d\},
    \quad
    \feaspacetwo\triangleq \{\itemtwo_1,\ldots,\itemtwo_\cellnum\}~.
\end{align*}
Let the feature distribution be uniform: $\feaprob_i=\sfrac{1}{\feanum}$ for every $i\in[\feanum]$.
We define $\classifiernum=d+1=\cellnum+1$ deterministic input predictors $(\givenclassifier_j)_{j\in[\classifiernum]}$.

\begin{itemize}
    \item
    Predictor $\givenclassifier_1$ has signal partition $\partition_{\givenclassifier_1}
    =
    \{\feaspaceone,\feaspacetwo\}$,
    and predicts $\feasubsetmean(\subsetindi_{\cell})$ on each signal cell $\cell\in \partition_{\givenclassifier_1}$.

    \item
    For each $j\in[d]$, let $R_j
    \triangleq
    \{\itemone_j\}\cup \{\,\itemtwo_r:\ \element_r\in \xthreesubset_j\,\}$, and let the predictor $\givenclassifier_{j+1}$ have signal partition
    $\partition_{\givenclassifier_{j+1}}
    =
    \{R_j, R_j^c\}$
    and predict $\feasubsetmean(\subsetindi_{\cell})$ on each signal cell $\cell\in \partition_{\givenclassifier_{j+1}}$.
\end{itemize}
Finally, enumerate the $\feanum$ features arbitrarily as $\fea_1,\ldots,\fea_\feanum$, and let $\bayesprob_i
\gets
(\feanum+1)^{-i}$ be the Bayes probability
for the $i$-th feature.

\xhdr{Row-space description}
Let $\mymatrix$ be the signal-feature matrix from \Cref{def:signal-feature-matrix} for the constructed input predictors.
Since all input predictors are deterministic, \Cref{rem:det-special-case-M} implies that the rows of $\mymatrix$ are exactly the indicator vectors of the initial signal cells.
Because each signal partition $\partition_{\givenclassifier_{j+1}}$ has signal cells $R_j$ and $R_j^c$, and $\subsetindi_{R_j^c}
=
\subsetindi_{\feaspaceone}
+
\subsetindi_{\feaspacetwo}
-
\subsetindi_{R_j}$,
we obtain
\begin{align}
\label{eq:row matrix contruction}
    \rowspace(\mymatrix)
    =
    \myspan
    \left\{
        \subsetindi_{\feaspaceone},
        \subsetindi_{\feaspacetwo},
        \subsetindi_{R_1},
        \ldots,
        \subsetindi_{R_d}
    \right\}~.
\end{align}
We now characterize the observable subsets inside $\feaspaceone$ and $\feaspacetwo$.

\begin{lemma}
\label{lem:cell one}
Let $\feasubset\subseteq \feaspaceone$, and let $\vz\in\{0,1\}^d$ be the indicator vector of $\feasubset$, namely $\vz_j=1$ if and only if $\itemone_j\in \feasubset$.
Then we have $\subsetindi_{\feasubset}\in \rowspace(\mymatrix)$ if and only if $\xthreeincidencemtx \vz = \scalescalar \one_\cellnum$ 
for some scalar $\scalescalar\in\reals$.
\end{lemma}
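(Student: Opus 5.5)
We will work directly with the description of $\rowspace(\mymatrix)$ in \eqref{eq:row matrix contruction}. A vector in the row space has the form
\begin{align*}
    \signalvec
    =
    a\,\subsetindi_{\feaspaceone}
    +
    b\,\subsetindi_{\feaspacetwo}
    +
    \sum\nolimits_{j\in[d]} c_j\,\subsetindi_{R_j}
\end{align*}
for scalars $a,b,c_1,\dots,c_d$. We compute its coordinates block by block. On $\feaspaceone$, the coordinate at $\itemone_j$ equals $a+c_j$, because $\itemone_j$ lies in $R_j$ and in no other $R_{j'}$. On $\feaspacetwo$, the coordinate at $\itemtwo_r$ equals $b+\sum_{j:\element_r\in\xthreesubset_j} c_j = b + (\xthreeincidencemtx \mathbf{c})_r$, where $\mathbf{c}=(c_1,\dots,c_d)^\top$. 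The whole lemma then reduces to matching these coordinates against those of $\subsetindi_\feasubset$ with $\feasubset\subseteq\feaspaceone$.

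\emph{($\Leftarrow$)} Suppose $\xthreeincidencemtx\vz=\scalescalar\one_\cellnum$. Take $a=0$, $\mathbf{c}=\vz$, and $b=-\scalescalar$. The $\feaspaceone$-coordinates are then $\vz_j$, which is exactly the indicator of $\feasubset$. The $\feaspacetwo$-coordinates are $-\scalescalar+(\xthreeincidencemtx\vz)_r=0$. Hence $\subsetindi_\feasubset\in\rowspace(\mymatrix)$.

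\emph{($\Rightarrow$)} Suppose $\subsetindi_\feasubset=a\subsetindi_{\feaspaceone}+b\subsetindi_{\feaspacetwo}+\sum_j c_j\subsetindi_{R_j}$. Matching $\feaspaceone$-coordinates gives $c_j=\vz_j-a$, that is, $\mathbf{c}=\vz-a\one_d$. Matching $\feaspacetwo$-coordinates gives $b+(\xthreeincidencemtx\mathbf{c})_r=0$ for all $r$. Since every row of $\xthreeincidencemtx$ sums to $3$, we have $\xthreeincidencemtx\one_d=3\one_\cellnum$. Therefore $\xthreeincidencemtx\vz = \xthreeincidencemtx\mathbf{c}+a\xthreeincidencemtx\one_d=(3a-b)\one_\cellnum$, so the claim holds with $\scalescalar=3a-b$.

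The only step needing care, which is mild, is the coordinate bookkeeping. We must use that $\itemone_j$ belongs only to $R_j$, so that the $\feaspaceone$ block pins down each $c_j$, and we must use the row-sum-$3$ property to absorb the $a\one_d$ shift. No linear independence of the spanning vectors is needed, because we argue directly from an arbitrary representation.
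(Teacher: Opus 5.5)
Your proof is correct and follows the paper's argument essentially line for line. You evaluate on $\feaspaceone$ to get $c_j=\vz_j-a$, use the row-sum-$3$ property to conclude $\xthreeincidencemtx\vz=(3a-b)\one_\cellnum$, and for the converse use the same witness $\sum_j \vz_j\subsetindi_{R_j}-\scalescalar\subsetindi_{\feaspacetwo}$.
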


\begin{proof}
Assume first that $\subsetindi_{\feasubset}\in \rowspace(\mymatrix)$.
By Eqn.~\eqref{eq:row matrix contruction}, there exist scalars $a,b,\alpha_1,\ldots,\alpha_d$ such that
\begin{align}
\label{eq:subset identify indi vec}
    \subsetindi_{\feasubset}
    =
    a\subsetindi_{\feaspaceone}
    +
    b\subsetindi_{\feaspacetwo}
    +
    \sum\nolimits_{j\in[d]}\alpha_j \subsetindi_{R_j}~.
\end{align}
Evaluating Eqn.~\eqref{eq:subset identify indi vec} at the feature $\itemone_j$ gives
\begin{align}
\label{eq:feature \itemone_j}
    \vz_j
    =
    a+\alpha_j,
    \quad\text{so}\quad
    \alpha_j = \vz_j-a~.
\end{align}
Now evaluate Eqn.~\eqref{eq:subset identify indi vec} at the feature $\itemtwo_r$.
Since $\itemtwo_r\notin \feaspaceone$, $\itemtwo_r\in \feaspacetwo$, and $\itemtwo_r\in R_j$ if and only if $\element_r\in \xthreesubset_j$, we obtain
\begin{align*}
    0
    =
    b+\sum\nolimits_{j\in[d]}\xthreeincidencemtx_{r,j}\alpha_j~.
\end{align*}
Substituting Eqn.~\eqref{eq:feature \itemone_j} yields
\begin{align*}
    0
    =
    b+\sum\nolimits_{j\in[d]}\xthreeincidencemtx_{r,j}(\vz_j-a) 
    =
    b+(\xthreeincidencemtx\vz)_r-a\sum\nolimits_{j\in[d]}\xthreeincidencemtx_{r,j}~.
\end{align*}
Each row of $\xthreeincidencemtx$ has sum $3$, so $(\xthreeincidencemtx\vz)_r=3a-b$ for every $r\in[\cellnum]$.
Hence $\xthreeincidencemtx\vz=(3a-b)\one_\cellnum$.

Conversely, suppose that $\xthreeincidencemtx \vz=\scalescalar \one_\cellnum$ for some scalar $\scalescalar\in\reals$.
Then
\begin{align*}
    \sum\nolimits_{j\in[d]} \vz_j \subsetindi_{R_j}
    -
    \scalescalar \subsetindi_{\feaspacetwo}
    \in \rowspace(\mymatrix)~.
\end{align*}
At the coordinate corresponding to $\itemone_j$, this vector equals $\vz_j$.
At the coordinate corresponding to $\itemtwo_r$, it equals $(\xthreeincidencemtx\vz)_r-\scalescalar=0$.
Thus the vector is exactly $\subsetindi_{\feasubset}$, so $\subsetindi_{\feasubset}\in \rowspace(\mymatrix)$.
\end{proof}

\begin{lemma}
\label{lem:cell two}
If $\feasubset\subseteq \feaspacetwo$ and $\subsetindi_{\feasubset}\in \rowspace(\mymatrix)$, then either $\feasubset=\emptyset$ or $\feasubset=\feaspacetwo$.
\end{lemma}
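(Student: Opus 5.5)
Following the template of \Cref{lem:cell one}, we start from the row-space description in Eqn.~\eqref{eq:row matrix contruction}. Suppose $\feasubset\subseteq\feaspacetwo$ with $\subsetindi_{\feasubset}\in\rowspace(\mymatrix)$. Then there are scalars $a,b,\alpha_1,\ldots,\alpha_d$ with
\begin{align*}
    \subsetindi_{\feasubset}
    =
    a\subsetindi_{\feaspaceone}
    +
    b\subsetindi_{\feaspacetwo}
    +
    \sum\nolimits_{j\in[d]}\alpha_j\subsetindi_{R_j}~.
\end{align*}
We evaluate this identity coordinatewise, first on $\feaspaceone$ and then on $\feaspacetwo$.

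At a feature $\itemone_j$, the only sets containing it are $\feaspaceone$ and $R_j$. Since $\itemone_j\notin\feasubset$, we get $0=a+\alpha_j$, so $\alpha_j=-a$ for every $j\in[d]$. Substituting this in, the identity becomes
\begin{align*}
    \subsetindi_{\feasubset}=a\subsetindi_{\feaspaceone}+b\subsetindi_{\feaspacetwo}-a\sum\nolimits_{j}\subsetindi_{R_j}~.
\end{align*}
Now evaluate at $\itemtwo_r$. The feature $\itemtwo_r$ lies in exactly those $R_j$ with $\element_r\in\xthreesubset_j$, and there are exactly three such $j$ because every element has multiplicity $3$. Hence
\begin{align*}
    (\subsetindi_{\feasubset})_{\itemtwo_r}=b-3a,
\end{align*}
which does not depend on $r$. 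So $\subsetindi_{\feasubset}$ is constant on $\feaspacetwo$. Since it is a $0/1$ vector, it is either identically $0$ on $\feaspacetwo$ (so $\feasubset=\emptyset$) or identically $1$ (so $\feasubset=\feaspacetwo$).

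Equivalently, this is \Cref{lem:cell one}'s computation with $\vz=\vzero$, i.e.\ the case $\feasubset\cap\feaspaceone=\emptyset$. There is no real obstacle here. The only point to get right is the observation that the constant row sums of $\xthreeincidencemtx$ make the contribution from the $\alpha_j$ uniform across $\feaspacetwo$. The regularity assumption of {\xthree} is used exactly at that step.
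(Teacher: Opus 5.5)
Your proof is correct and follows the paper's argument essentially line for line. Evaluating the span decomposition at each $\itemone_j$ forces $\alpha_j=-a$, and since every row of $\xthreeincidencemtx$ sums to $3$, every $\feaspacetwo$-coordinate equals $b-3a$, so the $0/1$ vector $\subsetindi_{\feasubset}$ is constant there. Your closing remark is only an analogy, not a literal special case of \Cref{lem:cell one}: there the $\feaspacetwo$-coordinates are known to be zero, whereas here they are the unknowns. The proof itself does not rely on that remark.
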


\begin{proof}
We write
\begin{align}
\label{eq:subset identify indi vec 2}
    \subsetindi_{\feasubset}
    =
    a\subsetindi_{\feaspaceone}
    +
    b\subsetindi_{\feaspacetwo}
    +
    \sum\nolimits_{j\in[d]} \alpha_j \subsetindi_{R_j}
\end{align}
for some scalars $a,b,\alpha_1,\ldots,\alpha_d$.
Evaluating Eqn.~\eqref{eq:subset identify indi vec 2} at $\itemone_j$ yields $0=a+\alpha_j$ for every $j$, because $\feasubset\subseteq \feaspacetwo$.
Thus $\alpha_j=-a$ for every $j$.
Evaluating Eqn.~\eqref{eq:subset identify indi vec 2} at $\itemtwo_r$ then gives
\begin{align*}
    (\subsetindi_{\feasubset})_{\itemtwo_r}
    =
    b+\sum\nolimits_{j\in[d]}\xthreeincidencemtx_{r,j}\alpha_j 
    =
    b-a\sum\nolimits_{j\in[d]}\xthreeincidencemtx_{r,j}
    =
    b-3a~.
\end{align*}
Thus, every coordinate of $\subsetindi_{\feasubset}$ on $\feaspacetwo$ has the same value.
Since $\subsetindi_{\feasubset}$ is a $\{0,1\}$-vector, either all these coordinates are $0$ or all are $1$.
Therefore $\feasubset=\emptyset$ or $\feasubset=\feaspacetwo$.
\end{proof}

\begin{lemma}
\label{lem:reduce to x3c3}
The following statements are equivalent:
\begin{enumerate}
    \item the {\xthree} instance admits an exact cover;
    \item there exists a predictor $\classifier\in\classifierfamdet$ such that $\classifier \partitiondom \givenclassifier_1$;
    \item there exists a predictor $\classifier^\star\in \partitiondomparedet$ such that $\classifier^\star \partitiondom \givenclassifier_1$.
\end{enumerate}
\end{lemma}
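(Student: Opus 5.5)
I will prove the cycle $1\Rightarrow 2\Rightarrow 3\Rightarrow 1$, relying throughout on \Cref{lem:det-partition-dom}, \Cref{cor:constructible-det}, \Cref{lem:cell one} and \Cref{lem:cell two}.

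\emph{Step $1\Rightarrow 2$.} Let $J\subseteq[d]$ index an exact cover, and let $\vz$ be its indicator vector. Then $\xthreeincidencemtx\vz=\one_\cellnum$. Put $\feasubset=\{\itemone_j:j\in J\}$. By \Cref{lem:cell one}, $\subsetindi_\feasubset\in\rowspace(\mymatrix)$. The vector $\subsetindi_{\feaspaceone\setminus\feasubset}=\subsetindi_{\feaspaceone}-\subsetindi_\feasubset$ is then also in the row space. Both sets are nonempty, because $|J|=\cellnum/3$ lies strictly between $0$ and $d=\cellnum$. Hence the partition $\{\feasubset,\feaspaceone\setminus\feasubset,\feaspacetwo\}$ defines a deterministic predictor $\classifier\in\classifierfamdet$ by \Cref{cor:constructible-det}. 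It strictly refines $\partition_{\givenclassifier_1}$, so $\classifier\partitiondom\givenclassifier_1$ by \Cref{lem:det-partition-dom}.

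\emph{Step $2\Rightarrow 3$.} Partition refinement is a strict partial order on the finite set of partitions of $[\feanum]$. By \Cref{lem:det-partition-dom}, the dominance relation $\partitiondom$ on $\classifierfamdet$ is exactly this order. Starting from $\classifier$, I repeatedly replace the current predictor by a dominating one in $\classifierfamdet$ while one exists. The number of cells strictly increases at each replacement, so the process terminates at some $\classifier^\star\in\partitiondomparedet$. By transitivity of strict refinement, $\classifier^\star\partitiondom\givenclassifier_1$. The only point to check is that, under \Cref{rem:det-signal-partition}, dominance among deterministic predictors really is just partition refinement; this is exactly what \Cref{lem:det-partition-dom} states.

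\emph{Step $3\Rightarrow 1$.} This is the main step. Since $\partition_{\classifier^\star}$ strictly refines $\{\feaspaceone,\feaspacetwo\}$, some cell $\cell$ of $\classifier^\star$ is a nonempty proper subset of $\feaspaceone$ or of $\feaspacetwo$, and $\subsetindi_\cell\in\rowspace(\mymatrix)$ by \Cref{cor:constructible-det}. \Cref{lem:cell two} excludes $\cell\subsetneq\feaspacetwo$, so $\emptyset\ne\cell\subsetneq\feaspaceone$.

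Let $\vz$ be the indicator vector of $\cell$. \Cref{lem:cell one} gives $\xthreeincidencemtx\vz=\scalescalar\one_\cellnum$. Each row of $\xthreeincidencemtx$ is $0/1$ with exactly three ones, so $\scalescalar\in\{0,1,2,3\}$. Summing over rows and using column sums $3$ gives $3|\cell|=\scalescalar\cellnum$, and we consider each case:
\begin{itemize}
\item If $\scalescalar=0$, then $|\cell|=0$, contradicting that $\cell$ is nonempty.
\item If $\scalescalar=3$, then $|\cell|=\cellnum=d$, so $\cell=\feaspaceone$, contradicting properness.
\item If $\scalescalar=1$, then $\vz$ indexes sets covering every element exactly once, which is an exact cover.
\item If $\scalescalar=2$, then $\one_d-\vz$ satisfies $\xthreeincidencemtx(\one_d-\vz)=(3-2)\one_\cellnum=\one_\cellnum$, since every row sum is $3$. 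So the complement indexes an exact cover.
\end{itemize}
In every admissible case the {\xthree} instance has an exact cover, which closes the cycle.
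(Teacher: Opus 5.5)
Your proof is correct and takes essentially the same route as the paper. It uses the same reduction via \Cref{lem:cell two} and \Cref{lem:cell one} to $\xthreeincidencemtx\vz=\scalescalar\one_\cellnum$, the same four-case analysis on $\scalescalar$, and the same finiteness argument (the number of cells strictly increases) to pass from a dominating constructible predictor to an undominated one. The differences are cosmetic: you arrange the implications as a cycle $1\Rightarrow 2\Rightarrow 3\Rightarrow 1$ instead of proving $(1)\Leftrightarrow(2)$ and $(2)\Leftrightarrow(3)$, and you check explicitly via $|J|=\cellnum/3$ that an exact cover yields a nonempty proper subset of $\feaspaceone$, a point the paper leaves implicit.
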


\begin{proof}
We first prove the equivalence between (1) and (2).

Suppose that there exists a predictor $\classifier\in \classifierfamdet$ such that $\classifier \partitiondom \givenclassifier_1$.
By the deterministic special case of \wtdelete{ partition} dominance, the signal partition $\partition_{\classifier}$ strictly refines $\partition_{\givenclassifier_1}=\{\feaspaceone,\feaspacetwo\}$.
Therefore at least one of the two cells $\feaspaceone$ or $\feaspacetwo$ contains a nonempty proper subset that is a signal cell of $\classifier$.
Because $\classifier\in\classifierfamdet$, every signal cell of $\classifier$ is observable by \Cref{cor:constructible-det}.
Thus at least one of $\feaspaceone$ or $\feaspacetwo$ contains a nonempty proper observable subset.

Conversely, suppose that one of the two cells in $\partition_{\givenclassifier_1}$ contains a nonempty proper observable subset.
That is, suppose there exist $\cell\in\{\feaspaceone,\feaspacetwo\}$ and a nonempty proper subset $\feasubset\subsetneq \cell$ such that $\subsetindi_{\feasubset}\in \rowspace(\mymatrix)$.
Since $\cell$ is itself a signal cell of $\givenclassifier_1$, we also have $\subsetindi_{\cell}\in \rowspace(\mymatrix)$.
Thus, we have
\begin{align*}
    \subsetindi_{\cell\setminus \feasubset}
    =
    \subsetindi_{\cell}-\subsetindi_{\feasubset}
    \in \rowspace(\mymatrix)~.
\end{align*}
Thus the partition obtained from $\{\feaspaceone,\feaspacetwo\}$ by replacing $\cell$ with the two cells $\feasubset$ and $\cell\setminus\feasubset$ has all signal-cell indicators in $\rowspace(\mymatrix)$.
By \Cref{cor:constructible-det}, this partition defines a predictor $\classifier\in \classifierfamdet$.
Its signal partition strictly refines $\partition_{\givenclassifier_1}$, so the deterministic special case of \wtdelete{ partition} dominance implies $\classifier \partitiondom \givenclassifier_1$.

We have shown that statement (2) is equivalent to the existence of a nonempty proper observable subset inside $\feaspaceone$ or $\feaspacetwo$.
By \Cref{lem:cell two}, the cell $\feaspacetwo$ has no nonempty proper observable subset.
Therefore statement (2) holds if and only if there exists a nonempty proper subset $\feasubset\subsetneq \feaspaceone$ such that $\subsetindi_{\feasubset}\in \rowspace(\mymatrix)$.

Now apply \Cref{lem:cell one}.
Statement (2) is equivalent to the existence of a vector $\vz\in\{0,1\}^d$ that is neither $\vzero$ nor $\one_d$ and satisfies
$\xthreeincidencemtx \vz = \scalescalar \one_\cellnum$
for some scalar $\scalescalar\in\reals$.
Because each row of $\xthreeincidencemtx$ has exactly three $1$'s and $\vz\in\{0,1\}^d$, every entry of $\xthreeincidencemtx \vz$ belongs to $\{0,1,2,3\}$.
Thus, we have $\scalescalar\in\{0,1,2,3\}$.
We now analyze the following four cases.
\begin{itemize}
    \item
    \emph{Case $\scalescalar=0$.}
    Then $\xthreeincidencemtx \vz=\vzero$.
    Multiplying on the left by $\one_\cellnum^\top$ gives
    \begin{align*}
        0
        =
        \one_\cellnum^\top \xthreeincidencemtx \vz
        =
        3\,\one_d^\top \vz~,
    \end{align*}
    because every column of $\xthreeincidencemtx$ has sum $3$.
    Hence $\one_d^\top \vz=0$, so $\vz=\vzero$, contradicting the fact that $\feasubset$ is nonempty.

    \item
    \emph{Case $\scalescalar=3$.}
    Then $\xthreeincidencemtx \vz=3\one_\cellnum$.
    Multiplying on the left by $\one_\cellnum^\top$ gives
    \begin{align*}
        3m
        =
        \one_\cellnum^\top \xthreeincidencemtx \vz
        =
        3\,\one_d^\top \vz~.
    \end{align*}
    Since $d=m$, we obtain $\one_d^\top \vz=d$.
    As $\vz\in\{0,1\}^d$, this implies $\vz=\one_d$, contradicting the fact that $\feasubset$ is a proper subset of $\feaspaceone$.

    \item
    \emph{Case $\scalescalar=1$.}
    Then $\xthreeincidencemtx \vz=\one_\cellnum$.
    Equivalently, every element of $\xthreeset$ belongs to exactly one selected set.
    Hence the selected family $\{\xthreesubset_j:\ \vz_j=1\}$
    is an exact cover of $\xthreeset$.

    \item
    \emph{Case $\scalescalar=2$.}
    Then
    \begin{align*}
        \xthreeincidencemtx(\one_d-\vz)
        =
        \xthreeincidencemtx\one_d-\xthreeincidencemtx\vz
        =
        3\one_\cellnum-2\one_\cellnum
        =
        \one_\cellnum~.
    \end{align*}
    Thus, the complementary family $\{\xthreesubset_j:\ \vz_j=0\}$
    is an exact cover of $\xthreeset$.
\end{itemize}

We have thus shown that (1) and (2) are equivalent.

It remains to prove the equivalence between (2) and (3).
The implication $(3)\Rightarrow(2)$ is immediate.
For the converse, assume that there exists $\classifier\in\classifierfamdet$ such that $\classifier \partitiondom \givenclassifier_1$.
Define
\begin{align*}
    \classifierfamdet_\classifier
    \triangleq
    \left\{
        \classifiernew\in\classifierfamdet:\ \partition_\classifiernew \text{ refines } \partition_{\classifier}
    \right\}~.
\end{align*}
By \Cref{cor:constructible-det}, every predictor in $\classifierfamdet$ is uniquely determined by its signal partition, and there are only finitely many set partitions of $[\feanum]$.
Hence $\classifierfamdet_\classifier$ is finite.
Choose $\classifier^\star\in\classifierfamdet_\classifier$ with the maximum possible number of signal cells.

We claim that $\classifier^\star\in\partitiondomparedet$.
If not, there exists some $\classifiernew\in\classifierfamdet$ such that $\classifiernew \partitiondom \classifier^\star$.
By the deterministic special case of \wtdelete{ partition} dominance, $\partition_{\classifiernew}$ strictly refines $\partition_{\classifier^\star}$.
Since $\partition_{\classifier^\star}$ refines $\partition_{\classifier}$, it follows that $\partition_{\classifiernew}$ also refines $\partition_{\classifier}$, so $\classifiernew\in\classifierfamdet_\classifier$.
But strict refinement increases the number of signal cells, contradicting the maximal choice of $\classifier^\star$.
Thus $\classifier^\star\in\partitiondomparedet$.

Finally, since $\partition_{\classifier^\star}$ refines $\partition_{\classifier}$ and $\partition_{\classifier}$ strictly refines $\partition_{\givenclassifier_1}$, the partition $\partition_{\classifier^\star}$ also strictly refines $\partition_{\givenclassifier_1}$.
Thus, again by the deterministic special case of \wtdelete{ partition} dominance, we have $\classifier^\star \partitiondom \givenclassifier_1$.
This proves $(2)\Rightarrow(3)$, and completes the proof.
\end{proof}

It remains to verify that the reduction has polynomial output size.
The combinatorial part is immediate: $\feanum=d+m=2m$, $\classifiernum=d+1=m+1$, and every input predictor has exactly two signal cells.
It remains to bound the encoding length of the numerical predictions of the input predictors.
Fix any initial signal cell $\cell$.
Because $\feaprob_i=\sfrac{1}{\feanum}$ for every $i$,
\begin{align*}
    \feasubsetmean(\subsetindi_{\cell})
    =
    \frac{\feasubsetlabel(\subsetindi_{\cell})}{\feaprob(\subsetindi_{\cell})}
    =
    \frac{\sum\nolimits_{i\in\cell}\feaprob_i \bayesprob_i}{\sum\nolimits_{i\in\cell}\feaprob_i} 
    =
    \frac{1}{|\cell|}
    \sum\nolimits_{i\in\cell} (\feanum+1)^{-i}
    =
    \frac{1}{|\cell|(\feanum+1)^\feanum}
    \sum\nolimits_{i\in\cell} (\feanum+1)^{\feanum-i}~.
\end{align*}
Thus $\feasubsetmean(\subsetindi_{\cell})$ is a rational number whose numerator and denominator are both at most $\feanum(\feanum+1)^\feanum$.
Thus, each such value has bit-length $O(\feanum\log \feanum)$.
Therefore the whole constructed instance of {\partdecisionlink} has polynomial size.

By \Cref{lem:reduce to x3c3}, the {\xthree} instance is a YES-instance if and only if the constructed instance of {\partdecisionlink} is a YES-instance.
Therefore {\xthree} reduces in polynomial time to {\partdecisionlink}.
Since {\xthree} is $\mathsf{NP}$-complete, the problem {\partdecisionlink} is $\mathsf{NP}$-hard.
\end{proof}

\subsection{Efficient Algorithm for Solving {\predicrefinelink} with $\classifiernum = 2$}
\label{subsec:efficient}

In this subsection, we show that there exists a polynomial-time algorithm that can solve the deterministic version of  problem {\predicrefinelink} when the number of input predictors $\classifiernum = 2$.

\begin{theorem}
\label{thm:det-k2-polytime}
Assume $\classifiernum=2$.
Then there exists a polynomial-time algorithm (Algorithms~\ref{alg:det-k2-finest} and~\ref{alg:det-k2-solve}) that solves the deterministic version of {\predicrefinelink}.\wtdelete{ under informational dominance}
\end{theorem}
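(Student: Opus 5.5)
The plan is to give an explicit combinatorial description of the observable subsets lying inside a single input cell, and then to split each input cell as finely as that description allows. Let $\partition_{\givenclassifier_1}$ and $\partition_{\givenclassifier_2}$ be the two signal partitions. I will build the bipartite intersection graph $\overlapgraph$. Its vertex set is $\partition_{\givenclassifier_1}\sqcup\partition_{\givenclassifier_2}$, and there is an edge $\{\cell,\cell'\}$ exactly when the \emph{atom} $\cell\cap\cell'$ is nonempty. The atoms partition $[\feanum]$, and each atom corresponds to exactly one edge. By \Cref{rem:det-special-case-M}, $\rowspace(\mymatrix)=\myspan\{\subsetindi_{\cell}:\cell\in\partition_{\givenclassifier_1}\cup\partition_{\givenclassifier_2}\}$. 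So a vector lies in the row space iff it has the form $\sum_{\cell} a_{\cell}\subsetindi_{\cell}$, and such a vector is constant on every atom, taking the value $a_{\cell}+a_{\cell'}$ on the atom of edge $\{\cell,\cell'\}$. Every observable set is therefore a union of atoms, which I encode as a $0/1$ edge labeling $x_e=a_{\cell}+a_{\cell'}$ of $\overlapgraph$.

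\textbf{Key structural lemma.} Fix a cell $\cell$, viewed as a vertex $u$ of $\overlapgraph$, and let $\feasubset\subseteq\cell$. I claim that $\subsetindi_{\feasubset}\in\rowspace(\mymatrix)$ iff $\feasubset$ is a union, over some set of connected components $\connectedcomp$ of $\overlapgraph-u$, of the atoms on the edges between $u$ and $\connectedcomp$.

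For necessity, all edges not incident to $u$ carry label $0$. Inside a component $\connectedcomp$ of $\overlapgraph-u$, the relation $a_w+a_v=0$ holds along every edge. Propagating this along paths (using bipartiteness) shows that all vertices of $\connectedcomp$ on the side opposite to $u$ share a common coefficient $t_{\connectedcomp}$. Every edge from $u$ into $\connectedcomp$ then has the same label $a_u+t_{\connectedcomp}$, so $\feasubset$ contains either all or none of the atoms between $u$ and $\connectedcomp$.

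For sufficiency, take a chosen component $\connectedcomp$ and build coefficients explicitly. Set $a_u=1$. Give coefficient $0$ to all vertices of $\connectedcomp$. For every other component $\connectedcomp'$ of $\overlapgraph-u$ adjacent to $u$, give $-1$ to the vertices of $\connectedcomp'$ on the side opposite $u$ and $+1$ to those on $u$'s side. Give $0$ to vertices in other connected components of $\overlapgraph$. A direct check shows the resulting vector is $1$ exactly on the atoms between $u$ and $\connectedcomp$ and $0$ elsewhere. Summing these vectors over several components handles unions.

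\textbf{Algorithm and correctness.} For $j=1,2$, replace each cell $\cell\in\partition_{\givenclassifier_j}$ by the pieces $P_{\cell,\connectedcomp}$, one for each component $\connectedcomp$ of $\overlapgraph-u$ adjacent to $u$, where $P_{\cell,\connectedcomp}$ is the union of the atoms on edges between $u$ and $\connectedcomp$. These pieces partition $\cell$, since every atom inside $\cell$ is an edge at $u$. Call the result $\classifier_j$.

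By the lemma and \Cref{cor:constructible-det}, $\classifier_j\in\classifierfamdet$. By \Cref{lem:det-partition-dom}, $\classifier_j\partitiondom\givenclassifier_j$ iff some $u$ on side $j$ is a cut vertex of its component, i.e., some cell actually splits. If no such vertex exists, the lemma shows that no cell of $\givenclassifier_j$ has a nonempty proper observable subset. Hence no $\classifier\in\classifierfamdet$ dominates $\givenclassifier_j$, and outputting $\classifier_j=\givenclassifier_j$ is correct.

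For undominatedness, suppose $\classifiernew\in\classifierfamdet$ satisfies $\classifiernew\partitiondom\classifier_j$. By \Cref{lem:det-partition-dom}, some cell of $\classifiernew$ is a nonempty proper observable subset of some piece $P_{\cell,\connectedcomp}$. That subset lies inside $\cell$, so the lemma says it must be a union of whole pieces of $\cell$, which contradicts being a proper subset of a single piece.

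\textbf{Running time.} Building $\overlapgraph$, computing the atoms, and finding the components of $\overlapgraph-u$ for each vertex all take polynomial time; alternatively one can compute the articulation points and blocks. The calibrated predictions $\feasubsetmean(P_{\cell,\connectedcomp})$ are obtained by solving the linear system $\mymatrix\weightedbayes=\weightedpredic$ through the explicit coefficients above.

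\textbf{Main obstacle.} I expect the main difficulty to be the necessity half of the lemma, namely showing that the coefficient propagation inside each component of $\overlapgraph-u$ is forced, so that nothing finer than these component pieces is observable. This is exactly where $\classifiernum=2$ is used: with two partitions, every atom is a single edge, and a row-space vector is just a vertex potential on $\overlapgraph$. With three or more partitions, atoms become hyperedges and this propagation argument fails, consistent with \Cref{thm:hardness partdecision}.
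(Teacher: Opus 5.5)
Your proposal is correct and follows essentially the same route as the paper's proof: your pieces $P_{A,K}$ are exactly the paper's branches $A_K$ built from the components of $G-A$, your necessity argument is the same edge-by-edge coefficient propagation as Lemma~\ref{lem:det-k2-branch}(iii), and your undominatedness and ``no split implies no refinement'' arguments match the paper's finest-refinement argument. The only cosmetic difference is your sufficiency certificate, which is the complement form $\mathbf{1}_{A}-\sum_{K'\neq K}\mathbf{1}_{A_{K'}}$ of the paper's identity~\eqref{eq:det-k2-branch-indicator}; also, for the predictions you do not need to ``solve'' $\mathbf{M}\mathbf{y}=\mathbf{b}$, only to evaluate $\alpha^\top\mathbf{b}$ for your explicit coefficient vector $\alpha$.
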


We write $\partition_{\givenclassifier_1}
=
\{\cell_{1,\ell}\}_{\ell\in[\cellnum_1]},
\partition_{\givenclassifier_2}
=
\{\cell_{2,r}\}_{r\in[\cellnum_2]}$
for the signal partitions of the two deterministic input predictors $\givenclassifier_1$ and $\givenclassifier_2$.
By \Cref{rem:det-special-case-M}, the rows of the signal-feature matrix  $\mymatrix$ are exactly the indicator vectors $\subsetindi_{\cell_{1,\ell}}$ for $\ell\in[\cellnum_1]$ and 
$\subsetindi_{\cell_{2,r}}$ for $r\in[\cellnum_2]$.
For notational simplicity, for every input signal cell $\cell\in \partition_{\givenclassifier_1}\cup \partition_{\givenclassifier_2}$, define its observable cell label mass by $\inputcelllabel(\cell)
\triangleq
\feasubsetlabel(\subsetindi_{\cell})
=
\feasubsetmean(\subsetindi_{\cell})\,\feaprob(\subsetindi_{\cell})$
which is determined by the observed input predictor and the feature distribution.

Define the overlap graph $\overlapgraph$ as the bipartite graph with vertex set
\begin{align*}
    \vertexset(\overlapgraph)
    \triangleq
    \partition_{\givenclassifier_1}\sqcup \partition_{\givenclassifier_2}~,
\end{align*}
and with an edge between $\cell_{1,\ell}$ and $\cell_{2,r}$ if and only if $\cell_{1,\ell}\cap \cell_{2,r}\neq \emptyset$ (see \Cref{fig:algo1-left} for an illustration).

At a high level, the proof of \Cref{thm:det-k2-polytime} exploits the fact that when $\classifiernum=2$, the observable structure inside a fixed input cell $\cell$ is completely controlled by the overlap graph $\overlapgraph$ between the signal partitions $\partition_{\givenclassifier_1}$ and $\partition_{\givenclassifier_2}$. 
For each cell $\cell$ on one side, we remove the vertex $\cell$ from the overlap graph $\overlapgraph$ and use the connected components $\connectedcomp$ of $\overlapgraph-\cell$ to decompose the cell $\cell$ into its $\connectedcomp$-branches $\cell_\connectedcomp$ (see \Cref{fig:algo1-right} for an illustration). 
\Cref{lem:det-k2-branch} shows that these branch pieces are exactly the right atomic objects for refinement: part (i) shows that they partition the cell $\cell$, part (ii) shows that each $\cell_\connectedcomp$ is observable and that its label mass can be computed from the observable cell label masses $\inputcelllabel(\cdot)$, and part (iii) shows that every observable subset $\feasubset\subseteq \cell$ must be a union of such branch pieces. 
Consequently, if we refine every cell $\cell\in \partition_{\givenclassifier_j}$ into all of its branch pieces, we obtain a canonical partition, which we denote by $\refinedpartition_j$, as the finest constructible deterministic refinement of $\partition_{\givenclassifier_j}$. 
\Cref{lem:det-partition-dom} then converts this structural statement into the desired dominance conclusion: if $\refinedpartition_j \neq \partition_{\givenclassifier_j}$, the associated predictor $\classifier_j^\star$ strictly informationally-dominates $\givenclassifier_j$; if $\refinedpartition_j=\partition_{\givenclassifier_j}$, then no constructible deterministic predictor can strictly informationally-dominate $\givenclassifier_j$. \Cref{alg:det-k2-finest,alg:det-k2-solve} simply implement this canonical refinement and test which of these two cases occurs (see \Cref{fig:overlap-graph-algorithm-one} for an example and illustration).

\begin{figure}[h]
\centering
\begin{subfigure}[t]{0.47\textwidth}
\centering
\begin{tikzpicture}[
    scale=0.95,
    every node/.style={transform shape},
    cellblue/.style={
        circle,
        draw=blue!70!black,
        fill=blue!50,
        thick,
        minimum size=9mm,
        inner sep=0pt,
        font=\large\bfseries\color{black}
    },
    cellorange/.style={
        circle,
        draw=orange!70!black,
        fill=orange!75,
        thick,
        minimum size=9mm,
        inner sep=0pt,
        font=\large\bfseries\color{black}
    },
    edge/.style={draw=black, thick},
    setlabel/.style={font=\small}
]

\node[cellblue] (A1) at (0,2.15) {\(\cell_1\)};
\node[cellblue] (A2) at (0,0.40) {\(\cell_2\)};
\node[cellblue] (A3) at (0,-1.35) {\(\cell_3\)};

\node[cellorange] (B1) at (3.2,2.15) {\(B_1\)};
\node[cellorange] (B2) at (3.2,0.40) {\(B_2\)};
\node[cellorange] (B3) at (3.2,-1.35) {\(B_3\)};

\draw[edge] (A1) -- (B1);
\draw[edge] (A1) -- (B2);
\draw[edge] (A1) -- (B3);
\draw[edge] (A2) -- (B1);
\draw[edge] (A3) -- (B3);

\node[setlabel, left=4mm of A1] {\(\{\fea_1,\fea_2,\fea_3,\fea_4\}\)};
\node[setlabel, left=4mm of A2] {\(\{\fea_5\}\)};
\node[setlabel, left=4mm of A3] {\(\{\fea_6\}\)};

\node[setlabel, right=4mm of B1] {\(\{\fea_1,\fea_5\}\)};
\node[setlabel, right=4mm of B2] {\(\{\fea_2,\fea_3\}\)};
\node[setlabel, right=4mm of B3] {\(\{\fea_4,\fea_6\}\)};


\end{tikzpicture}
\caption{Original overlap graph \(\overlapgraph\) (edge iff the two cells intersect).}
\label{fig:algo1-left}
\end{subfigure}
\hfill
\begin{subfigure}[t]{0.47\textwidth}
\centering
\begin{tikzpicture}[
    scale=0.95,
    every node/.style={transform shape},
    cellblue/.style={
        circle,
        draw=blue!70!black,
        fill=blue!50,
        thick,
        minimum size=9mm,
        inner sep=0pt,
        font=\large\bfseries\color{black}
    },
    cellorange/.style={
        circle,
        draw=orange!70!black,
        fill=orange!75,
        thick,
        minimum size=9mm,
        inner sep=0pt,
        font=\large\bfseries\color{black}
    },
    removed/.style={
        circle,
        draw=black,
        dashed,
        thick,
        minimum size=9mm,
        inner sep=0pt,
        font=\large\bfseries
    },
    edge/.style={draw=black, thick},
    ghostedge/.style={draw=black, thick, dashed},
    compbox/.style={
        draw=blue!80,
        dashed,
        very thick,
        rounded corners=12pt,
        inner sep=5pt
    },
    klabel/.style={font=\small, anchor=west}
]

\node[removed] (A1r) at (0,0) {\(\cell_1\)};

\node[cellblue] (A2r) at (2.0,1.65) {\(\cell_2\)};
\node[cellorange] (B1r) at (3.25,1.65) {\(B_1\)};
\draw[edge] (A2r) -- (B1r);

\node[cellorange] (B2r) at (3.25,0.00) {\(B_2\)};

\node[cellblue] (A3r) at (2.0,-1.65) {\(\cell_3\)};
\node[cellorange] (B3r) at (3.25,-1.65) {\(B_3\)};
\draw[edge] (A3r) -- (B3r);

\draw[ghostedge] (A1r) -- (B1r);
\draw[ghostedge] (A1r) -- (B2r);
\draw[ghostedge] (A1r) -- (B3r);

\node[compbox, fit=(A2r)(B1r)] {};
\node[compbox, fit=(B2r)] {};
\node[compbox, fit=(A3r)(B3r)] {};

\node[klabel] at (4.65,1.65) {\(\connectedcomp_1=\{\cell_2,B_1\}\)};
\node[klabel] at (4.65,0.00) {\(\connectedcomp_2=\{B_2\}\)};
\node[klabel] at (4.65,-1.65) {\(\connectedcomp_3=\{\cell_3,B_3\}\)};

\end{tikzpicture}
\caption{Connected components of \(\overlapgraph-\cell_1\).}
\label{fig:algo1-right}
\end{subfigure}
\caption{
Illustration of \Cref{alg:det-k2-finest}. 
In this example, the two deterministic predictors \(\givenclassifier_1\) and \(\givenclassifier_2\) induce signal partitions over the feature set. We write each signal cell by the feature indices it contains:
\(\partition_{\givenclassifier_1}=\{\cell_1,\cell_2,\cell_3\}\), where
\(\cell_1=\{\fea_1,\fea_2,\fea_3,\fea_4\}\),
\(\cell_2=\{\fea_5\}\), and
\(\cell_3=\{\fea_6\}\); and
\(\partition_{\givenclassifier_2}=\{B_1,B_2,B_3\}\), where
\(B_1=\{\fea_1,\fea_5\}\),
\(B_2=\{\fea_2,\fea_3\}\), and
\(B_3=\{\fea_4,\fea_6\}\).
\Cref{fig:algo1-left} shows the overlap graph \(\overlapgraph\), whose vertices are the signal cells of \(\givenclassifier_1\) and \(\givenclassifier_2\), and where an edge indicates a nonempty intersection between two signal cells. The cell \(\cell_1\) is an articulation vertex.
\Cref{fig:algo1-right} shows that \(\overlapgraph-\cell_1\) has three connected components:
\(\connectedcomp_1=\{\cell_2,B_1\}\),
\(\connectedcomp_2=\{B_2\}\), and
\(\connectedcomp_3=\{\cell_3,B_3\}\).
By \Cref{alg:det-k2-finest}, the cell \(\cell_1\) is therefore split into the branch pieces
\(\cell_1^{\connectedcomp_1}=\cell_1\cap B_1=\{\fea_1\}\),
\(\cell_1^{\connectedcomp_2}=\cell_1\cap B_2=\{\fea_2,\fea_3\}\), and
\(\cell_1^{\connectedcomp_3}=\cell_1\cap B_3=\{\fea_4\}\).
The remaining signal cells \(\cell_2=\{\fea_5\}\) and \(\cell_3=\{\fea_6\}\) are unchanged. Hence the finest constructible deterministic refinement of \(\givenclassifier_1\) is $\widehat{\partition}_1
=
\bigl\{\{\fea_1\},\{\fea_2,\fea_3\},\{\fea_4\},\{\fea_5\},\{\fea_6\}\bigr\}$.
}
\label{fig:overlap-graph-algorithm-one}
\end{figure}

\begin{lemma}[Branch decomposition of one input cell]
\label{lem:det-k2-branch}
Fix a cell $\cell\in \partition_{\givenclassifier_1}$.
For each connected component $\connectedcomp$ of $\overlapgraph-\cell$, define the $\connectedcomp$-branch of the cell $\cell$ as follows:
\begin{align*}
    \cell_{\connectedcomp}
    \triangleq
    \cell \cap
    \left(
        \bigcup\nolimits_{\cell'\in \vertexset(\connectedcomp)\cap \partition_{\givenclassifier_2}}
        \cell'
    \right)~.
\end{align*}
Then the following statements hold.
\begin{enumerate}
    \item[(i)]
    The nonempty sets $\cell_{\connectedcomp}$ form a partition of $\cell$.

    \item[(ii)]
    For every connected component $\connectedcomp$ of $\overlapgraph-\cell$,
    \begin{align}
    \label{eq:det-k2-branch-indicator}
        \subsetindi_{\cell_{\connectedcomp}}
        =
        \sum\nolimits_{\cell'\in \vertexset(\connectedcomp)\cap \partition_{\givenclassifier_2}}
        \subsetindi_{\cell'}
        -
        \sum\nolimits_{\cell\primed\in \vertexset(\connectedcomp)\cap \partition_{\givenclassifier_1}}
        \subsetindi_{\cell\primed}~.
    \end{align}
    Consequently, $\subsetindi_{\cell_{\connectedcomp}}\in \rowspace(\mymatrix)$, so $\cell_{\connectedcomp}$ is observable by \Cref{cor:identifiable-subset}. Moreover, its observable cell label mass is given by
    \begin{align}
    \label{eq:det-k2-branch-label}
        \feasubsetlabel(\subsetindi_{\cell_{\connectedcomp}})
        =
        \sum\nolimits_{\cell'\in \vertexset(\connectedcomp)\cap \partition_{\givenclassifier_2}}
        \inputcelllabel(\cell')
        -
        \sum\nolimits_{\cell\primed\in \vertexset(\connectedcomp)\cap \partition_{\givenclassifier_1}}
        \inputcelllabel(\cell\primed)~.
    \end{align}

    \item[(iii)]
    If $\feasubset\subseteq \cell$ and $\subsetindi_{\feasubset}\in \rowspace(\mymatrix)$, then there exists a subcollection $\mathcal{T}$ of the connected components of $\overlapgraph-\cell$ such that $\feasubset
    =
    \bigcup\nolimits_{\connectedcomp\in\mathcal{T}} \cell_{\connectedcomp}$.
\end{enumerate}
The completely symmetric statements hold after interchanging the roles of $\partition_{\givenclassifier_1}$ and $\partition_{\givenclassifier_2}$.
\end{lemma}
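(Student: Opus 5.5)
For part (i), every $x\in\cell$ lies in exactly one cell $\cell'\in\partition_{\givenclassifier_2}$ because $\partition_{\givenclassifier_2}$ partitions $[\feanum]$. That cell intersects $\cell$, so it is a vertex of $\overlapgraph-\cell$ and lies in exactly one connected component $\connectedcomp$. Hence $x$ belongs to exactly one branch $\cell_\connectedcomp$, and the nonempty branches partition $\cell$.

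For part (ii), I will check Eqn.~\eqref{eq:det-k2-branch-indicator} coordinatewise. Write $\mathcal{B}=\vertexset(\connectedcomp)\cap\partition_{\givenclassifier_2}$ and $\mathcal{A}=\vertexset(\connectedcomp)\cap\partition_{\givenclassifier_1}$; note that $\cell\notin\mathcal{A}$. Fix a feature $x$, and let $\cell^x_1\in\partition_{\givenclassifier_1}$ and $\cell^x_2\in\partition_{\givenclassifier_2}$ be the unique cells containing it. These two cells intersect, so they are adjacent in $\overlapgraph$. The right-hand side at $x$ equals $\indicator{\cell^x_2\in\mathcal{B}}-\indicator{\cell^x_1\in\mathcal{A}}$, and I split into two cases.
\begin{itemize}
\item If $\cell^x_1=\cell$, the second term is $0$, and the first term is $1$ iff $x\in\cell_\connectedcomp$. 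This matches the left-hand side.
\item If $\cell^x_1\neq\cell$, then $x\notin\cell$, so the left-hand side is $0$. The edge $\cell^x_1\cell^x_2$ survives in $\overlapgraph-\cell$, so either both endpoints lie in $\connectedcomp$ or neither does. In both cases the right-hand side is $0$.
\end{itemize}
Observability then follows from \Cref{cor:identifiable-subset}. Eqn.~\eqref{eq:det-k2-branch-label} follows by taking the inner product with $\weightedbayes$ and using linearity together with $\feasubsetlabel(\subsetindi_{\cell'})=\inputcelllabel(\cell')$.

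Part (iii) is the main step. Suppose $\subsetindi_\feasubset=\sum_{\ell}a_\ell\subsetindi_{\cell_{1,\ell}}+\sum_r b_r\subsetindi_{\cell_{2,r}}$. Every feature $x$ lies in exactly one cell of each partition, so $\subsetindi_\feasubset(x)=a_{\ell(x)}+b_{r(x)}$. Equivalently, each edge $e=(\cell_{1,\ell},\cell_{2,r})$ of $\overlapgraph$ has value $a_\ell+b_r$ on every feature of $\cell_{1,\ell}\cap\cell_{2,r}$. Hence all features in one intersection are either all in $\feasubset$ or all out of it.

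For edges avoiding $\cell$, the value $a_\ell+b_r$ must be $0$, since $\feasubset\subseteq\cell$. I then propagate along paths in a component $\connectedcomp$ of $\overlapgraph-\cell$. Each edge forces $b_r=-a_\ell$, and consecutive edges show that $b_r$ is constant, say $\beta_\connectedcomp$, over all $\partition_{\givenclassifier_2}$-vertices of $\connectedcomp$. If $\connectedcomp$ has no $\partition_{\givenclassifier_1}$-vertex, then it is a single $\partition_{\givenclassifier_2}$-vertex, so this holds trivially.

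Writing $a$ for the coefficient of $\cell$, every feature $x\in\cell_\connectedcomp$ satisfies $\subsetindi_\feasubset(x)=a+\beta_\connectedcomp$, which is constant on the whole branch. This value lies in $\{0,1\}$. Therefore $\feasubset$ is the union of those branches with $a+\beta_\connectedcomp=1$, and I take $\mathcal{T}$ to be that set of components.

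The symmetric statements follow by swapping the two partitions. The hardest point is the constancy argument: one must make sure that connectivity in $\overlapgraph-\cell$, rather than in $\overlapgraph$, is what drives the propagation. This works because every edge used in the propagation avoids $\cell$.
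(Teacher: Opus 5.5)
Your proposal is correct and follows the paper's proof essentially step for step. It uses the same component-membership argument for (i), the same coordinatewise verification for (ii) (your two cases merge the paper's four), and the same edge-balance propagation along paths of $\overlapgraph-\cell$ for (iii), including the separate treatment of components with no $\partition_{\givenclassifier_1}$-vertex; tracking the constant $\partition_{\givenclassifier_2}$-coefficient $\beta_\connectedcomp$ instead of the paper's $\partition_{\givenclassifier_1}$-coefficient $c_\connectedcomp$ is only a cosmetic difference.
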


\begin{algorithm}[ht]
\caption{\textsc{ComputeFinestRefinement}$(j)$ for $j\in\{1,2\}$}
\label{alg:det-k2-finest}
\KwIn{Feature distribution $\feaprob=(\feaprob_i)_{i\in[\feanum]}$ and deterministic predictors $\givenclassifier_1,\givenclassifier_2$.}
\KwOut{A constructible deterministic predictor $\classifier_j^\star\in \classifierfamdet$.}

Construct the signal partitions $\partition_{\givenclassifier_1}=\{\cell_{1,\ell}\}_{\ell\in[\cellnum_1]}$ and $\partition_{\givenclassifier_2}=\{\cell_{2,r}\}_{r\in[\cellnum_2]}$\;

Construct the overlap graph $\overlapgraph$ with vertex set $\partition_{\givenclassifier_1}\sqcup \partition_{\givenclassifier_2}$ and edge $(\cell_{1,\ell},\cell_{2,r})$ iff $\cell_{1,\ell}\cap \cell_{2,r}\neq \emptyset$\;

For every input signal cell $\cell\in \partition_{\givenclassifier_1}\cup \partition_{\givenclassifier_2}$, compute and store
$\inputcelllabel(\cell)\gets \feasubsetmean(\subsetindi_{\cell})\,\feaprob(\subsetindi_{\cell})$\;

Set $j\primed\gets 3-j$\;

Initialize $\refinedpartition_j\gets \emptyset$\;

\ForEach{$\cell\in \partition_{\givenclassifier_j}$}{
    Compute the connected components of $\overlapgraph-\cell$\;
    \ForEach{connected component $\connectedcomp$ of $\overlapgraph-\cell$}{
        $\cellnew \gets
        \cell \cap
        \left(
            \bigcup\nolimits_{\cell\primed\in \vertexset(\connectedcomp)\cap \partition_{\givenclassifier_{j\primed}}}
            \cell\primed
        \right)$\;
        \If{$\cellnew\neq \emptyset$}{
            compute
            $\feasubsetlabel(\subsetindi_{\cellnew})
            \gets
            \sum\nolimits_{\cell\primed\in \vertexset(\connectedcomp)\cap \partition_{\givenclassifier_{j\primed}}}
            \inputcelllabel(\cell\primed)
            -
            \sum\nolimits_{\cell\doubleprimed\in \vertexset(\connectedcomp)\cap \partition_{\givenclassifier_j}}
            \inputcelllabel(\cell\doubleprimed)$\;
            
            assign the prediction
            $\feasubsetmean(\subsetindi_{\cellnew})
            \gets
            \frac{\feasubsetlabel(\subsetindi_{\cellnew})}{\feaprob(\subsetindi_{\cellnew})}$\;
                
            add the cell $\cellnew$ to $\refinedpartition_j$\;
        }
    }
}
\Return the deterministic predictor $\classifier_j^\star$ whose signal partition is $\refinedpartition_j$ and whose prediction on each signal cell $\cellnew\in \refinedpartition_j$ is $\feasubsetmean(\subsetindi_{\cellnew})$\;
\end{algorithm}

\begin{proof}[Proof of \Cref{lem:det-k2-branch}]
We prove the three statements in order.

\noindent\emph{Proof of statement (i).}
Fix a cell $\cell\in \partition_{\givenclassifier_1}$.
For each feature index $i\in \cell$, let $\cell_2(i)$ denote the unique cell of $\partition_{\givenclassifier_2}$ that contains $i$.
Since $\cell_2(i)$ is adjacent to $\cell$ in $\overlapgraph$, the vertex $\cell_2(i)$ belongs to a unique connected component $\connectedcomp$ of $\overlapgraph-\cell$, and then $i\in \cell_{\connectedcomp}$.
Thus the union of the nonempty sets $\cell_{\connectedcomp}$ is all of $\cell$.

If $\connectedcomp_1\neq \connectedcomp_2$ are two connected components of $\overlapgraph-\cell$, then
$\vertexset(\connectedcomp_1)\cap \partition_{\givenclassifier_2}$ and
$\vertexset(\connectedcomp_2)\cap \partition_{\givenclassifier_2}$ are disjoint.
Since the cells of $\partition_{\givenclassifier_2}$ are pairwise disjoint, the sets
$\cell_{\connectedcomp_1}$ and $\cell_{\connectedcomp_2}$ are disjoint.
Hence the nonempty sets $\cell_{\connectedcomp}$ form a partition of $\cell$.

\noindent\emph{Proof of statement (ii).}
Fix a connected component $\connectedcomp$ of $\overlapgraph-\cell$.
We verify Eqn.~\eqref{eq:det-k2-branch-indicator} coordinatewise.
Fix a feature index $i\in[\feanum]$, and let $\cell_1(i)$ and $\cell_2(i)$ be the unique cells of
$\partition_{\givenclassifier_1}$ and $\partition_{\givenclassifier_2}$, respectively, that contain $i$.
There are four cases.

\begin{itemize}
    \item
    If $\cell_1(i)=\cell$ and $\cell_2(i)\in \vertexset(\connectedcomp)$, then the right-hand side of Eqn.~\eqref{eq:det-k2-branch-indicator} equals $1$ at coordinate $i$, and by definition $i\in \cell_{\connectedcomp}$.

    \item
    If $\cell_1(i)=\cell$ and $\cell_2(i)\notin \vertexset(\connectedcomp)$, then the right-hand side of Eqn.~\eqref{eq:det-k2-branch-indicator} equals $0$ at coordinate $i$, and $i\notin \cell_{\connectedcomp}$.

    \item
    If $\cell_1(i)\neq \cell$ and $\cell_1(i)\in \vertexset(\connectedcomp)$, then the edge $(\cell_1(i),\cell_2(i))$ lies in $\overlapgraph-\cell$.
    Since $\connectedcomp$ is a connected component of $\overlapgraph-\cell$, this implies $\cell_2(i)\in \vertexset(\connectedcomp)$ as well.
    Thus the right-hand side of Eqn.~\eqref{eq:det-k2-branch-indicator} equals $1-1=0$ at coordinate $i$.
    Also $i\notin \cell$, hence $i\notin \cell_{\connectedcomp}$.

    \item
    If $\cell_1(i)\neq \cell$ and $\cell_1(i)\notin \vertexset(\connectedcomp)$, then again the edge $(\cell_1(i),\cell_2(i))$ lies in $\overlapgraph-\cell$, so $\cell_2(i)\notin \vertexset(\connectedcomp)$.
    Hence the right-hand side of Eqn.~\eqref{eq:det-k2-branch-indicator} equals $0$ at coordinate $i$, and $i\notin \cell_{\connectedcomp}$.
\end{itemize}

These cases exhaust all possibilities, so Eqn.~\eqref{eq:det-k2-branch-indicator} holds.
Because the right-hand side is a linear combination of rows of $\mymatrix$, we obtain $\subsetindi_{\cell_{\connectedcomp}}\in \rowspace(\mymatrix)$.
Applying the linear functional $\feasubsetlabel(\cdot)$ to Eqn.~\eqref{eq:det-k2-branch-indicator} yields Eqn.~\eqref{eq:det-k2-branch-label}.

\noindent\emph{Proof of statement (iii).}
Assume that $\feasubset\subseteq \cell$ and $\subsetindi_{\feasubset}\in \rowspace(\mymatrix)$.
By \Cref{rem:det-special-case-M}, the rows of $\mymatrix$ are exactly the vectors
$\subsetindi_{\cell'}$ for $\cell'\in \partition_{\givenclassifier_1}\cup \partition_{\givenclassifier_2}$.
Hence there exist coefficients $(\alpha_{\cell'})_{\cell'\in \partition_{\givenclassifier_1}}$ and
$(\beta_{\cell''})_{\cell''\in \partition_{\givenclassifier_2}}$ such that
\begin{align}
\label{eq:det-k2-rowspace-expansion}
    \subsetindi_{\feasubset}
    =
    \sum\nolimits_{\cell'\in \partition_{\givenclassifier_1}} \alpha_{\cell'} \subsetindi_{\cell'}
    +
    \sum\nolimits_{\cell''\in \partition_{\givenclassifier_2}} \beta_{\cell''} \subsetindi_{\cell''}~.
\end{align}

Now let $(\cell\primed,\cell')$ be any edge of $\overlapgraph$ with $\cell\primed\neq \cell$.
Every feature in the nonempty intersection $\cell\primed\cap \cell'$ lies outside $\cell$, whereas $\feasubset\subseteq \cell$.
Therefore the left-hand side of Eqn.~\eqref{eq:det-k2-rowspace-expansion} vanishes on $\cell\primed\cap \cell'$, and so
\begin{align}
\label{eq:det-k2-edge-balance}
    \alpha_{\cell\primed}+\beta_{\cell'}=0
\end{align}
for every such edge $(\cell\primed,\cell')$.
Fix a connected component $\connectedcomp$ of $\overlapgraph-\cell$.

\begin{itemize}
    \item
    Suppose first that $\vertexset(\connectedcomp)\cap \partition_{\givenclassifier_1}\neq \emptyset$.
    Choose any cell $\cell_0^\dagger\in \vertexset(\connectedcomp)\cap \partition_{\givenclassifier_1}$ and set
    $c_{\connectedcomp}\triangleq \alpha_{\cell_0^\dagger}$.
    By Eqn.~\eqref{eq:det-k2-edge-balance}, every neighbor $\cell'$ of $\cell_0^\dagger$ in $\connectedcomp$ satisfies
    $\beta_{\cell'}=-c_{\connectedcomp}$.
    Applying Eqn.~\eqref{eq:det-k2-edge-balance} repeatedly along paths in the connected component shows that every vertex of
    $\vertexset(\connectedcomp)\cap \partition_{\givenclassifier_1}$ has coefficient $c_{\connectedcomp}$ and every vertex of
    $\vertexset(\connectedcomp)\cap \partition_{\givenclassifier_2}$ has coefficient $-c_{\connectedcomp}$.
    Therefore, for every $\cell'\in \vertexset(\connectedcomp)\cap \partition_{\givenclassifier_2}$, the value of
    $\subsetindi_{\feasubset}$ on the whole intersection $\cell\cap \cell'$ is
    \begin{align*}
        \alpha_{\cell}+\beta_{\cell'}
        =
        \alpha_{\cell}-c_{\connectedcomp},
    \end{align*}
    which depends only on $\connectedcomp$ and not on the particular choice of $\cell'$.
    Hence $\subsetindi_{\feasubset}$ is constant on $\cell_{\connectedcomp}$.

    \item
    Suppose next that $\vertexset(\connectedcomp)\cap \partition_{\givenclassifier_1}=\emptyset$.
    Since $\overlapgraph-\cell$ is bipartite, every edge of $\overlapgraph-\cell$ joins a vertex in
    $\partition_{\givenclassifier_1}\setminus \{\cell\}$ to a vertex in $\partition_{\givenclassifier_2}$.
    Because $\connectedcomp$ contains no vertex of $\partition_{\givenclassifier_1}$, it contains no edge.
    As $\connectedcomp$ is connected, it must therefore consist of a single isolated vertex
    $\cell_0\in \partition_{\givenclassifier_2}$.
    In this case, $\cell_{\connectedcomp}=\cell\cap \cell_0$.
    Fix any feature index $i\in \cell_{\connectedcomp}$.
    Since $i\in \cell$ and $i\in \cell_0$, while $i$ belongs to no other cell of either partition, Eqn.~\eqref{eq:det-k2-rowspace-expansion} gives $(\subsetindi_{\feasubset})_i
    =
    \alpha_{\cell}+\beta_{\cell_0}$.
    Thus, we have $\subsetindi_{\feasubset}$ is constant on $\cell_{\connectedcomp}$.
\end{itemize}
Thus, in either case, $\subsetindi_{\feasubset}$ is constant on each branch piece $\cell_{\connectedcomp}$.
Since $\subsetindi_{\feasubset}$ is a $\{0,1\}$-vector, each $\cell_{\connectedcomp}$ is either entirely contained in $\feasubset$ or entirely disjoint from $\feasubset$.
Therefore $\feasubset$ is a union of some of the sets $\cell_{\connectedcomp}$.
\end{proof}

\begin{algorithm}[ht]
\caption{Polynomial-time algorithm for the deterministic version of {\predicrefinelink} when $\classifiernum=2$}
\label{alg:det-k2-solve}
\KwIn{Feature distribution $\feaprob=(\feaprob_i)_{i\in[\feanum]}$ and deterministic predictors $\givenclassifier_1,\givenclassifier_2$.}
\KwOut{Deterministic predictors $(\classifier_1,\classifier_2)$.}

$\classifier_1^\star \gets \textsc{ComputeFinestRefinement}(1)$\;
$\classifier_2^\star \gets \textsc{ComputeFinestRefinement}(2)$\;

\For{$j\in\{1,2\}$}{
    \eIf{$\partition_{\classifier_j^\star}=\partition_{\givenclassifier_j}$}{
        $\classifier_j\gets \givenclassifier_j$\;
    }{
        $\classifier_j\gets \classifier_j^\star$\;
    }
}
\Return $(\classifier_1,\classifier_2)$\;
\end{algorithm}

\begin{proof}[Proof of \Cref{thm:det-k2-polytime}]
We prove the statement for $j=1$; the proof for $j=2$ is completely symmetric.

For each cell $\cell\in \partition_{\givenclassifier_1}$ and each connected component $\connectedcomp$ of $\overlapgraph-\cell$, define $\cell_{\connectedcomp}$ as in \Cref{lem:det-k2-branch}, and set
\begin{align*}
    \refinedpartition_1
    \triangleq
    \left\{
        \cell_{\connectedcomp}\neq \emptyset:
        \ \cell\in \partition_{\givenclassifier_1},
        \ \connectedcomp \text{ is a connected component of } \overlapgraph-\cell
    \right\}.
\end{align*}

For each fixed $\cell\in \partition_{\givenclassifier_1}$, \Cref{lem:det-k2-branch}(i) shows that the nonempty sets $\cell_{\connectedcomp}$ partition $\cell$.
Since the cells of $\partition_{\givenclassifier_1}$ partition $[\feanum]$, it follows that $\refinedpartition_1$ is a partition of $[\feanum]$.
By \Cref{lem:det-k2-branch}(ii), every cell $\cell'\in \refinedpartition_1$ satisfies
$\subsetindi_{\cell'}\in \rowspace(\mymatrix)$, and its quantity
$\feasubsetlabel(\subsetindi_{\cell'})$ is computable from the observed data.
Therefore, by \Cref{cor:constructible-det}, the predictor $\classifier_1^\star$ returned by
Algorithm~\ref{alg:det-k2-finest} belongs to $\classifierfamdet$ and has signal partition
$\partition_{\classifier_1^\star}=\refinedpartition_1$.

We next show that $\refinedpartition_1$ is the finest constructible deterministic refinement of $\partition_{\givenclassifier_1}$.
Let $\classifier\in \classifierfamdet$ be any deterministic predictor such that
$\partition_{\classifier}$ refines $\partition_{\givenclassifier_1}$.
Fix any cell $\cell'\in \partition_{\classifier}$.
Since $\partition_{\classifier}$ refines $\partition_{\givenclassifier_1}$, there exists a unique cell
$\cell\in \partition_{\givenclassifier_1}$ such that $\cell'\subseteq \cell$.
Because $\classifier\in \classifierfamdet$, \Cref{cor:constructible-det} gives
$\subsetindi_{\cell'}\in \rowspace(\mymatrix)$.
Applying \Cref{lem:det-k2-branch}(iii) with $\feasubset=\cell'$ shows that $\cell'$ is a union of cells of $\refinedpartition_1$.
Since this holds for every $\cell'\in \partition_{\classifier}$, we conclude that
\begin{align}
\label{eq:det-k2-finest}
    \text{if }\partition_{\classifier}\text{ refines }\partition_{\givenclassifier_1},
    \text{ then }\refinedpartition_1\text{ refines }\partition_{\classifier}.
\end{align}

We now verify the correctness of Algorithm~\ref{alg:det-k2-solve} for the first input predictor.
\begin{itemize}
    \item 
    \noindent\emph{Case 1: \(\refinedpartition_1\neq \partition_{\givenclassifier_1}\).}
    Then $\partition_{\classifier_1^\star}=\refinedpartition_1$ strictly refines
    $\partition_{\givenclassifier_1}$.
    By \Cref{lem:det-partition-dom}, this implies
    $\classifier_1^\star \partitiondom \givenclassifier_1$.
    
    We claim that $\classifier_1^\star\in \partitiondomparedet$.
    Suppose not.
    Then there exists some predictor $\classifier\in \classifierfamdet$ such that
    $\classifier \partitiondom \classifier_1^\star$.
    By \Cref{lem:det-partition-dom}, the partition $\partition_{\classifier}$ strictly refines
    $\partition_{\classifier_1^\star}=\refinedpartition_1$.
    In particular, $\partition_{\classifier}$ refines $\partition_{\givenclassifier_1}$.
    Applying Eqn.~\eqref{eq:det-k2-finest} to $\classifier$ yields that
    $\refinedpartition_1$ refines $\partition_{\classifier}$, contradicting the strict refinement of
    $\refinedpartition_1$ by $\partition_{\classifier}$.
    Hence $\classifier_1^\star\in \partitiondomparedet$.
    
    Therefore, in this case, Algorithm~\ref{alg:det-k2-solve} outputs
    $\classifier_1=\classifier_1^\star$, and this predictor satisfies
    $\classifier_1\in \partitiondomparedet$ and
    $\classifier_1 \partitiondom \givenclassifier_1$.

    \item 
    \noindent\emph{Case 2: \(\refinedpartition_1= \partition_{\givenclassifier_1}\).}
    We claim that no predictor in $\classifierfamdet$ strictly informationally-dominates $\givenclassifier_1$.
    Suppose, toward a contradiction, that there exists
    $\classifier\in \classifierfamdet$ such that
    $\classifier \partitiondom \givenclassifier_1$.
    By \Cref{lem:det-partition-dom}, the partition $\partition_{\classifier}$ strictly refines
    $\partition_{\givenclassifier_1}$.
    Applying Eqn.~\eqref{eq:det-k2-finest} to $\classifier$ gives that
    $\refinedpartition_1$ refines $\partition_{\classifier}$.
    Since $\refinedpartition_1=\partition_{\givenclassifier_1}$, this implies that
    $\partition_{\givenclassifier_1}$ refines $\partition_{\classifier}$, contradicting the strict refinement of
    $\partition_{\givenclassifier_1}$ by $\partition_{\classifier}$.
    Thus there exists no $\classifier\in \classifierfamdet$ with
    $\classifier \partitiondom \givenclassifier_1$; in particular, there exists no
    $\classifier\in \partitiondomparedet$ with
    $\classifier \partitiondom \givenclassifier_1$.
    Hence Algorithm~\ref{alg:det-k2-solve} correctly outputs
    $\classifier_1=\givenclassifier_1$ in this case.
\end{itemize}
Combining the two cases, the output for the first input predictor satisfies exactly the required specification of {\predicrefinelink}.
By symmetry, the same conclusion holds for the second input predictor.
Therefore Algorithm~\ref{alg:det-k2-solve} solves the deterministic version of {\predicrefinelink} when $\classifiernum=2$.

It remains to prove the running-time bound.
Let $\vertexsetsize
\triangleq
|\vertexset(\overlapgraph)|
=
\cellnum_1+\cellnum_2$
and let $\edgesize$ denote the number of edges of $\overlapgraph$.
Because every feature index belongs to exactly one cell of $\partition_{\givenclassifier_1}$ and exactly one cell of $\partition_{\givenclassifier_2}$, each feature contributes to at most one edge of $\overlapgraph$.
Hence $\edgesize\le \feanum$.
Also $\cellnum_1,\cellnum_2\le \feanum$, so $\vertexsetsize\le 2\feanum$.

The overlap graph and the quantities $\inputcelllabel(\cell)$ for the input cells can be computed in polynomial time from the input.
In one call to Algorithm~\ref{alg:det-k2-finest}, we compute connected components of $\overlapgraph-\cell$ once for each of the at most $\vertexsetsize$ input cells $\cell$; each such computation takes time $O(\vertexsetsize+\edgesize)$.
After the component labels are known, the branch pieces for this fixed $\cell$ can be formed by scanning the feature indices in $\cell$ and grouping them according to the connected component containing their opposite-side cell; this takes time $O(\feanum)$.
Thus one call to Algorithm~\ref{alg:det-k2-finest} runs in time
$O\!\left(\vertexsetsize(\vertexsetsize+\edgesize+\feanum)\right)$
which is polynomial in the input size because $\vertexsetsize,\edgesize\le O(\feanum)$.

Finally, all arithmetic used to compute the quantities
$\inputcelllabel(\cell)$,
$\feasubsetlabel(\subsetindi_{\cell'})$, and
$\feasubsetmean(\subsetindi_{\cell'})$
involves only polynomially many additions, subtractions, multiplications, and divisions of rational numbers whose bit-length is polynomial in the input size.
Hence the total bit complexity is polynomial as well.
The second call to Algorithm~\ref{alg:det-k2-finest} has the same complexity.
Therefore Algorithm~\ref{alg:det-k2-solve} runs in polynomial time.
\end{proof}


\section{Efficient Algorithms for Randomized Predictors}
\label{sec:random}

\newcommand{\signalcone}{\mathcal{M}}
\newcommand{\residualvec}{\mathbf{r}}
\newcommand{\ray}{\operatorname{ray}}
\newcommand{\feasubsetcur}{\feasubset_{\mathrm{cur}}}
\newcommand{\LPFeasiblefeasubset}{\text{\hyperref[LP-Feasible]{$\textsc{Feasible}(\feasubset)$}}}
\newcommand{\LPFeasiblefeasubsetcur}{\text{\hyperref[LP-Feasible]{$\textsc{Feasible}(\feasubsetcur)$}}}
\newcommand{\LPFeasiblefeasubsetcurminus}{\text{\hyperref[LP-Feasible]{$\textsc{Feasible}(\feasubsetcur\setminus\{i\})$}}}
\newcommand{\LPFeasiblefeasubsetsupp}{\text{\hyperref[LP-Feasible]{$\textsc{Feasible}(\supp(\signalvec^\star))$}}}
\newcommand{\LPFeasiblefeasubsetnew}{\text{\hyperref[LP-Feasible]{$\textsc{Feasible}(\feasubset\primed)$}}}

\newcommand{\feasibleprog}[1]{\hyperref[LP-Feasible]{\textsc{Feasible}(#1)}}

In this section, we show that once randomized outputs are allowed, the problem {\predicrefinelink} becomes polynomial-time solvable.
For complexity statements, we assume that all entries of $\feaprob$, all entries of the input signal vectors,
and all input signal predictions are rational numbers encoded in binary.

\begin{theorem}
\label{thm:poly randomized partition}
There exists an algorithm (see \Cref{alg:randomized-predictor-refine}) that runs in time polynomial in the input size and solves {\predicrefinelink}.
\end{theorem}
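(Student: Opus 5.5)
The plan is to work entirely in the cone $\signalcone \triangleq \rowspace(\mymatrix)\cap\reals_+^{\feanum}$ and to characterize undominated constructible predictors through its extreme rays. I will first prove a structural lemma. For a reduced bag $\signalbag(\classifier)$ of some $\classifier\in\classifierfam$, a predictor $\classifier'\in\classifierfam$ with $\classifier'\partitiondom\classifier$ exists if and only if some vector $\signalvec\in\signalbag(\classifier)$ is not an extreme ray of $\signalcone$. The ``if'' direction takes a non-extreme $\signalvec$ and writes $\signalvec=\signalvec'+\signalvec''$ with $\signalvec',\signalvec''\in\signalcone$ not collinear with $\signalvec$; this gives a constructible split, and merging the two pieces recovers $\classifier$. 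Strictness should follow because a garbling back would have to express $\signalvec'$ as a nonnegative combination of bag vectors. Using $\sum\signalvec=\one$ and disjointness of rays in a reduced bag, I will argue this forces $\signalvec'\rayeq\signalvec$. For the ``only if'' direction, if every bag vector spans an extreme ray, then any dominating $\classifier'$ satisfies $\signalvec=\sum_r \kernalmtx_{r}\signalvec_{\classifier',r}$ with each summand in $\signalcone$. Extremality then makes each summand collinear with $\signalvec$, so $\classifier'$ is a ray-split of $\classifier$, which after reduction gives back $\classifier$. This yields a reverse garbling and contradicts strict dominance.

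Next I will show that a non-extreme vector $\signalvec\in\signalcone$ is equivalent to the existence of a \emph{refiner}: a nonzero $\vu\in\signalcone$ with $\supp(\vu)\subsetneq\supp(\signalvec)$. Given a refiner, take the largest $t>0$ with $\signalvec-t\vu\ge 0$. The residue $\signalvec-t\vu$ lies in $\signalcone$, has strictly smaller support, and is not collinear with $\signalvec$. Conversely, I will use the standard fact that in a pointed polyhedral cone the face containing $\signalvec$ in its relative interior has dimension greater than one, so moving along a direction in that face until a coordinate hits zero produces a refiner. Existence of a refiner can be tested with $|\supp(\signalvec)|$ LPs. For each $i\in\supp(\signalvec)$, the problem $\feasibleprog{\supp(\signalvec)\setminus\{i\}}$ asks for $\alpha$ such that $\mymatrix^\top\alpha\ge 0$, $(\mymatrix^\top\alpha)_{i'}=0$ outside the allowed support, and $\one^\top\mymatrix^\top\alpha=1$. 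Iterating this test on the refiner itself yields a minimal-support refiner in at most $\feanum$ rounds. A minimal-support refiner admits no refiner, so it spans an extreme ray.

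The algorithm then processes each input bag $\signalbag(\givenclassifier_j)$. Whenever some current vector has a refiner, it computes a minimal one $\vu$, replaces $\signalvec$ by $t\vu$ and the residue $\signalvec-t\vu$, and re-reduces the bag by merging collinear vectors. Each split uses only LP solutions, so all vectors stay in $\signalcone$ and constructible. Calibrated predictions are obtained from $\feasubsetlabel(\mymatrix^\top\alpha)=\alpha^\top\weightedpredic$, as in \Cref{prop:identifiable-vector}. Correctness follows from the structural lemma. The output is dominated by nothing constructible, and if at least one split occurred, the output strictly dominates $\givenclassifier_j$ via merging plus the strictness argument above. If no split occurred, no constructible predictor dominates $\givenclassifier_j$, so returning it is correct.

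The main obstacle I expect is bounding the number of splits, along with bit sizes. I will use the potential $\Phi=\sum|\supp(\signalvec)|$ over the non-extreme vectors in the bag. A split removes $\signalvec$ and adds at most the residue, whose support is strictly smaller. The extreme piece $t\vu$ contributes zero, and merging can only lower $\Phi$. So $\Phi\le\feanum\cdot|\signalbag|$ drops by at least one per split. I still need to check that the bag size, and hence the initial potential, stays polynomial; that holds because $\Phi$ begins at most $\feanum\,\cellnum_j$. For bit complexity, I will take LP solutions to be vertices of the polyhedra, which have polynomial encoding. The delicate part is that residues are obtained by repeated subtraction, so their entries could grow. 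I plan to control this by renormalizing each refiner to a vertex solution and noting that each residue is determined by its support face. If that argument fails, I would instead recompute residues as vertex solutions of an LP that fixes the support.
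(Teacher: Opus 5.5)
Your architecture is the paper's. It rests on extremality $\Leftrightarrow$ support-minimality, LP tests that depend only on a support set, peeling off extreme pieces with the maximal step, and the lemma that a constructible predictor whose reduced bag consists of extreme-ray vectors is undominated. (Your potential function is the one sketched in the paper's introduction; the formal algorithm instead runs \textsc{ExtremeRayDecompose} on each input row, which stops within $n$ steps because the support shrinks.) Two steps, however, do not go through as written.

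The first is strictness. Your ``if'' direction argues from how $\mathbf{v}'$ alone is expressed, and that cannot force $\mathbf{v}'$ onto the ray of $\mathbf{v}$. Take row space $\mathbb{R}^2$ and reduced bag $\{(0.9,0.5),(0.1,0.5)\}$, split as $(0.9,0.5)=(0.5,0.5)+(0.4,0)$. Here $\mathbf{v}'=(0.5,0.5)$ is the average of the two bag vectors, and what actually blocks a reverse garbling is $\mathbf{v}''$. The obstruction is global: for instance, the strictly convex functional $\sum_{\mathbf{w}}\|\mathbf{w}\|_1\,\phi(\mathbf{w}/\|\mathbf{w}\|_1)$ is weakly decreased by every garbling and strictly decreased by merging two non-collinear signals. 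You do not need this direction at all. Your ``only if'' argument in fact shows the stronger statement: if every vector of the reduced bag of $g$ spans an extreme ray, then every constructible $h$ with $h\succeq g$ (non-strict) has the same reduced bag as $g$. Apply this with $h=f_j$ and $g=g^\star$ the final output. Since a split occurred, $f_j$'s bag contained a non-extreme vector, so the bags differ and $f_j\not\succeq g^\star$; this is exactly the paper's Case~1. Relatedly, once you re-reduce between splits, the garbling from $g^\star$ back to $f_j$ is not pure merging. A ray class collecting pieces of different parents must first be split back at random in proportion to the collinearity ratios (the paper's ray-merging lemma), then merged.

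The second is bit size, which you leave open, and neither fallback works. A residue is not determined by its support face. Replacing it by an LP vertex destroys the identity $\mathbf{v}=t\mathbf{u}+(\mathbf{v}-t\mathbf{u})$, which you need both for the bag to keep summing to $\mathbf{1}$ and for the garbling back. Since the refiner LPs involve only $\mathbf{M}$ and a support set, only the step sizes need control. Decompose each input row to completion and merge once at the end. The coordinate $i_s$ zeroed at step $s$ then stays zero and lies outside the supports of all later pieces. Hence the step sizes solve the triangular system $\mathbf{v}_{i_{s'}}=\sum_{s\le s'}\beta_s\,\mathbf{u}_{s,i_{s'}}$ with positive diagonal entries $\mathbf{u}_{s',i_{s'}}$. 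Cramer's rule then bounds their encoding length polynomially in that of the LP solutions $\mathbf{u}_s$; the paper's running-time paragraph takes this for granted. Interleaving merges across rows obscures this triangular structure, so drop the intermediate re-reductions.
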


At a high level, \Cref{alg:randomized-predictor-refine} works by refining each input reduced bag
$\signalbag(\givenclassifier_j)=\{\signalvec_{j,\ell}\}_{\ell\in[\cellnum_j]}$
into a finer bag of constructible signal vectors that still sums to the same row.
The key geometric observation is that all constructible signal vectors lie in the same cone $\signalcone$, which we refer to as the constructible signal cone, and that the atomic directions of this cone are its extreme rays:
\begin{align*}
    \signalcone
    \triangleq
    \rowspace(\mymatrix)\cap \reals_+^{\feanum}~.
\end{align*}
Starting from an input row $\signalvec_{j,\ell}\in \signalcone\setminus\{\vzero\}$, the algorithm repeatedly
decomposes $\signalvec_{j,\ell}$ into finer vectors in $\signalcone$ until every remaining piece lies on an
extreme ray of $\signalcone$.
Such a refinement can only increase information, because one can always
forget the finer signal and map each refined piece back to its parent signal through a feature-independent
garbling.
The geometric part of the argument is carried out in \Cref{lem:support-minimal-extreme-ray,lem:find-extreme-signal-correct,lem:extreme-ray-decompose-correct}, which show that extreme-ray
pieces can be found via support-minimality and obtained in polynomial time through linear-feasibility tests.
The final step is \Cref{lem:extreme-bag-undominated}: once every row of a reduced bag lies on an extreme ray of $\signalcone$,
the resulting constructible predictor already belongs to $\partitiondompare$.
Consequently, for each input predictor $\givenclassifier_j$, the algorithm either leaves
$\signalbag(\givenclassifier_j)$ unchanged, which certifies that $\givenclassifier_j\in\partitiondompare$, or
produces a strict extreme-ray refinement, which yields a constructible predictor
$\classifier_j\in\partitiondompare$ with $\classifier_j\partitiondom \givenclassifier_j$.
We now formalize this geometric viewpoint.

\begin{definition}[Extreme ray]
Let $\signalvec\in \signalcone\setminus\{\vzero\}$.
We write $\ray(\signalvec)\triangleq \{\scalescalar\,\signalvec : \scalescalar\ge 0\}$.
The ray $\ray(\signalvec)$ is called an extreme ray of $\signalcone$ if whenever
$\signalvec = \signalvec\primed+\signalvec\doubleprimed$ with $\signalvec\primed, \signalvec\doubleprimed\in \signalcone$, then
$\signalvec\primed, \signalvec\doubleprimed\in \ray(\signalvec)$.
\end{definition}

\begin{lemma}[Support-minimal vectors span exactly the extreme rays of $\signalcone$]
\label{lem:support-minimal-extreme-ray}
Let $\signalvec\in \signalcone\setminus\{\vzero\}$.
Then the following are equivalent:
\begin{enumerate}[label=(\roman*)]
    \item the ray $\ray(\signalvec)$ is extreme in $\signalcone$;
    \item there exists no $\vu\in \signalcone\setminus\{\vzero\}$ such that
    $\supp(\vu)\subsetneq \supp(\signalvec)$.
\end{enumerate}
\end{lemma}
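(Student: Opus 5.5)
I will prove both implications separately. Throughout I use two facts: $\signalcone=\rowspace(\mymatrix)\cap\reals_+^{\feanum}$ is closed under addition and nonnegative scaling, and $\rowspace(\mymatrix)$ is a linear subspace, so differences of vectors of $\signalcone$ stay in $\rowspace(\mymatrix)$.

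\emph{(ii)$\Rightarrow$(i).} Suppose no $\vu\in\signalcone\setminus\{\vzero\}$ has support strictly inside $\supp(\signalvec)$, and write $\signalvec=\signalvec\primed+\signalvec\doubleprimed$ with $\signalvec\primed,\signalvec\doubleprimed\in\signalcone$. Nonnegativity gives $\supp(\signalvec\primed)\subseteq\supp(\signalvec)$. If $\signalvec\primed=\vzero$, it trivially lies in $\ray(\signalvec)$. Otherwise I consider the line $t\mapsto\signalvec-t\signalvec\primed$ for $t\ge 0$. Each such vector lies in $\rowspace(\mymatrix)$, and it is nonnegative for small $t$, since $\signalvec-t\signalvec\primed=(1-t)\signalvec\primed+\signalvec\doubleprimed$ for $t\le1$. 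Let $t^\star=\min\{\signalvec(i)/\signalvec\primed(i):\signalvec\primed(i)>0\}\ge 1$. At $t=t^\star$ the vector $\vu\triangleq\signalvec-t^\star\signalvec\primed$ is in $\reals_+^{\feanum}$, hence in $\signalcone$, and it vanishes at some coordinate of $\supp(\signalvec)$. So $\supp(\vu)\subsetneq\supp(\signalvec)$, and (ii) forces $\vu=\vzero$. Therefore $\signalvec=t^\star\signalvec\primed$, so $\signalvec\primed=\signalvec/t^\star\in\ray(\signalvec)$, and $\signalvec\doubleprimed=(1-1/t^\star)\signalvec\in\ray(\signalvec)$ because $t^\star\ge1$.

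\emph{(i)$\Rightarrow$(ii).} I argue by contrapositive. Suppose $\vu\in\signalcone\setminus\{\vzero\}$ satisfies $\supp(\vu)\subsetneq\supp(\signalvec)$. Choose $\epsilon>0$ small, for instance $\epsilon=\min_{i\in\supp(\vu)}\signalvec(i)/\vu(i)$ halved, so that $\signalvec-\epsilon\vu\ge\vzero$. Then $\signalvec\primed\triangleq\epsilon\vu$ and $\signalvec\doubleprimed\triangleq\signalvec-\epsilon\vu$ both lie in $\signalcone$ and sum to $\signalvec$. However, $\signalvec\primed\notin\ray(\signalvec)$: it is nonzero, while any nonzero element of $\ray(\signalvec)$ has support exactly $\supp(\signalvec)\neq\supp(\vu)$. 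Hence the ray is not extreme.

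The step that needs care is the choice of $t^\star$ in the first direction. It must be well defined, which holds because $\signalvec\primed\neq\vzero$. It must be at least $1$, which follows from $\signalvec\doubleprimed\ge\vzero$ and is what keeps $\signalvec\doubleprimed$ a nonnegative multiple of $\signalvec$. And it must actually strictly shrink the support, which holds because the minimizing coordinate lies in $\supp(\signalvec\primed)\subseteq\supp(\signalvec)$. Everything else is routine linear algebra.
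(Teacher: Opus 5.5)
Your proof is correct and follows essentially the same approach as the paper. In (i)$\Rightarrow$(ii) you use the same subtraction $\signalvec-\epsilon\vu$ to produce a summand off the ray. In (ii)$\Rightarrow$(i) you use the same ``scale and subtract until a coordinate of the support vanishes, then invoke support-minimality'' idea, but you subtract $t^\star\signalvec\primed$ from $\signalvec$ instead of $\scalescalar\primed\signalvec$ from $\signalvec\primed$. This skips the paper's preliminary step that both summands have support $\supp(\signalvec)$, and it places both summands on $\ray(\signalvec)$ at once, where the paper argues by contradiction for each summand separately.
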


\Cref{lem:support-minimal-extreme-ray} reduces the problem of finding extreme rays of the constructible signal cone $\signalcone$
to a support-minimality problem. Indeed, by the lemma, a nonzero vector $\vv\in \signalcone$ spans an extreme ray if and only if there is no other nonzero vector in $\signalcone$ whose support is a strict subset of $\supp(\vv)$. This suggests a natural algorithmic approach: rather than searching for extreme rays directly, we may start from a candidate support set $\feasubset\subseteq [\feanum]$ and ask whether there exists any nonzero constructible signal vector supported inside $\feasubset$. If such a feasibility question can be answered efficiently, then we can iteratively shrink supports until no further reduction is possible, at which point \Cref{lem:support-minimal-extreme-ray} guarantees that the resulting vector lies on an extreme ray. This is precisely the role of the following linear-feasibility program \ref{LP-Feasible}:
For every nonempty set $\feasubset \subseteq[\feanum]$, consider the LP-feasibility problem
\begin{equation}
\label{LP-Feasible}
\tag{$\textsc{Feasible}(\feasubset)$}
\begin{aligned}
\text{find } \quad &(\xi,\signalvec)\in \reals^{\totalcells}\times \reals_+^{\feanum} && \\
\text{s.t. } \quad 
&\signalvec=\mymatrix^\top \xi~, &&\\
&\signalvec_i=0~, && i\notin \feasubset~, \\
&\sum\nolimits_{i\in \feasubset}\signalvec_i=1~. &&
\end{aligned}
\end{equation}
This system is feasible if and only if there exists a vector in $\signalcone\setminus\{\vzero\}$ whose support is contained in $\feasubset$.

\begin{algorithm}[H]
\caption{\textsc{FindExtremeSignal}$(\feasubset)$}
\label{alg:find-extreme-signal}
\KwIn{A nonempty set $\feasubset\subseteq[\feanum]$ such that $\feasibleprog{\feasubset}$ is feasible.}
\KwOut{A vector $\signalvec^\star\in \signalcone\setminus\{\vzero\}$ whose ray is extreme.}

$\feasubsetcur\gets \feasubset$\;

\ForEach{$i\in \feasubset$}{
    \If{$i\in \feasubsetcur$ and $\feasibleprog{\feasubsetcur\setminus\{i\}}$ is feasible}{
        $\feasubsetcur\gets \feasubsetcur\setminus\{i\}$\;
    }
}

Solve $\feasibleprog{\feasubsetcur}$ and return any feasible vector $\signalvec^\star$\;
\end{algorithm}

\begin{lemma}
\label{lem:find-extreme-signal-correct}
\Cref{alg:find-extreme-signal} is well defined.
If program $\LPFeasiblefeasubset$ is feasible, then the returned vector $\signalvec^\star$ belongs to
$\signalcone\setminus\{\vzero\}$ and the ray $\ray(\signalvec^\star)$ is extreme in $\signalcone$.
\end{lemma}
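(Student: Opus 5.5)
The proof follows the loop invariant of \Cref{alg:find-extreme-signal}, then applies \Cref{lem:support-minimal-extreme-ray} to the final support set.

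\emph{Well-definedness.} We maintain the invariant that \feasibleprog{$\feasubsetcur$} is feasible.
\begin{itemize}
\item Initially $\feasubsetcur=\feasubset$, and feasibility holds by assumption.
\item At each iteration, $\feasubsetcur$ is replaced by $\feasubsetcur\setminus\{i\}$ only when that smaller program is feasible, so the invariant is preserved.
\end{itemize}
Hence $\feasubsetcur$ is never empty: the program with an empty support set would require $\sum_{i\in\emptyset}\signalvec_i=1$, which is infeasible. So the final LP solve always has a solution, and the algorithm is well defined. It makes at most $|\feasubset|+1$ LP calls, which gives polynomial time, though only correctness is claimed here.

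\emph{The returned vector lies in the cone.} Let $\signalvec^\star$ be the feasible vector returned for the final set $\feasubsetcur$. Then:
\begin{itemize}
\item $\signalvec^\star=\mymatrix^\top\xi\in\rowspace(\mymatrix)$;
\item $\signalvec^\star\ge 0$, so $\signalvec^\star\in\signalcone$;
\item its coordinates sum to $1$, so $\signalvec^\star\neq\vzero$;
\item $\supp(\signalvec^\star)\subseteq\feasubsetcur$.
\end{itemize}

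\emph{Extremality.} By \Cref{lem:support-minimal-extreme-ray}, it suffices to show that no $\vu\in\signalcone\setminus\{\vzero\}$ has $\supp(\vu)\subsetneq\supp(\signalvec^\star)$. Suppose such a $\vu$ exists, and pick $i\in\supp(\signalvec^\star)\setminus\supp(\vu)$.
\begin{itemize}
\item Then $i\in\feasubsetcur$, and since $\feasubsetcur\subseteq\feasubset$ throughout, $i$ was examined in the loop.
\item At the time $i$ was examined, the current set $T$ satisfied $T\supseteq\feasubsetcur$, because sets only shrink. Also $i\in T$, since $i$ is still in the final set.
\item Rescale $\vu$ so that its coordinates sum to $1$. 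Because $\supp(\vu)\subseteq\feasubsetcur\setminus\{i\}\subseteq T\setminus\{i\}$, and $\vu\in\rowspace(\mymatrix)$ so some $\xi$ with $\vu=\mymatrix^\top\xi$ exists, the rescaled $\vu$ is feasible for \feasibleprog{$T\setminus\{i\}$}.
\item The algorithm would therefore have removed $i$, contradicting $i\in\feasubsetcur$.
\end{itemize}
So $\supp(\signalvec^\star)$ is support-minimal, and $\ray(\signalvec^\star)$ is extreme.

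The main subtlety is exactly this monotonicity argument. Feasibility for the final set alone does not suffice; we need that each $i$ was tested against a superset of the final set. This works because a vector supported in the smaller final set also witnesses feasibility for the larger earlier set.
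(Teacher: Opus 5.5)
Your proof is correct and follows essentially the same route as the paper: keep $\textsc{Feasible}(\feasubsetcur)$ feasible throughout, obtain support-minimality of the output by contradicting the deletion rule, and then invoke \Cref{lem:support-minimal-extreme-ray}. The differences are minor. You skip the paper's intermediate claim $\supp(\signalvec^\star)=\feasubsetcur$ and argue directly about $\supp(\signalvec^\star)$. You also make explicit the monotonicity step (each index was tested against a superset of the final set, and feasibility passes to supersets), which the paper compresses into ``contradicting the stopping rule'' even though the loop is a single pass.
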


\begin{proof}
Because the loop only deletes one index at a time, it terminates after at most $|\feasubset|$ iterations.
By construction, program $\feasibleprog{\feasubsetcur}$ is feasible at the moment the algorithm stops,
so the returned vector $\signalvec^\star$ belongs to $\signalcone\setminus\{\vzero\}$.

We claim that $\supp(\signalvec^\star)=\feasubsetcur$.
Indeed, $\supp(\signalvec^\star)\subseteq \feasubsetcur$ by the constraints of $\feasibleprog{\feasubsetcur}$.
If the inclusion were strict, then program~$\LPFeasiblefeasubsetsupp$ would also be feasible,
because the same vector $\signalvec^\star$ would satisfy all the constraints after replacing $\feasubsetcur$
by $\supp(\signalvec^\star)$.
Choose any $i\in \feasubsetcur\setminus \supp(\signalvec^\star)$.
Then $\supp(\signalvec^\star)\subseteq \feasubsetcur\setminus\{i\}$, so
$\feasibleprog{\feasubsetcur\setminus\{i\}}$ would be feasible, contradicting the stopping rule.
Thus, $\supp(\signalvec^\star)=\feasubsetcur$.

We next claim that there is no nonzero vector of $\signalcone$ supported on a strict subset of
$\feasubsetcur$.
If such a vector existed, then after normalizing it to have coordinate sum $1$ on its support, program
$\LPFeasiblefeasubsetnew$ would be feasible for some strict subset $\feasubset\primed \subsetneq \feasubsetcur$.
Choose any $i\in \feasubsetcur\setminus \feasubset\primed$.
Then $\feasibleprog{\feasubsetcur\setminus\{i\}}$ would be feasible, again contradicting the stopping rule.
Thus, signal vector $\signalvec^\star$ satisfies condition (ii) of \Cref{lem:support-minimal-extreme-ray},
and hence its ray is extreme by that lemma.
\end{proof}

\begin{algorithm}[H]
\caption{\textsc{ExtremeRayDecompose}$(\signalvec)$}
\label{alg:extreme-ray-decompose}
\KwIn{A vector $\signalvec\in \signalcone\setminus\{\vzero\}$.}
\KwOut{A finite bag $\mathcal{B}(\signalvec)$ of vectors in $\signalcone\setminus\{\vzero\}$ whose rays are extreme and whose sum is $\signalvec$.}
$\residualvec\gets \signalvec$\;
Initialize $\mathcal{B}(\signalvec)\gets \emptyset$, $t\gets 1$\;
\While{$\residualvec\neq \vzero$}{
    $\feasubset\gets \supp(\residualvec)$\;
    $\vu\gets \textsc{FindExtremeSignal}(\feasubset)$\;
    $\scalescalar\gets \min_{i\in \supp(\vu)} \residualvec_i/\vu_i$\;
    $\vv^{(t)}\gets \scalescalar \vu$\;
    add $\vv^{(t)}$ to $\mathcal{B}(\signalvec)$\;
    $\residualvec\gets \residualvec - \vv^{(t)}$\;
    $t\gets t + 1$
}
\Return $\mathcal{B}(\signalvec)$\;
\end{algorithm}





\begin{lemma}
\label{lem:extreme-ray-decompose-correct}
For every $\signalvec\in \signalcone\setminus\{\vzero\}$,
\Cref{alg:extreme-ray-decompose} terminates after at most $\feanum$ iterations and returns a finite bag
$\mathcal{B}(\signalvec)$ consisting of vectors in $\signalcone\setminus\{\vzero\}$, each spanning an extreme ray of $\signalcone$, such that
\begin{align*}
    \sum\nolimits_{\vv\primed\in \mathcal{B}(\signalvec)} \vv\primed
    =
    \signalvec .
\end{align*}
\end{lemma}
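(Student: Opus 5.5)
The plan is to prove, by induction on the iterations of \Cref{alg:extreme-ray-decompose}, the invariant that at the start of each iteration the residual $\residualvec$ lies in $\signalcone$ and
\begin{align*}
    \residualvec+\sum\nolimits_{s<t}\vv^{(s)}=\signalvec~.
\end{align*}
Initially $\residualvec=\signalvec\in\signalcone$ and the bag is empty, so the invariant holds. For an iteration with $\residualvec\neq\vzero$, set $\feasubset=\supp(\residualvec)$. Normalizing $\residualvec$ to have coordinate sum $1$ on $\feasubset$ gives a feasible point of $\feasibleprog{\feasubset}$, using that $\residualvec\in\rowspace(\mymatrix)$ means $\residualvec=\mymatrix^\top\xi$ for some $\xi$. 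Hence the precondition of \Cref{alg:find-extreme-signal} holds, and \Cref{lem:find-extreme-signal-correct} gives $\vu\in\signalcone\setminus\{\vzero\}$ with $\ray(\vu)$ extreme. From the proof of that lemma, $\supp(\vu)=\feasubsetcur\subseteq\feasubset$.

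The second step is to check the update. Since $\supp(\vu)\subseteq\supp(\residualvec)$ and $\supp(\vu)\neq\emptyset$, the scalar $\scalescalar=\min_{i\in\supp(\vu)}\residualvec_i/\vu_i$ is well defined and strictly positive. Consequently $\vv^{(t)}=\scalescalar\vu$ is nonzero, lies in $\signalcone$ (the cone is closed under positive scaling), and spans the same extreme ray as $\vu$.

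The new residual $\residualvec-\scalescalar\vu$ lies in $\rowspace(\mymatrix)$ because that is a linear subspace. Nonnegativity is the one real check:
\begin{itemize}
    \item for $i\notin\supp(\vu)$, the coordinate is $\residualvec_i\ge 0$;
    \item for $i\in\supp(\vu)$, we have $\residualvec_i-\scalescalar\vu_i\ge 0$ by the choice of $\scalescalar$.
\end{itemize}
So the new residual lies in $\signalcone$, and the sum identity is preserved trivially.

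The termination bound is the main point. The key claim is that the support strictly shrinks: $\supp(\residualvec-\vv^{(t)})\subsetneq\supp(\residualvec)$. Coordinates outside $\supp(\residualvec)$ stay zero, because $\supp(\vu)\subseteq\supp(\residualvec)$. At the minimizing index $i^\star\in\supp(\vu)$, the coordinate becomes exactly $0$ while $\residualvec_{i^\star}>0$.

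Since $|\supp(\residualvec)|\le\feanum$ initially and drops by at least one per iteration, the residual becomes $\vzero$ after at most $\feanum$ iterations. At that point the invariant gives $\sum_{\vv\primed\in\mathcal{B}(\signalvec)}\vv\primed=\signalvec$, with every element a nonzero vector of $\signalcone$ on an extreme ray.

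The only subtle step is confirming $\supp(\vu)\subseteq\supp(\residualvec)$, which relies on the support identity $\supp(\signalvec^\star)=\feasubsetcur$ established inside the proof of \Cref{lem:find-extreme-signal-correct}. If one prefers not to cite that internal claim, it follows directly from the constraint $\vu_i=0$ for $i\notin\feasubset$ in $\feasibleprog{\feasubset}$ and the fact that $\feasubsetcur\subseteq\feasubset$.

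Exact zeroing of the minimizing coordinate is immediate in exact rational arithmetic, so no numerical issue arises in the counting argument.
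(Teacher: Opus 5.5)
Your proposal is correct and follows essentially the same argument as the paper. The shared steps are: feasibility of the LP on $\supp(\mathbf{r})$ after normalization; $\supp(\vu)\subseteq\supp(\mathbf{r})$, which makes $\scalescalar$ well defined and strictly positive; the residual staying in $\mathcal{M}$; strict shrinkage of the support at the minimizing coordinate; and telescoping of the updates to recover $\signalvec$, for which your invariant $\mathbf{r}+\sum_{s<t}\vv^{(s)}=\signalvec$ is just a more explicit form of the paper's identity.
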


\begin{proof}
Fix one iteration and let $\feasubset=\supp(\residualvec)$.
Because $\residualvec\in \signalcone\setminus\{\vzero\}$, the LP $\LPFeasiblefeasubset$ is feasible after scaling $\residualvec$ so that its coordinates on $\feasubset$ sum to $1$.
Thus \Cref{alg:find-extreme-signal} is well defined and returns
$\vu\in \signalcone\setminus\{\vzero\}$ spanning an extreme ray of $\signalcone$.
Since $\supp(\vu)\subseteq \feasubset=\supp(\residualvec)$, the scalar
\begin{align*}
    \scalescalar
    \triangleq
    \min_{i\in \supp(\vu)} \frac{\residualvec_i}{\vu_i}
\end{align*}
is well defined and strictly positive.
Therefore $\vv\primed = \scalescalar \vu$ belongs to $\signalcone\setminus\{\vzero\}$ and spans the same extreme ray as $\vu$.
Moreover, $\residualvec-\vv\primed \in \rowspace(\mymatrix)$ and $\residualvec-\vv\primed \ge \vzero$, so
$\residualvec-\vv\primed \in \signalcone$.
By the definition of $\scalescalar$, at least one coordinate of $\residualvec-\vv\primed $ is zero on $\supp(\vu)$.
Thus, the support strictly shrinks whenever $\residualvec\neq \vzero$.
Therefore the loop terminates after at most $\feanum$ iterations.

If the output bag is $\mathcal{B}(\signalvec)=\{\vv^{(1)},\dots,\vv^{(q)}\}$ in the order they are generated,
then the residual vectors satisfy
\begin{align*}
    \residualvec^{(1)}
    = \signalvec~;\quad
    \residualvec^{(t+1)}
    = \residualvec^{(t)}-\vv^{(t)}
    \;\; \text{for } t\in[q]~;
    \quad
    \residualvec^{(q+1)}
     = \vzero~.
\end{align*}
Summing these identities yields 
$\signalvec
=
\sum\nolimits_{t\in[q]} \vv^{(t)}$,
as claimed.
\end{proof}

The previous lemmas show that every constructible signal vector in the cone $\signalcone$
admits a finite decomposition into extreme-ray pieces.
We next connect this geometric decomposition to informational dominance.
For this purpose, it is convenient to use the non-strict garbling preorder.

For two finitely supported predictors $\givenclassifier$ and $\classifier$, we write
$\givenclassifier \partitiondomeq \classifier$
if there exists a feature-independent stochastic kernel
$\mapkernal:\signalspace_{\givenclassifier}\to \Delta(\signalspace_{\classifier})$
such that, for every feature $\fea\in\feaspace$, the conditional signal law of $\classifier$
is obtained from that of $\givenclassifier$ by passing through $\mapkernal$.
By \Cref{lem:partition dom bag equiv}, this is equivalent to the existence of the corresponding
row-stochastic matrix in bag form.
Moreover,
\begin{align*}
    \givenclassifier \partitiondom \classifier
    \quad\Longleftrightarrow\quad
    \givenclassifier \partitiondomeq \classifier
    \text{ and }
    \classifier \not\partitiondomeq \givenclassifier~.
\end{align*}
The preorder $\partitiondomeq$ is transitive:
if $\classifier_1 \partitiondomeq \classifier_2$ via a feature-independent garbling
$\mapkernal_{12}$ and $\classifier_2 \partitiondomeq \classifier_3$ via a feature-independent garbling
$\mapkernal_{23}$, then the composed channel $\mapkernal_{23}\circ \mapkernal_{12}$ is again
feature-independent and maps the signal law of $\classifier_1$ directly to that of $\classifier_3$.
We now show that once every row of a reduced bag lies on an extreme ray of $\signalcone$,
the resulting constructible predictor is already informationally-undominated.


\begin{lemma}[Extreme-ray bags are informationally-undominated]
\label{lem:extreme-bag-undominated}
Let $\classifier\in \classifierfam$ be a constructible predictor, and write its reduced bag as
$\signalbag(\classifier)=\{\signalvec_{\classifier,r}\}_{r\in[\cellnum_{\classifier}]}$.
Assume that every vector $\signalvec_{\classifier,r}$ spans an extreme ray of $\signalcone$.
Then $\classifier\in \partitiondompare$.
More precisely, for every predictor $h\in \classifierfam$, if we have $h\partitiondomeq \classifier$, then we have $\signalbag(h)=\signalbag(\classifier)$ as reduced bags, up to reordering.
\end{lemma}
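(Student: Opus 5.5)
Suppose $h\in\classifierfam$ with $h\partitiondomeq\classifier$. By \Cref{lem:partition dom bag equiv} there is a row-stochastic matrix $\kernalmtx$ with
\begin{align*}
\signalvec_{\classifier,r}=\sum\nolimits_{s}\kernalmtx_{s,r}\,\signalvec_{h,s}\quad\text{for every }r\in[\cellnum_{\classifier}],
\end{align*}
where $\signalbag(h)=\{\signalvec_{h,s}\}_{s\in[\cellnum_h]}$ and each $\signalvec_{h,s}\in\signalcone\setminus\{\vzero\}$ by constructibility. The plan has three steps. First, use extremality to pin down the support of each column of $\kernalmtx$. Second, use row-stochasticity to show every $\signalvec_{h,s}$ is a positive combination of a single ray class of $\classifier$. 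Third, use reducedness of both bags to match them bijectively.

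\emph{Step 1.} Fix $r$. The identity above writes $\signalvec_{\classifier,r}$ as a finite sum of vectors $\kernalmtx_{s,r}\signalvec_{h,s}\in\signalcone$. By the definition of extreme ray, applied inductively by splitting off one summand at a time, each term with $\kernalmtx_{s,r}>0$ lies in $\ray(\signalvec_{\classifier,r})$. So $\signalvec_{h,s}\rayeq\signalvec_{\classifier,r}$ whenever $\kernalmtx_{s,r}>0$.

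\emph{Step 2.} Every row $s$ of $\kernalmtx$ sums to $1$, so each $s$ has some $r$ with $\kernalmtx_{s,r}>0$. Hence each $\signalvec_{h,s}$ is ray-equivalent to some $\signalvec_{\classifier,r}$. This $r$ is unique: if $\signalvec_{h,s}$ were ray-equivalent to two indices $r\ne r'$, then $\signalvec_{\classifier,r}\rayeq\signalvec_{\classifier,r'}$, contradicting that $\signalbag(\classifier)$ is reduced. This gives a map $\phi:[\cellnum_h]\to[\cellnum_{\classifier}]$. Since $h$ is also reduced, distinct $s$ lie on distinct rays, so $\phi$ is injective.

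\emph{Step 3.} Every $r$ is hit by $\phi$, because $\signalvec_{\classifier,r}\ne\vzero$ forces some $\kernalmtx_{s,r}>0$. So $\phi$ is a bijection, and we may write $\signalvec_{h,s}=c_s\,\signalvec_{\classifier,\phi(s)}$ with $c_s>0$. Both bags sum to $\one$. Each coordinate $i$ where $\signalvec_{\classifier,r}(i)>0$ sees only the class $r$ in both sums, so summing over rays gives
\begin{align*}
\sum\nolimits_r (c_{\phi^{-1}(r)}-1)\,\signalvec_{\classifier,r}=\vzero.
\end{align*}
This alone does not force each coefficient to vanish, since the supports may overlap.

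To finish, return to Step 1: column $r$ of $\kernalmtx$ is supported only on $s=\phi^{-1}(r)$, so $\signalvec_{\classifier,r}=\kernalmtx_{\phi^{-1}(r),r}\,c_{\phi^{-1}(r)}\,\signalvec_{\classifier,r}$. Row-stochasticity then forces row $\phi^{-1}(r)$ to put all its mass on column $r$, because its only nonzero entries are in columns $r'$ with $\phi^{-1}(r')=\phi^{-1}(r)$. Therefore $\kernalmtx_{\phi^{-1}(r),r}=1$, which gives $c=1$ and $\signalbag(h)=\signalbag(\classifier)$.

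Undominatedness then follows. If some $h\in\classifierfam$ had $h\partitiondom\classifier$, then $h\partitiondomeq\classifier$, so by the above $h$ and $\classifier$ have the same bag. Then $\classifier\partitiondomeq h$ via the identity garbling, contradicting strictness. The main obstacle I expect is the inductive application of the extreme-ray definition to multi-term sums in Step 1, together with the bookkeeping in the last step that row-stochasticity forces $\kernalmtx$ to be a permutation matrix.
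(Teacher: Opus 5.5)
Your proof is correct and follows essentially the same route as the paper's. Extremality forces every source signal with $\kernalmtx_{s,r}>0$ onto the ray spanned by $\signalvec_{\classifier,r}$, reducedness of the two bags makes the correspondence one-to-one, and row-stochasticity then forces the matched entries of $\kernalmtx$ to equal $1$; the paper indexes the matching from the $\classifier$ side via $\ell_r$ rather than through your map $\phi$, which is immaterial, and your paragraph using $\sum_r \signalvec_{\classifier,r}=\one$ is, as you noticed, a dead end that can simply be deleted.
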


\begin{algorithm}[H]
\caption{Solving {\predicrefinelink} for randomized predictors}
\label{alg:randomized-predictor-refine}
\KwIn{Feature distribution $\feaprob=(\feaprob_i)_{i\in[\feanum]}$ and finitely supported predictors $(\givenclassifier_j)_{j\in[\classifiernum]}$.}
\KwOut{Predictors $(\classifier_j)_{j\in[\classifiernum]}$.}
Compute the reduced bags
$\signalbag(\givenclassifier_j)=\{\signalvec_{j,\ell}\}_{\ell\in[\cellnum_j]}$
for all $j\in[\classifiernum]$\;
Construct the signal-feature matrix $\mymatrix$ and the observable vector $\weightedpredic$
as in \Cref{def:signal-feature-matrix}\;
\For{$j\in[\classifiernum]$}{
    initialize an empty bag $\widetilde{\signalbag}_j$\;
    \ForEach{$\ell\in[\cellnum_j]$}{
        run \textsc{ExtremeRayDecompose}$(\signalvec_{j,\ell})$ and add all returned vectors to $\widetilde{\signalbag}_j$\;
    }
    let $\signalbag_j^\star$ be the ray-reduction of $\widetilde{\signalbag}_j$\;
    \ForEach{$\signalvec\in \signalbag_j^\star$}{
        solve $\mymatrix^\top \xi = \signalvec$\;
        compute
        $\prediction(\signalvec)\gets \dfrac{\xi^\top \weightedpredic}{\feasubsetprob(\signalvec)}$\;
    }
    let $\classifier_j^\star$ be the predictor whose reduced bag is $\signalbag_j^\star$ and whose prediction on each signal $\signalvec\in \signalbag_j^\star$ is $\prediction(\signalvec)$\;
    \uIf{$\signalbag_j^\star=\signalbag(\givenclassifier_j)$}{
        $\classifier_j\gets \givenclassifier_j$\;
    }
    \Else{
        $\classifier_j\gets \classifier_j^\star$\;
    }
}
\Return $(\classifier_j)_{j\in[\classifiernum]}$\;
\end{algorithm}

We are now ready to prove \Cref{thm:poly randomized partition}.
\begin{proof}[Proof of \Cref{thm:poly randomized partition}]
We first verify that every output constructed by the algorithm is well defined.
Fix $j\in[\classifiernum]$ and fix one input row $\signalvec_{j,\ell}\in \signalbag(\givenclassifier_j)$.
Since predictor $\givenclassifier_j$ itself is constructible from the observable information of the given instance,
\Cref{def:constructible-predictor} implies that $\signalvec_{j,\ell}\in \signalcone\setminus\{\vzero\}$.
Thus, \Cref{alg:extreme-ray-decompose} applies and returns a finite bag
$\mathcal{B}(\signalvec_{j,\ell})$ of vectors in $\signalcone\setminus\{\vzero\}$ whose sum is
$\signalvec_{j,\ell}$, by \Cref{lem:extreme-ray-decompose-correct}.
Since $\feaprob_i>0$ for every $i$ and every output vector is nonzero and nonnegative, we have
$\feasubsetprob(\signalvec)>0$ for every $\signalvec\in \widetilde{\signalbag}_j$ and for every
$\signalvec\in \signalbag_j^\star$.
Thus every prediction $\prediction(\signalvec)$ is well defined.

We next prove that $\classifier_j^\star$ is a constructible predictor.
Every vector in $\widetilde{\signalbag}_j$ belongs to $\signalcone\setminus\{\vzero\}$, hence to
$\rowspace(\mymatrix)\cap \reals_+^{\feanum}\setminus\{\vzero\}$.
By construction, we have
\begin{align*}
    \sum\nolimits_{\signalvec\in \widetilde{\signalbag}_j}\signalvec
    =
    \sum\nolimits_{\ell\in[\cellnum_j]}\signalvec_{j,\ell}
    =
    \one ~.
\end{align*}
The ray-reduction $\signalbag_j^\star$ is obtained by summing vectors inside each ray class,
so every vector in $\signalbag_j^\star$ still belongs to $\rowspace(\mymatrix)\cap \reals_+^{\feanum}\setminus\{\vzero\}$,
and the vectors in $\signalbag_j^\star$ still sum coordinatewise to $\one$.
Thus, \Cref{def:constructible-predictor} implies that $\classifier_j^\star\in \classifierfam$.

We now verify the prediction formula used in the algorithm.
Fix $\signalvec\in \signalbag_j^\star$.
Because $\signalvec\in \rowspace(\mymatrix)$, the linear system
$\mymatrix^\top \xi=\signalvec$ has a solution.
If $\xi$ and $\xi'$ are two such solutions, then
$\mymatrix^\top(\xi-\xi')=\vzero$, so
\begin{align*}
    (\xi-\xi')^\top \weightedpredic
    =
    (\xi-\xi')^\top \mymatrix \weightedbayes
    =
    \bigl(\mymatrix^\top(\xi-\xi')\bigr)^\top \weightedbayes
    =
    0 ~.
\end{align*}
Thus $\xi^\top \weightedpredic$ depends only on $\signalvec$ and not on the chosen solution $\xi$.
Moreover,
\begin{align*}
    \xi^\top \weightedpredic
    =
    \xi^\top \mymatrix \weightedbayes
    =
    (\mymatrix^\top \xi)^\top \weightedbayes
    =
    \signalvec^\top \weightedbayes
    =
    \feasubsetlabel(\signalvec) ~.
\end{align*}
Thus, the algorithm indeed assigns the calibrated value
\begin{align*}
    \prediction(\signalvec)
    =
    \frac{\xi^\top \weightedpredic}{\feasubsetprob(\signalvec)}
    =
    \frac{\feasubsetlabel(\signalvec)}{\feasubsetprob(\signalvec)}
    =
    \feasubsetmean(\signalvec) ~.
\end{align*}
We next show that every vector in $\signalbag_j^\star$ spans an extreme ray of $\signalcone$.
Indeed, every vector in $\widetilde{\signalbag}_j$ does so by
\Cref{lem:extreme-ray-decompose-correct}, and ray-reduction only merges positive multiples
lying on the same ray.

Therefore \Cref{lem:extreme-bag-undominated} gives
\begin{align}
    \classifier_j^\star \in \partitiondompare~.
    \label{eq:random-alg-output-undominated}
\end{align}

We now prove that $\classifier_j^\star \partitiondomeq \givenclassifier_j$.
Let $\widetilde{\classifier}_j$ be the predictor represented by the unreduced bag $\widetilde{\signalbag}_j$.
For every piece $\signalvec\in \mathcal{B}(\signalvec_{j,\ell})$, define a deterministic garbling that maps the signal $\signalvec$ of $\widetilde{\classifier}_j$
to the parent signal $\signalvec_{j,\ell}$ of $\givenclassifier_j$.
Because
$\sum\nolimits_{\signalvec\in \mathcal{B}(\signalvec_{j,\ell})}\signalvec=\signalvec_{j,\ell}$,
this is a valid feature-independent garbling from $\widetilde{\classifier}_j$ to $\givenclassifier_j$.
Thus, we have $\widetilde{\classifier}_j \partitiondomeq \givenclassifier_j$.
By the ray-reduction equivalence proved in \Cref{lem:ray-merging-preserves-dom} in \Cref{apx:proofs random},
passing from the unreduced bag $\widetilde{\signalbag}_j$ to its ray-reduction $\signalbag_j^\star$
does not weaken informational-dominance content.
Thus, we have $\classifier_j^\star \partitiondomeq \widetilde{\classifier}_j$.
By transitivity of $\partitiondomeq$, this leads to
\begin{align}
    \classifier_j^\star \partitiondomeq \givenclassifier_j~.
    \label{eq:random-alg-output-garbles-input}
\end{align}
We now analyze the two cases of the algorithm.
\begin{itemize}
    \item 
    \textbf{Case 1: $\signalbag_j^\star\neq \signalbag(\givenclassifier_j)$.}
    By Eqn.~\eqref{eq:random-alg-output-garbles-input}, we already know
    $\classifier_j^\star \partitiondomeq \givenclassifier_j$.
    We claim that
    $\givenclassifier_j \not\partitiondomeq \classifier_j^\star$.
    Otherwise, applying \Cref{lem:extreme-bag-undominated} to $\classifier_j^\star$
    with $h=\givenclassifier_j$ would imply
    $\signalbag(\givenclassifier_j)=\signalbag_j^\star$,
    contrary to the assumption of this case.
    Thus, we have $\classifier_j^\star \partitiondom \givenclassifier_j$.
    Together with Eqn.~\eqref{eq:random-alg-output-undominated}, this shows that
    the algorithm outputs $\classifier_j=\classifier_j^\star\in \partitiondompare$
    and $\classifier_j\partitiondom \givenclassifier_j$ in this case.

    \item 
    \textbf{Case 2: $\signalbag_j^\star=\signalbag(\givenclassifier_j)$.}
    Since every vector in $\signalbag_j^\star$ spans an extreme ray of $\signalcone$,
    the same is true for every vector in $\signalbag(\givenclassifier_j)$.
    Applying \Cref{lem:extreme-bag-undominated} directly to $\givenclassifier_j$ gives
    $\givenclassifier_j\in \partitiondompare$.
    Thus, there does not exist any predictor $h\in \partitiondompare$ such that
    $h\partitiondom \givenclassifier_j$.
    In this case the algorithm outputs $\classifier_j=\givenclassifier_j$, exactly as required by
    the specification of {\predicrefinelink}.
\end{itemize}
Since the above argument holds for every $j\in[\classifiernum]$, the algorithm is correct.

It remains to prove the running-time bound.
Let $L_{\mathrm{in}}$ denote the total binary encoding length of the input:
namely, of $\feaprob$, of all rational entries of the input kernels,
and of all rational input signal predictions.
From the kernels, the reduced bags $\signalbag(\givenclassifier_j)$ can be computed in polynomial time by first forming the raw vectors and then merging equal normalized ray representatives.
For example, for every nonzero rational vector $\signalvec$, define its normalized representative
\begin{align*}
    \widehat{\signalvec}
    \triangleq
    \frac{\signalvec}{\sum\nolimits_{i\in[\feanum]}\signalvec_i}~.
\end{align*}
Since $\signalvec\in \reals_+^{\feanum}\setminus\{\vzero\}$, two such vectors lie on the same positive ray if and only if their normalized representatives are equal.
Thus both ray-reduction and equality testing of reduced bags are polynomial-time operations over rational data.

The matrix $\mymatrix$ and the vector $\weightedpredic$ of \Cref{def:signal-feature-matrix}
are computable in polynomial time from the reduced bags and the input signal predictions.
Every LP instance $\LPFeasiblefeasubset$ has $\totalcells+\feanum$ variables and a number of linear constraints polynomial in $\totalcells+\feanum$, with rational coefficients of bit-length polynomial in $L_{\mathrm{in}}$.
Therefore every feasibility test and every feasible solution can be computed in time polynomial in $L_{\mathrm{in}}$ by standard polynomial-time algorithms for rational linear programming.

By \Cref{alg:find-extreme-signal}, one call to \textsc{FindExtremeSignal} performs at most $\feanum$ LP-feasibility tests and one final LP solve.
By \Cref{lem:extreme-ray-decompose-correct}, one call to \textsc{ExtremeRayDecompose} performs at most $\feanum$ iterations.
Thus, one call to \textsc{ExtremeRayDecompose} uses at most $O(\feanum^2)$ LP solves.
Since the total number of input reduced-bag rows is
$\totalcells=\sum\nolimits_{j\in[\classifiernum]}\cellnum_j$,
the whole algorithm performs at most $O(\totalcells\,\feanum^2)$ LP solves.

Finally, each input row yields at most $\feanum$ extreme-ray pieces, so the unreduced bag
$\widetilde{\signalbag}_j$ has size at most $\cellnum_j\feanum$, and the final reduced bag
$\signalbag_j^\star$ has size at most $\cellnum_j\feanum$ as well.
For each output row $\signalvec\in \signalbag_j^\star$, the algorithm solves the rational linear system
$\mymatrix^\top \xi=\signalvec$ and computes the rational number
$\prediction(\signalvec)=\xi^\top \weightedpredic/\feasubsetprob(\signalvec)$;
this takes polynomial time.
Therefore the total running time of \Cref{alg:randomized-predictor-refine} is polynomial in the input size $L_{\mathrm{in}}$.

Thus \Cref{alg:randomized-predictor-refine} is a polynomial-time algorithm that solves
{\predicrefinelink} under informational dominance.
\end{proof}

\section*{Acknowledgment}
We thank the anonymous reviewers at NeurIPS 2026 for their helpful and valuable comments and suggestions.



\newpage
\bibliography{mybib}

\newpage
\appendix

\section{Missing Proofs in \Cref{sec:det}}
\label{apx:proofs det}

\begin{proof}[Proof of \Cref{lem:det-partition-dom}]
Suppose first that $\classifier \partitiondom \classifiernew$.
By \Cref{defn:partition dom}, there exists a feature-independent garbling from predictor $\classifier$ to predictor $\classifiernew$.

Fix any signal cell $\cell\in \partition_{\classifier}$.
If two features $\fea_i,\fea_{i'}\in \cell$, then they induce the same signal of $\classifier$.
After passing through the same feature-independent garbling kernel, they therefore induce the same distribution over signals of $\classifiernew$.
Since $\classifiernew$ is deterministic, each of $\fea_i$ and $\fea_{i'}$ induces a point mass on a single signal of $\classifiernew$.
Thus, these point masses must coincide, so $\fea_i$ and $\fea_{i'}$ lie in the same signal cell of $\classifiernew$.
Therefore every cell of $\partition_{\classifier}$ is contained in a cell of $\partition_{\classifiernew}$, and thus $\partition_{\classifier}$ refines $\partition_{\classifiernew}$.

We next show that the refinement is strict.
If the signal partition $\partition_{\classifiernew}$ also refined $\partition_{\classifier}$, then applying the same argument with the roles of $\classifier$ and $\classifiernew$ reversed would imply that there exists a feature-independent garbling from $\classifiernew$ to $\classifier$, contradicting the strictness in \Cref{defn:partition dom}.
Thus, $\partition_{\classifiernew}$ does not refine $\partition_{\classifier}$, and therefore $\partition_{\classifier}$ strictly refines $\partition_{\classifiernew}$.

Conversely, suppose that $\partition_{\classifier}$ strictly refines $\partition_{\classifiernew}$.
Then every cell of $\partition_{\classifier}$ is contained in a unique cell of $\partition_{\classifiernew}$.
Define a feature-independent garbling from $\classifier$ to $\classifiernew$ by mapping each signal of $\classifier$ deterministically to the signal of $\classifiernew$ corresponding to that containing cell.
This yields a feature-independent garbling from $\classifier$ to $\classifiernew$.

It remains to show that there is no feature-independent garbling from $\classifiernew$ to $\classifier$.
If such a garbling existed, then by the first part of the proof, every cell of $\partition_{\classifiernew}$ would be contained in a cell of $\partition_{\classifier}$; that is, $\partition_{\classifiernew}$ would refine $\partition_{\classifier}$.
This contradicts the assumption that $\partition_{\classifier}$ strictly refines $\partition_{\classifiernew}$.
Thus, we have $\classifier \partitiondom \classifiernew$.
\end{proof}

\begin{proof}[Proof of \Cref{cor:hardness predicrefine}]
Assume that there exists a polynomial-time algorithm for {\predicrefinelink}, restricted to deterministic inputs and deterministic outputs.
Given an instance of {\partdecisionlink}, run this algorithm on the same input and inspect the first output predictor $\classifier_1$.
If $\classifier_1=\givenclassifier_1$, output \textsc{No}; otherwise output \textsc{Yes}.

We claim that this procedure is correct.
If the answer to {\partdecisionlink} is \textsc{Yes}, then by definition there exists some predictor $h\in\partitiondomparedet$ such that $h \partitiondom \givenclassifier_1$.
By the specification of {\predicrefinelink}, the algorithm must therefore return a predictor $\classifier_1\in\partitiondomparedet$ such that $\classifier_1 \partitiondom \givenclassifier_1$.
In particular, $\classifier_1\neq \givenclassifier_1$.

Conversely, if the answer to {\partdecisionlink} is \textsc{No}, then there exists no predictor $h\in\partitiondomparedet$ such that $h \partitiondom \givenclassifier_1$.
Again by the specification of {\predicrefinelink}, the algorithm must return $\classifier_1=\givenclassifier_1$.

Thus a polynomial-time algorithm for the deterministic version of {\predicrefinelink} would yield a polynomial-time algorithm for {\partdecisionlink}.
Since {\partdecisionlink} is $\mathsf{NP}$-hard by \Cref{thm:hardness partdecision}, the deterministic version of {\predicrefinelink} is $\mathsf{NP}$-hard under polynomial-time Turing reductions.
\end{proof}

\section{Missing Proofs in \Cref{sec:random}}
\label{apx:proofs random}

\begin{lemma}[Ray merging preserves dominance]
\label{lem:ray-merging-preserves-dom}
Let $\signalbag=\{\signalvec_1,\dots,\signalvec_{\cellnum}\}\subseteq \reals_+^{\feanum}\setminus\{\vzero\}$ be a finite bag
satisfying $\sum\nolimits_{\ell\in[\cellnum]}\signalvec_\ell=\one$.
Let $\signalbagred$ be obtained from $\signalbag$ by merging every $\rayeq$-equivalence class into the sum of all vectors in that class.
Let $\classifier^{\mathrm{raw}}$ and $\classifier^{\mathrm{red}}$ be the predictors represented by
$\signalbag$ and $\signalbagred$, respectively.
Then $\classifier^{\mathrm{raw}} \partitiondomeq \classifier^{\mathrm{red}} $ and $
\classifier^{\mathrm{red}} \partitiondomeq \classifier^{\mathrm{raw}}$.
Consequently, for every finitely supported predictor $\classifiernew$,
\begin{align*}
    \classifiernew\partitiondom \classifier^{\mathrm{raw}}
    \Longleftrightarrow
    \classifiernew\partitiondom \classifier^{\mathrm{red}}~;\quad
    \classifier^{\mathrm{raw}}\partitiondom \classifiernew
    \Longleftrightarrow
    \classifier^{\mathrm{red}}\partitiondom \classifiernew~.
\end{align*}
\end{lemma}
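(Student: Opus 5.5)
We will exhibit the two garbling kernels explicitly in bag form, using the row-stochastic-matrix characterization from \Cref{lem:partition dom bag equiv} (which only uses bag vectors, so it applies to unreduced bags as well). Index the ray classes by $t\in[L]$, write $\mathcal{R}_t\subseteq[\cellnum]$ for the indices in class $t$, and let $\bar{\signalvec}_t=\sum_{\ell\in\mathcal{R}_t}\signalvec_\ell$ be the merged vectors forming $\signalbagred$.

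\emph{Direction $\classifier^{\mathrm{raw}}\partitiondomeq\classifier^{\mathrm{red}}$.} We define the deterministic map that sends each raw signal $\ell\in\mathcal{R}_t$ to the merged signal $t$. That is, $\kernalmtx_{\ell,t}=\indicator{\ell\in\mathcal{R}_t}$. This matrix is row-stochastic, and for each $t$ we have $\sum_\ell \kernalmtx_{\ell,t}\signalvec_\ell=\bar{\signalvec}_t$.

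\emph{Direction $\classifier^{\mathrm{red}}\partitiondomeq\classifier^{\mathrm{raw}}$.} This is the only step needing care. Every $\ell\in\mathcal{R}_t$ lies on the ray of $\bar{\signalvec}_t$, so $\signalvec_\ell=c_\ell\bar{\signalvec}_t$ for some $c_\ell>0$. Summing over $\mathcal{R}_t$ gives $\bar{\signalvec}_t=(\sum_{\ell\in\mathcal{R}_t}c_\ell)\bar{\signalvec}_t$. Since $\bar{\signalvec}_t\neq\vzero$ (it is a sum of nonzero nonnegative vectors), this forces $\sum_{\ell\in\mathcal{R}_t}c_\ell=1$. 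We then set $\kernalmtx_{t,\ell}=c_\ell$ for $\ell\in\mathcal{R}_t$ and $0$ otherwise. Each row is a probability vector, and $\sum_t\kernalmtx_{t,\ell}\bar{\signalvec}_t=c_\ell\bar{\signalvec}_{t(\ell)}=\signalvec_\ell$. This is the independent random split of a merged signal back into its pieces.

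\emph{Consequences.} We use transitivity of $\partitiondomeq$, established in the text before \Cref{lem:extreme-bag-undominated}, together with $h\partitiondom g\iff h\partitiondomeq g$ and $g\not\partitiondomeq h$. Since $\classifier^{\mathrm{raw}}$ and $\classifier^{\mathrm{red}}$ garble into each other, for any $\classifiernew$ we have $\classifiernew\partitiondomeq\classifier^{\mathrm{raw}}\iff\classifiernew\partitiondomeq\classifier^{\mathrm{red}}$, and likewise $\classifier^{\mathrm{raw}}\partitiondomeq\classifiernew\iff\classifier^{\mathrm{red}}\partitiondomeq\classifiernew$. Combining these two equivalences yields both displayed equivalences for the strict relation $\partitiondom$.
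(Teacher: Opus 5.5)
Your proposal is correct and follows the paper's proof essentially step for step. It uses the deterministic class map for one direction, the independent random split for the other, and transitivity of $\partitiondomeq$ for the strict equivalences. The only cosmetic difference is how you get the split weights: the paper normalizes coefficients relative to a chosen representative of the class, while you write $\signalvec_\ell=c_\ell\bar{\signalvec}_t$ directly and deduce $\sum_{\ell}c_\ell=1$, which is valid because $\bar{\signalvec}_t$, as a positive combination of vectors on a common ray, lies on that same ray.
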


\begin{proof}
Let $\mathcal{R}_1,\dots,\mathcal{R}_L$ be the distinct $\rayeq$-equivalence classes in $\signalbag$, and for each $t\in[L]$ write
\begin{align*}
    \bar{\signalvec}_t
    \triangleq
    \sum\nolimits_{\signalvec_\ell\in \mathcal{R}_t}\signalvec_\ell .
\end{align*}
Thus $\signalbagred=\{\bar{\signalvec}_t\}_{t\in[L]}$.
To obtain $\classifier^{\mathrm{red}}$ from $\classifier^{\mathrm{raw}}$, map every raw signal
$\ell\in[\cellnum]$ deterministically to the unique class $t\in[L]$ with $\signalvec_\ell\in \mathcal{R}_t$.
This is a feature-independent garbling, so
$\classifier^{\mathrm{raw}} \partitiondomeq \classifier^{\mathrm{red}}$.

For the reverse direction, fix a class $\mathcal{R}_t$.
Choose any representative $\signalvec_t^\star\in \mathcal{R}_t$.
For every $\signalvec_\ell\in \mathcal{R}_t$, there exists a unique scalar $\alpha_\ell>0$ such that
$\signalvec_\ell=\alpha_\ell \signalvec_t^\star$.
Thus,
\begin{align*}
    \bar{\signalvec}_t
    =
    \sum\nolimits_{\signalvec_\ell\in \mathcal{R}_t}\alpha_\ell \signalvec_t^\star
    =
    \Bigl(\sum\nolimits_{\signalvec_\ell\in \mathcal{R}_t}\alpha_\ell\Bigr)\signalvec_t^\star .
\end{align*}
Define $\gamma_\ell
\triangleq
\frac{\alpha_\ell}{\sum\nolimits_{\signalvec_r\in \mathcal{R}_t}\alpha_r}$ for each $\signalvec_\ell\in \mathcal{R}_t$.
Then $\gamma_\ell\ge 0$, $\sum\nolimits_{\signalvec_\ell\in \mathcal{R}_t}\gamma_\ell=1$, and
$\signalvec_\ell=\gamma_\ell \bar{\signalvec}_t$ for every $\signalvec_\ell\in \mathcal{R}_t$.
Therefore, conditioned on the merged signal $t$, one can recover a raw signal inside $\mathcal{R}_t$
by an independent random split with probabilities $(\gamma_\ell)_{\signalvec_\ell\in \mathcal{R}_t}$.
Doing this independently for every class yields a feature-independent garbling from
$\classifier^{\mathrm{red}}$ to $\classifier^{\mathrm{raw}}$.
Thus $\classifier^{\mathrm{red}} \partitiondomeq \classifier^{\mathrm{raw}}$.

Finally, because $\partitiondomeq$ is transitive, mutual $\partitiondomeq$-equivalence preserves every strict comparison.
For example, if $\classifiernew\partitiondom \classifier^{\mathrm{raw}}$, then
$\classifiernew\partitiondomeq \classifier^{\mathrm{raw}} \partitiondomeq \classifier^{\mathrm{red}}$, so
$\classifiernew\partitiondomeq \classifier^{\mathrm{red}}$.
If also $\classifier^{\mathrm{red}}\partitiondomeq \classifiernew$, then
$\classifier^{\mathrm{raw}}\partitiondomeq \classifier^{\mathrm{red}}\partitiondomeq \classifiernew$,
contradicting $\classifiernew\partitiondom \classifier^{\mathrm{raw}}$.
Thus $\classifiernew\partitiondom \classifier^{\mathrm{red}}$.
The other implications are identical.
\end{proof}

\begin{proof}[Proof of \Cref{lem:support-minimal-extreme-ray}]
We first prove $(i)\Rightarrow(ii)$.
Assume that $\ray(\signalvec)$ is extreme, and suppose toward a contradiction that there exists
$\vu\in \signalcone\setminus\{\vzero\}$ such that
$\supp(\vu)\subsetneq \supp(\signalvec)$.
Define
\begin{align*}
    \scalescalar
    \triangleq
    \min_{i\in \supp(\vu)} \frac{\signalvec_i}{\vu_i}~.
\end{align*}
This scalar is well-defined and strictly positive because
$\supp(\vu)\subseteq \supp(\signalvec)$ and $\vu_i>0$ for every
$i\in\supp(\vu)$.

By the definition of $\scalescalar$, we have
$\signalvec-\scalescalar\vu\in \reals_+^{\feanum}$.
Moreover, since both $\signalvec$ and $\vu$ belong to $\rowspace(\mymatrix)$, we also have
\begin{align*}
    \signalvec-\scalescalar\vu\in \rowspace(\mymatrix)~.
\end{align*}
Thus, we have $\signalvec-\scalescalar\vu\in\signalcone$.
Furthermore, $\signalvec-\scalescalar\vu\neq \vzero$, because every coordinate in
$\supp(\signalvec)\setminus\supp(\vu)$ remains strictly positive. Thus,
\begin{align*}
    \signalvec
    =
    \scalescalar\vu + (\signalvec-\scalescalar\vu)
\end{align*}
is a decomposition of $\signalvec$ into two nonzero vectors in $\signalcone$.
However, the first summand $\scalescalar\vu$ does not belong to $\ray(\signalvec)$:
it vanishes on $\supp(\signalvec)\setminus\supp(\vu)$, whereas $\signalvec$ is strictly
positive on that set. This contradicts the extremality of $\ray(\signalvec)$.

We next prove $(ii)\Rightarrow(i)$.
Assume (ii), and consider any decomposition
\begin{align*}
    \signalvec
    =
    \signalvec\primed+\signalvec\doubleprimed
    \quad\text{with }
    \signalvec\primed,\signalvec\doubleprimed\in\signalcone~.
\end{align*}
If either $\signalvec\primed=\vzero$ or $\signalvec\doubleprimed=\vzero$, then both summands
belong to $\ray(\signalvec)$, so there is nothing to prove. Hence assume
$\signalvec\primed,\signalvec\doubleprimed\neq\vzero$.

Since $\signalvec\primed,\signalvec\doubleprimed\ge\vzero$ and
$\signalvec=\signalvec\primed+\signalvec\doubleprimed$, we have
$\supp(\signalvec\primed)\subseteq \supp(\signalvec),
\supp(\signalvec\doubleprimed)\subseteq \supp(\signalvec)$.
By condition (ii), neither inclusion can be strict. Thus, we have
$\supp(\signalvec\primed)=\supp(\signalvec)$ and $
\supp(\signalvec\doubleprimed)=\supp(\signalvec)$.

We show that $\signalvec\primed\in\ray(\signalvec)$. Suppose, toward a contradiction, that
$\signalvec\primed\notin\ray(\signalvec)$. Since both summands have the same support as
$\signalvec$, for every $i\in\supp(\signalvec)$ we have $0<\signalvec\primed_i<\signalvec_i$.
Define
\begin{align*}
    \scalescalar\primed
    \triangleq
    \min_{i\in \supp(\signalvec)}
    \frac{\signalvec\primed_i}{\signalvec_i}
    \in (0,1)~,
    \quad
    \vz
    \triangleq
    \signalvec\primed-\scalescalar\primed\signalvec~.
\end{align*}
By construction, $\vz\ge\vzero$. Since
$\signalvec\primed,\signalvec\in\rowspace(\mymatrix)$, we also have
$\vz\in\rowspace(\mymatrix)$, and therefore $\vz\in\signalcone$.
Moreover, $\vz\neq\vzero$, because otherwise
$\signalvec\primed=\scalescalar\primed\signalvec$, contradicting
$\signalvec\primed\notin\ray(\signalvec)$.
Finally, by the definition of $\scalescalar\primed$, at least one coordinate of $\vz$ is zero,
while $\vz$ is supported inside $\supp(\signalvec)$. Thus, $\supp(\vz)\subsetneq \supp(\signalvec)$.
This contradicts condition (ii). Therefore
$\signalvec\primed\in\ray(\signalvec)$.

The same argument applies to $\signalvec\doubleprimed$, so
$\signalvec\doubleprimed\in\ray(\signalvec)$ as well. 
Thus, $\ray(\signalvec)$ is an extreme ray of $\signalcone$.
\end{proof}

\begin{proof}[Proof of \Cref{lem:extreme-bag-undominated}]
Fix a predictor $\classifiernew$, let $\signalbag(\classifiernew)=\{\signalvec_{\classifiernew,\ell}\}_{\ell\in[\cellnum_\classifiernew]}$ be its reduced bag-of-vectors.
Assume that predictor $\classifiernew\partitiondomeq \classifier$.
By \Cref{lem:partition dom bag equiv}, there exists a row-stochastic matrix
$\kernalmtx=(\kernalmtx_{\ell,r})_{\ell\in[\cellnum_\classifiernew],\,r\in[\cellnum_{\classifier}]}$ such that signal vector satisfies
\begin{align}
    \signalvec_{\classifier,r}
    =
    \sum\nolimits_{\ell\in[\cellnum_\classifiernew]} \kernalmtx_{\ell,r}\,\signalvec_{\classifiernew,\ell}
    \quad
    \text{for every } r\in[\cellnum_{\classifier}]~.
    \label{eq:extreme-bag-dom}
\end{align}
Fix a target signal index $r\in[\cellnum_{\classifier}]$.
Because signal vector $\signalvec_{\classifier,r}$ spans an extreme ray of constructible signal cone $\signalcone$ and each signal vector
$\signalvec_{\classifiernew,\ell}$ belongs to $\signalcone$, every source signal index $\ell$ with $\kernalmtx_{\ell,r}>0$ must satisfy $\signalvec_{\classifiernew,\ell}\in \ray(\signalvec_{\classifier,r})$.
Since $\signalbag(\classifiernew)$ is reduced, there is at most one such $\ell$.
Denote it by $\ell_r$ if it exists.
Then Eqn.~\eqref{eq:extreme-bag-dom} reduces to
\begin{align*}
    \signalvec_{\classifier,r}
    =
    \kernalmtx_{\ell_r,r}\,\signalvec_{\classifiernew,\ell_r}~.
\end{align*}
Thus, the vector $\signalvec_{\classifiernew,\ell_r}$ lies on the same ray as $\signalvec_{\classifier,r}$.

We claim that row $\ell_r$ of $\kernalmtx$ has no positive entry in any other column $r'\neq r$.
If $\kernalmtx_{\ell_r,r'}>0$ for some $r'\neq r$, then the same argument applied to column $r'$
shows that $\signalvec_{\classifiernew,\ell_r}$ lies on the same ray as $\signalvec_{\classifier,r'}$.
Therefore $\signalvec_{\classifier,r}$ and $\signalvec_{\classifier,r'}$ would lie on the same positive ray,
contradicting the fact that $\signalbag(\classifier)$ is reduced.
Thus the only positive entry in row $\ell_r$ is the one in column $r$.
Since every row of $\kernalmtx$ sums to $1$, it follows that
$\kernalmtx_{\ell_r,r}=1$, and thus, we have 
$\signalvec_{\classifiernew,\ell_r}
=
\signalvec_{\classifier,r}$.

We have shown that every vector in $\signalbag(\classifier)$ appears as a distinct row of $\signalbag(\classifiernew)$.
If $\signalbag(\classifiernew)$ contained any additional row $\signalvec_{\classifiernew,\ell}$ not equal to one of these rows,
then because row $\ell$ of $\kernalmtx$ has sum $1$, it would have to satisfy
$\kernalmtx_{\ell,r}>0$ for some $r$.
By the first paragraph of the proof, $\signalvec_{\classifiernew,\ell}$ would then lie on the same ray as
$\signalvec_{\classifier,r}$, contradicting the reducedness of $\signalbag(\classifiernew)$.
Thus $\signalbag(\classifiernew)=\signalbag(\classifier)$ up to reordering.

In particular, if some predictor $\classifiernew\in \classifierfam$ satisfied $\classifiernew\partitiondom \classifier$, then
$\classifiernew\partitiondomeq \classifier$ would imply $\signalbag(\classifiernew)=\signalbag(\classifier)$.
Applying \Cref{lem:partition dom bag equiv} again, this yields
$\classifier\partitiondomeq 
\classifiernew$, contradicting $\classifiernew\partitiondom \classifier$.
Thus, no constructible predictor strictly dominates $\classifier$, so $\classifier\in \partitiondompare$.
\end{proof}

\section{Signal-Level Calibration and Prediction-Level Calibration}
\label{apx:calibration}
\aiedit{
In this section, we discuss signal-level calibration, adopted throughout the paper, and conventional prediction-level calibration. 
We explain how prediction-level calibrated inputs can be represented within our framework and clarify the implications for informational refinement.

\xhdr{Signal-level versus prediction-level calibration}
Recall that a predictor $\givenclassifier=(\signalspace_\givenclassifier,\kernal_\givenclassifier,\prediction_\givenclassifier)$ generates a signal $\signalrea_\givenclassifier$ and assigns the numerical prediction $\predictionrea_\givenclassifier=\prediction_\givenclassifier(\signalrea_\givenclassifier)$. Signal-level calibration requires
\begin{align*}
    \expect{\labelrea\mid\signalrea_\givenclassifier=\signal}
    &=\prediction_\givenclassifier(\signal)
    \quad\text{for every }\signal\in\signalspace_\givenclassifier
    \text{ with }\prob{\signalrea_\givenclassifier=\signal}>0~.
\end{align*}
In contrast, prediction-level calibration requires only
\begin{align*}
    \expect{\labelrea\mid\predictionrea_\givenclassifier=\predictionvalue}
    &=\predictionvalue
    \quad\text{for every }\predictionvalue\in[0,1]
    \text{ with }\prob{\predictionrea_\givenclassifier=\predictionvalue}>0~.
\end{align*}
As shown in \Cref{sec:prelim}, signal-level calibration implies prediction-level calibration, but the converse need not hold when distinct signals share the same numerical prediction. Prediction-level calibration constrains the average label within each group of signals having the same numerical prediction, rather than the conditional label mean of each signal individually. When the prediction map is injective on the positive-probability signals, the two notions coincide, since observing the numerical prediction then determines the signal. We do not impose injectivity: our framework deliberately allows distinct signals with identical numerical predictions.

\xhdr{Representing prediction-level calibrated inputs}
Consider a predictor $\givenclassifier=(\signalspace_\givenclassifier,\kernal_\givenclassifier,\prediction_\givenclassifier)$ that satisfies prediction-level calibration. We define a new predictor $\coarsenedclassifier$ by merging signals that yield the same numerical prediction. Its signal space is $\signalspace_{\coarsenedclassifier}=\prediction_\givenclassifier(\signalspace_\givenclassifier)$, the set of distinct numerical predictions generated by $\givenclassifier$. For every $\predictionvalue\in\signalspace_{\coarsenedclassifier}$ and $\fea\in\feaspace$, define
\begin{align*}
    \kernal_{\coarsenedclassifier}(\predictionvalue\mid\fea)
    &\triangleq
    \sum\nolimits_{\signal\in\signalspace_\givenclassifier:
                    \prediction_\givenclassifier(\signal)=\predictionvalue}
    \kernal_\givenclassifier(\signal\mid\fea)~,
    \quad
    \prediction_{\coarsenedclassifier}(\predictionvalue)\triangleq\predictionvalue~.
\end{align*}
By construction, the event that $\coarsenedclassifier$ generates signal $\predictionvalue$ coincides with the event that $\givenclassifier$ outputs numerical prediction $\predictionvalue$. Thus, for every such signal of positive probability, prediction-level calibration of $\givenclassifier$ implies
\begin{align*}
    \expect{\labelrea\mid\signalrea_{\coarsenedclassifier}=\predictionvalue}
    &=\expect{\labelrea\mid\predictionrea_\givenclassifier=\predictionvalue}
    =\predictionvalue
    =\prediction_{\coarsenedclassifier}(\predictionvalue)~.
\end{align*}
Consequently, $\coarsenedclassifier$ satisfies signal-level calibration and preserves the conditional distribution of numerical predictions given each feature.

This transformation incorporates prediction-level calibrated inputs into our signal-based framework, provided that the resulting signal kernels are available. The observable system is then built from the coarsened signal vectors, for which prediction-level calibration of the original inputs supplies the required label masses. The transformation may lose information carried by the original signals, but those distinctions are unavailable when only numerical predictions are observed. In particular, coarsening by numerical value differs from ray reduction: equal-valued signals need not lie on the same positive ray.

\xhdr{Implications for informational refinement}
The input transformation does not make signal-based informational refinement and refinement based only on numerical predictions equivalent. Two output signals may carry different information about the feature while receiving the same calibrated numerical prediction. They remain distinguishable under our signal-based formulation, but become indistinguishable when only their numerical predictions are retained.

To illustrate this distinction, consider three equally likely features A, B, and C, with Bayes probabilities $\bayesprob_{\text{A}}=\bayesprob_{\text{B}}=0.2$ and $\bayesprob_{\text{C}}=0.8$. Suppose $\givenclassifier_1$ has signal partition $\{\{\text{A},\text{B}\},\{\text{C}\}\}$, while $\givenclassifier_2$ has signal partition $\{\{\text{A}\},\{\text{B},\text{C}\}\}$. Their respective calibrated numerical predictions are $(0.2,0.8)$ and $(0.2,0.5)$. The input signal vectors span $\reals^3$: they include $(1,0,0)^\top$ and $(0,0,1)^\top$, and
\begin{align*}
    (0,1,0)^\top=(1,1,0)^\top-(1,0,0)^\top~.
\end{align*}
Hence, the fully separating predictor $\classifier$, whose signals distinguish A, B, and C, is constructible and strictly informationally dominates both inputs under \Cref{defn:partition dom}.

However, the numerical predictions of $\classifier$ are $(0.2,0.2,0.8)$. Merging its first two signals because they produce the same numerical prediction gives a predictor that cannot distinguish A from B. Any feature-independent post-processing of this merged predictor must have the same conditional output distribution on A and B, whereas $\givenclassifier_2$ predicts $0.2$ on A and $0.5$ on B. Therefore, the merged predictor cannot even simulate the numerical predictions of $\givenclassifier_2$.

Thus, merging equal-valued output signals does not necessarily preserve informational dominance over the original inputs. The maximality and refinement guarantees established in this paper concern signal-based informational dominance and do not automatically extend to a formulation in which only numerical predictions are observable outputs. In particular, input coarsening alone does not justify transferring these guarantees through numerical-output post-processing.

Our choice of signal-level calibration and signal-based informational dominance therefore defines a distinct information-refinement problem. It preserves informational distinctions among signals, supports the observable linear construction in \Cref{subsec:constructible and linear-infor}, and is the formulation addressed by the algorithmic results in \Cref{sec:det,sec:random}.

}

\end{document}